\theoremstyle{plain}
\newtheorem{theorem}{Theorem}[section]
\newtheorem{proposition}[theorem]{Proposition}
\newtheorem{lemma}[theorem]{Lemma}
\newtheorem{corollary}[theorem]{Corollary}
\theoremstyle{definition}
\newtheorem{definition}[theorem]{Definition}
\newtheorem{remark}[theorem]{Remark}
\newtheorem{example}[theorem]{Example}
\newtheorem{assumption}[theorem]{Assumption}
\theoremstyle{remark}
\renewenvironment{thebibliography}[1]{%
\begin{oldthebibliography}{#1}%
\setlength{\baselineskip}{.9em}
\linespread{1}%
\small
\setlength{\parskip}{0ex}%
\setlength{\itemsep}{.2em}%
}%
{%
\end{oldthebibliography}%
}
\newcommand{\eps}{\varepsilon}
\newcommand{\Q}{\mathbb{Q}}
\newcommand{\R}{\mathbb{R}}
\newcommand{\E}{\mathbb{E}}
\newcommand{\F}{\mathbb{F}}
\newcommand{\Fci}{\mathbb{F}^\circ}
\newcommand{\cF}{\mathcal{F}}
\newcommand{\cFci}{\mathcal{F}^\circ}
\newcommand{\cN}{\mathcal{N}}
\newcommand{\cP}{\mathcal{P}}
\newcommand{\cA}{\mathcal{A}}
\newcommand{\cH}{\mathcal{H}}
\newcommand{\cE}{\mathcal{E}}
\newcommand{\cT}{\mathcal{T}}
\newcommand{\hF}{\mathbbm{F}}
\newcommand{\hcF}{\mathcal{F}}
\newcommand{\br}[1]{\langle #1 \rangle}
\DeclareMathOperator{\UC}{\textstyle{UC}}
\DeclareMathOperator{\esssup}{ess\, sup}
\DeclareMathOperator{\essinf}{ess\, inf}
\newcommand{\as}{\mbox{-a.s.}}
\newcommand{\qs}{\mbox{-q.s.}}
\newcommand{\tbE}{\widetilde{\E}}
\newcommand{\1}{\mathbf{1}}
\numberwithin{equation}{section}
\begin{document}

\title{Superhedging and Dynamic Risk Measures under Volatility Uncertainty\\
\date{First version: November 12, 2010. This version: June 2, 2012.}
\author{
  Marcel Nutz%
  \thanks{
  Department of Mathematics, Columbia University, \texttt{mnutz@math.columbia.edu}
  }
  \and
  H. Mete Soner%
  \thanks{
  Department of Mathematics, ETH Zurich, and Swiss Finance Institute, \texttt{mete.soner@math.ethz.ch}
  }
 }
}
\maketitle \vspace{-1em}

\begin{abstract}
We consider dynamic sublinear expectations (i.e., time-consistent coherent risk measures) whose scenario sets consist of singular measures corresponding to a general form of volatility uncertainty. We derive a c\`adl\`ag nonlinear martingale which is also the value process of a superhedging problem. The superhedging strategy is obtained from a representation similar to the optional decomposition. Furthermore, we prove an optional sampling theorem for the nonlinear martingale and characterize it as the solution of a second order backward SDE. The uniqueness of dynamic extensions of static sublinear expectations is also studied.
\end{abstract}

\vspace{.5em}

{\small
\noindent \emph{Keywords} volatility uncertainty, risk measure, time consistency, nonlinear martingale, superhedging, replication, second order BSDE, $G$-expectation%

\noindent \emph{AMS 2000 Subject Classifications} primary
91B30,   %
93E20, %
60G44;   %
secondary
60H30 %

\noindent \emph{JEL Classifications}
D81, G11.%
}\\

\noindent \emph{Acknowledgements}
Research supported by the
European Research Council Grant 228053-FiRM, the Swiss National Science Foundation
Grant PDFM2-120424/1 and the ETH Foundation. The authors thank two anonymous referees for helpful comments.

\section{Introduction}

Coherent risk measures were introduced in \cite{ArtznerDelbaenEberHeath.99} as a way to quantify the risk associated with a financial position. Since then, coherent risk measures and sublinear expectations (which are the same up to the sign convention) have been studied by numerous authors; see \cite{FollmerSchied.04, Peng.10icm, Peng.10} for extensive references. Most of these works consider the case where scenarios are probability measures absolutely continuous with respect to a given reference probability (important early exceptions are~\cite{FollmerSchied.02,Peng.05}). The present paper studies dynamic sublinear expectations and superhedging under volatility uncertainty, which is naturally related to singular measures. The concept of volatility uncertainty was introduced in financial mathematics by \cite{AvellanedaLevyParas.95, DenisMartini.06, Lyons.95} and has recently received considerable attention due to its relation to $G$-expectations~\cite{Peng.07, Peng.08} and second order backward stochastic differential equations~\cite{CheriditoSonerTouziVictoir.07, SonerTouziZhang.2010bsde}, called 2BSDEs for brevity.
Any (static) sublinear expectation $\cE^\circ_0$, defined on the set of bounded measurable functions on a measurable space $(\Omega,\cF)$, has a convex-dual representation
\begin{equation}\label{eq:staticRobustRep}
  \cE^\circ_0(X)=\sup_{P\in \cP} E^P[X]
\end{equation}
for a certain set $\cP$ of measures which are $\sigma$-additive as soon as $\cE^\circ_0$ satisfies certain continuity properties (cf.\ \cite[Section~4]{FollmerSchied.04}). The elements of $\cP$ can be seen as possible scenarios in the presence of uncertainty and hence~\eqref{eq:staticRobustRep} corresponds to the worst-case expectation. In this paper, we take $\Omega$ to be the canonical space of continuous paths and $\cP$ to be a set of martingale laws for the canonical process, corresponding to different scenarios of volatilities. For this case, $\cP$ is typically not dominated by a finite measure and~\eqref{eq:staticRobustRep} was studied
in~\cite{BionNadalKervarec.10, DenisHuPeng.2010, DenisMartini.06} by capacity-theoretic methods.
We remark that from the pricing point of view, the restriction to the martingale case entails no loss of generality in an arbitrage-free setting. An example with arbitrage was studied in~\cite{FernholzKaratzas.11}.

While any set of martingale laws gives rise to a static sublinear expectation via~\eqref{eq:staticRobustRep}, we are interested in \emph{dynamic} sublinear expectations; i.e., conditional versions of~\eqref{eq:staticRobustRep} satisfying a time-consistency property. If $\cP$ is dominated by a probability $P_*$, a natural extension of~\eqref{eq:staticRobustRep}  is given by
\[
  \cE^{\circ,P_*}_t(X)=\mathop{\esssup^{P_*}}_{P'\in \cP(\cFci_t,P_*)} E^{P'}[X|\cFci_t]\quad P_*\as,
\]
where $\cP(\cFci_t,P_*)=\{P'\in\cP:\, P'=P_*\mbox{ on }\cFci_t\}$ and $\Fci=\{\cFci_t\}$ is the filtration generated by the canonical process. Such dynamic expectations are well-studied; in particular, time consistency of $\cE^{\circ,P_*}$ can be characterized by a stability property of $\cP$ (see~\cite{Delbaen.06}).
In the non-dominated case, we can similarly consider the family of random variables $\{\cE^{\circ,P}_t(X),\, P\in\cP\}$.
Since a reference measure is lacking, it is not straightforward to construct a single random variable $\cE^{\circ}_t(X)$ such that
\begin{equation}\label{eq:aggreg}
  \cE^{\circ}_t(X)=\cE^{\circ,P}_t(X):=\mathop{\esssup^{P}}_{P'\in \cP(\cFci_t,P)} E^{P'}[X|\cFci_t]\quad P\as \quad\mbox{for all } P\in\cP.
\end{equation}
This problem of aggregation has been solved in several examples. In particular, the
$G$-expectations and random $G$-expectations~\cite{Nutz.10Gexp} (recalled in Section~\ref{se:preliminaries}) correspond to special cases of~\eqref{eq:aggreg}. The construction of $G$-expectations is based on a PDE, which directly yields random variables defined for all $\omega\in\Omega$. The random $G$-expectations are defined pathwise using regular conditional probability distributions. A general study of aggregation problems is presented in~\cite{SonerTouziZhang.2010aggreg}; see also~\cite{BionNadalKervarec.10b}. However, the study of aggregation is not an object of the present paper. In view of the diverse approaches, we shall proceed axiomatically and start with a given aggregated family $\{\cE^{\circ}_t(X),\,t\in[0,T]\}$. Having in mind the example of (random) $G$-expectations, this family is assumed to be given in the raw filtration $\F^\circ$ and without any regularity in the time variable.

The main goal of the present paper is to provide basic technology for the study of dynamic sublinear expectations under volatility uncertainty as stochastic processes. Given the family $\{\cE^{\circ}_t(X),\,t\in[0,T]\}$, we construct a corresponding c\`adl\`ag process $\cE(X)$, called the $\cE$-martingale associated with~$X$, in a suitably enlarged filtration $\F$ (Proposition~\ref{pr:extension}). We use this process to define the sublinear expectation at stopping times and prove an optional sampling theorem for $\cE$-martingales (Theorem~\ref{th:DPPstop}). Furthermore, we obtain a decomposition of $\cE(X)$ into an integral of the canonical process and an increasing process (Proposition~\ref{pr:martingaleDecomp}), similarly as in the classical optional decomposition \cite{ElKarouiQuenez.95}. In particular, the $\cE$-martingale yields the dynamic superhedging price of the financial claim $X$ and the integrand $Z^X$ yields the superhedging strategy. We also provide a connection between $\cE$-martingales and 2BSDEs by characterizing $(\cE(X),Z^X)$ as the minimal solution of such a backward equation (Theorem~\ref{th:2bsde}). Our last result concerns the uniqueness of time-consistent extensions and gives conditions under which~\eqref{eq:aggreg} is indeed the only possible extension of the static expectation~\eqref{eq:staticRobustRep}. In particular, we introduce the notion of local strict monotonicity to deal with the singularity of the measures (Proposition~\ref{pr:uniqueness}).

To obtain our results, we rely on methods from stochastic optimal control and the general theory of stochastic processes.
Indeed, from the point of view of dynamic programming, $\cE^\circ_t(X)$ is the value process of a control problem defined over a set of measures, and time consistency corresponds to Bellman's principle. Taking the control representation~\eqref{eq:aggreg} as our starting point allows us to consider the measures $P\in\cP$ separately in many arguments and therefore to apply standard arguments of the general theory.

The remainder of this paper is organized as follows. In Section~\ref{se:preliminaries} we detail the setting and notation. Section~\ref{se:pasting} relates time consistency to a pasting property. In Section~\ref{se:Emart} we construct the $\cE$-martingale and provide the optional sampling theorem, the decomposition, and the characterization by a 2BSDE. Section~\ref{se:uniqueness} studies the uniqueness of time-consistent extensions.

\section{Preliminaries}\label{se:preliminaries}

We fix a constant $T>0$ and let $\Omega= \{\omega\in C([0,T];\mathbbm{R}^d):\,\omega_0=0\}$ be
the canonical space of continuous paths equipped with the uniform topology.
We denote by $B$ the canonical process $B_t(\omega)=\omega_t$, by $P_0$ the Wiener measure and by
$\Fci= \{\cFci_t\}_{0\leq t\leq T}$, $\cFci_t=\sigma(B_s,\,s\leq t)$ the raw filtration generated by $B$.
As in \cite{DenisHuPeng.2010,Nutz.10Gexp, SonerTouziZhang.2010dual,SonerTouziZhang.2010bsde} we shall use the so-called strong formulation of volatility uncertainty in this paper; i.e., we consider martingale laws induced by stochastic integrals of $B$ under $P_0$. More precisely, we define $\cP_S$ to be the set of laws
\begin{equation}\label{eq:strongFormulation}
  P^\alpha := P_0 \circ (X^\alpha)^{-1}, \quad\mbox{where}\quad X^{\alpha}_t := \sideset{^{(P_0)\hspace{-7pt}}}{}{\int_0^t} \alpha_s^{1/2} dB_s,\quad t\in [0,T]
\end{equation}
and $\alpha$ ranges over all $\Fci$-progressively measurable processes with values in $\mathbb{S}^{>0}_d$ satisfying $\int_0^T |\alpha_t|\,dt<\infty$ $P_0$-a.s. Here $\mathbb{S}^{>0}_d\subset \R^{d\times d}$ denotes the set of strictly positive definite matrices and the stochastic integral in~\eqref{eq:strongFormulation} is the It\^o integral under $P_0$, constructed in $\Fci$ (cf.~\cite[p.\,97]{StroockVaradhan.79}). We remark that $\cP_S$ coincides with the set denoted by $\overline{\cP}_S$ in \cite{SonerTouziZhang.2010aggreg}.

The basic object in this paper is a nonempty set $\cP\subseteq \cP_S$ which represents the possible scenarios for the volatility. For $t\in[0,T]$, we define
$L^1_\cP(\cFci_t)$ to be the space of $\cFci_t$-measurable random variables $X$ satisfying
\[
  \|X\|_{L^1_\cP}:=\sup_{P\in \cP} \|X\|_{L^1(P)}<\infty,
\]
where $\|X\|_{L^1(P)}:=E[|X|]$. More precisely, we take equivalences classes with respect to $\cP$-quasi-sure equality so that $L^1_\cP(\cFci_t)$ becomes a Banach space. (Two functions are equal $\cP$-quasi-surely, $\cP$-q.s.\ for short, if they are equal up to a $\cP$-polar set. A set is called $\cP$-polar if it is a $P$-nullset for all $P\in\cP$.)
We also fix a nonempty subset $\cH$ of $L^1_\cP:=L^1_\cP(\cFci_T)$ whose elements play the role of financial claims. We emphasize that in applications, $\cH$ is typically smaller than $L^1_\cP$. The following is a motivating example for many of the considerations in this paper.

\begin{example}\label{ex:Gexp}
  (i)~Given real numbers $0\leq \underline{a}\leq\overline{a}<\infty$, the associated $G$-expectation (for dimension $d=1$) corresponds to the choice
  \begin{equation}\label{eq:cPforGexp}
    \cP=\big\{P^\alpha\in\cP_S:\, \underline{a}\leq \alpha \leq \overline{a}\quad P_0\times dt\mbox{-a.e.}\big\},
  \end{equation}
  cf.~\cite[Section~3]{DenisHuPeng.2010}. Here the symbol $G$ refers to the function
  \[
    G(\gamma):=\frac{1}{2}\sup_{\underline{a}\leq a \leq \overline{a}} a\gamma.
  \]
  If $X=f(B_T)$ for a sufficiently regular function $f$, then $\cE^{\circ,G}_t(X)$ is defined via the solution of the nonlinear heat equation $-\partial_t u - G(u_{xx})=0$ with boundary condition $u|_{t=T}=f$. In~\cite{Peng.07}, the mapping $\cE^{\circ,G}_t$ is extended to random variables of the form $X=f(B_{t_1},\dots,B_{t_n})$ by a stepwise evaluation of the PDE and finally to the $\|\cdot\|_{L^1_\cP}$-completion $\cH$ of the set of
  all such random variables.  For $X\in\cH$, the $G$-expectation then satisfies
  \[
    \cE^{\circ,G}_t(X)={\mathop{\esssup^P}_{P'\in \cP(\cFci_t,P)}} E^{P'}[X|\cFci_t]\quad P\as
    \quad\mbox{for all } P\in\cP,
  \]
  which is of the form~\eqref{eq:aggreg}. The space $\cH$ coincides with the $\|\cdot\|_{L^1_\cP}$-completion of $C_b(\Omega)$, the set of bounded continuous functions on $\Omega$, and is strictly smaller than $L^1_\cP$ as soon as  $\underline{a}\neq\overline{a}$.

  (ii)~The random $G$-expectation corresponds to the case where $\underline{a}$, $\overline{a}$ are random processes instead of constants and is directly constructed from a set $\cP$ of measures (cf.~\cite{Nutz.10Gexp}). In this case the space
  $\cH$ is the $\|\cdot\|_{L^1_\cP}$-completion of $\UC_b(\Omega)$, the set of bounded uniformly continuous functions on $\Omega$. If $\overline{a}$ is finite-valued and uniformly bounded, $\cH$ coincides with the space from~(i).
\end{example}

\section{Time Consistency and Pasting}\label{se:pasting}

In this section, we consider time consistency as a property of the set $\cP\subseteq \cP_S$ and obtain some auxiliary results for later use. The set $\cH\subseteq L^1_\cP$ is fixed throughout. Moreover, we let $\cT(\Fci)$ be the set of all $\Fci$-stopping times taking finitely many values; this choice is motivated by the applications in the subsequent section. However, the results of this section hold true  also if $\cT(\Fci)$ is replaced by an arbitrary set of $\Fci$-stopping times containing $\sigma\equiv0$; in particular, the set of all stopping times and the set of all deterministic times.
Given $\cA\subseteq \cFci_T$ and $P\in\cP$, we use the standard notation
\[
  \cP(\cA,P)=\{P'\in\cP:\, P'=P\mbox{ on }\cA\}.
\]
At the level of measures, time consistency can then be defined as follows.

\begin{definition}\label{def:PtimeConsistent}
  The set $\cP$ is \emph{$\Fci$-time-consistent on} $\cH$ if
  \begin{equation}\label{eq:timeConsP}
    \mathop{\esssup^P}_{P'\in\cP(\cFci_\sigma,P)} E^{P'}\bigg[ \mathop{\esssup^{P'}}_{P''\in\cP(\cFci_\tau,P')} E^{P''}[X|\cFci_\tau]\bigg|\cFci_\sigma\bigg]
    = \mathop{\esssup^P}_{P'\in\cP(\cFci_\sigma,P)} E^{P'}[X|\cFci_\sigma]\quad P\as
  \end{equation}
  for all $P\in\cP$, $X\in\cH$ and $\sigma\leq \tau$ in $\cT(\Fci)$.
\end{definition}

This property embodies the principle of dynamic programming (e.g., \cite{ElKaroui.81}). We shall relate it to the following notion of stability, also called  m-stability, fork-convexity, stability under concatenation, etc.

\begin{definition}\label{def:stableUnderPasting}
  The set $\cP$ is \emph{stable under $\Fci$-pasting} if for all $P\in\cP$, $\tau\in \cT(\Fci)$, $\Lambda\in\cFci_\tau$
  and $P_1,P_2\in\cP(\cFci_\tau,P)$, the measure $\bar{P}$ defined by
  \begin{equation}\label{eq:defStableUnderPasting}
    \bar{P}(A):=E^P \big[P_1(A|\cFci_\tau)\1_\Lambda + P_2(A|\cFci_\tau)\1_{\Lambda^c}\big],\quad A\in\cFci_T
  \end{equation}
  is again an element of $\cP$. %
\end{definition}

As $\Fci$ is the only filtration considered in this section, we shall sometimes omit the qualifier ``$\Fci$''. %

\begin{lemma}\label{le:strongMeasuresStable}
  The set $\cP_S$ is stable under pasting.
\end{lemma}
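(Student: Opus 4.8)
The plan is to represent each element of $\cP_S$ by its volatility and to exhibit $\bar P$ as the law generated by a concatenated volatility, so that $\bar P\in\cP_S$ by the very definition of that set. Recall that for $P=P^\alpha\in\cP_S$ the canonical process $B$ is a $P$-martingale whose quadratic variation is absolutely continuous, $\langle B\rangle_t=\int_0^t \hat a^P_s\,ds$ $P$-a.s., for an $\Fci$-progressively measurable, $\mathbb{S}^{>0}_d$-valued density $\hat a^P$ that serves as a generating volatility for $P$. The first step I would record is the elementary but crucial translation of the hypothesis $P_1,P_2\in\cP_S(\cFci_\tau,P)$: since $\langle B\rangle_{\cdot\wedge\tau}$ is $\cFci_\tau$-measurable, the restriction of a measure to $\cFci_\tau$ determines the stopped quadratic variation, and conversely; hence $P_i=P$ on $\cFci_\tau$ forces $\hat a^{P_1}=\hat a^{P_2}=\hat a^{P}$ on $\{t\le\tau\}$, up to the relevant nullsets.

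Next I would define the concatenated volatility on the canonical space,
\[
  \bar a_t := \hat a^{P_1}_t\,\1_{\{t\le\tau\}} + \big(\1_\Lambda\,\hat a^{P_1}_t + \1_{\Lambda^c}\,\hat a^{P_2}_t\big)\,\1_{\{t>\tau\}},
\]
and check that it is an admissible integrand. It is $\mathbb{S}^{>0}_d$-valued, and it is $\Fci$-progressively measurable because $\Lambda\in\cFci_\tau$ makes the post-$\tau$ coefficient adapted while the splitting occurs at a stopping time; the condition $\int_0^T|\bar a_t|\,dt<\infty$ is inherited piecewise from $\hat a^{P_1}$ and $\hat a^{P_2}$. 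Taking $\bar\alpha$ to be (a progressively measurable version of) $\bar a$ viewed as an integrand under $P_0$ then produces a candidate $P^{\bar\alpha}\in\cP_S$, and it remains to identify $P^{\bar\alpha}$ with $\bar P$.

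The identification is the core of the argument. From its defining formula $\bar P$ is characterized by two properties: first, $\bar P=P$ on $\cFci_\tau$ (for $A\in\cFci_\tau$ one has $P_i(A|\cFci_\tau)=\1_A$, and $P_i=P$ on $\cFci_\tau$); second, a regular conditional probability of $\bar P$ given $\cFci_\tau$ coincides with that of $P_1$ on $\Lambda$ and with that of $P_2$ on $\Lambda^c$. I would then verify that $P^{\bar\alpha}$ has exactly these two properties: it agrees with $P$ on $\cFci_\tau$ because $\bar a=\hat a^P$ on $\{t\le\tau\}$, so the stopped martingale problems coincide; and conditionally on $\cFci_\tau$ its generating volatility is $\hat a^{P_1}$ on $\Lambda$ and $\hat a^{P_2}$ on $\Lambda^c$, which are the conditional generating volatilities of $P_1$ and $P_2$. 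Since a probability on $\cFci_T$ is determined by its restriction to $\cFci_\tau$ together with a regular conditional probability given $\cFci_\tau$, this yields $P^{\bar\alpha}=\bar P$.

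The step I expect to be the main obstacle is precisely this law-identification, since it requires passing rigorously between the generating It\^o integrals defined under $P_0$ on the domain and the quadratic-variation description of $\bar P$ on the range, and since one must argue at the level of regular conditional probabilities that concatenating the generating volatilities realizes the concatenation of the induced conditional laws. This is where I would lean on the regular-conditional-probability and concatenation machinery for $\cP_S=\overline{\cP}_S$ developed in \cite{SonerTouziZhang.2010aggreg}: the recovery of the generating volatility from $\langle B\rangle$, its behaviour under conditioning at a stopping time, and the fact that pasting volatilities transports to pasting measures. By contrast, the progressive measurability and admissibility of $\bar a$, and the uniqueness of a measure given its conditional laws, are routine once the hypothesis has been reduced to the identity $\hat a^{P_1}=\hat a^{P_2}$ on $\{t\le\tau\}$; the finitely-many-values structure of $\tau\in\cT(\Fci)$ can be used to localize the branching but is not essential.
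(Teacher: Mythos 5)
Your overall plan---concatenate volatilities and identify the induced law with $\bar P$---points in the right direction, but performing the concatenation on the canonical (range) space creates a genuine gap, in two respects. First, on the range space there is essentially only one density process: $\hat a_t=d\br{B}_t/dt$ can be defined pathwise as a single universal functional (this is exactly how it is used in the proofs of Lemma~\ref{le:MRPandVersions} and Proposition~\ref{pr:martingaleDecomp}), and your $\hat a^{P}$, $\hat a^{P_1}$, $\hat a^{P_2}$ are merely versions of it relative to mutually singular measures. The measures $P_1$ and $P_2$ are distinguished by their \emph{laws}, not by their volatility functionals, so your pasted process $\bar a$ is just another version of $\hat a$ and encodes none of the pasting data. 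Second, and more importantly, membership in $\cP_S$ is by definition a \emph{domain-side} (strong formulation) property: one must exhibit an $\Fci$-progressively measurable $\bar\alpha$ on the Wiener space with $\bar P=P_0\circ(X^{\bar\alpha})^{-1}$, the integrand being evaluated along the driving Brownian path. Your step ``taking $\bar\alpha$ to be a version of $\bar a$ viewed as an integrand under $P_0$'' conflates the two sides: if $\bar\alpha(\omega):=\bar a(\omega)$ along the Brownian path, then the volatility of the image law is $\bar a$ evaluated at the driving path rather than at the canonical path, so there is no reason that $P^{\bar\alpha}=\bar P$; if instead you intend solution-path dependence, you must solve $X=\int \bar a(X)^{1/2}\,dB$ \emph{strongly} under $P_0$, and strong solvability for merely measurable coefficients is precisely the restriction that defines $\cP_S=\overline{\cP}_S$ and cannot be taken for granted. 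Leaning on the cited machinery for ``pasting volatilities transports to pasting measures'' at this point is close to circular, since that transport is essentially the statement of the lemma.

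The paper's proof avoids both problems by pasting on the domain side: writing $P=P^\alpha$, $P_i=P^{\alpha^i}$, it defines
\[
  \bar\alpha_u(\omega):=\1_{\{u\le \tau(X^\alpha(\omega))\}}\,\alpha_u(\omega)
  +\1_{\{u> \tau(X^\alpha(\omega))\}}\Big[\alpha^1_u(\omega)\1_{\Lambda}(X^\alpha(\omega))+\alpha^2_u(\omega)\1_{\Lambda^c}(X^\alpha(\omega))\Big],
\]
where $\alpha,\alpha^1,\alpha^2$ all live on the same filtered space $(\Omega,\Fci,P_0)$, the time split occurs at $\tau\circ X^\alpha$ and the branching event is $\{X^\alpha\in\Lambda\}$. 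Then $P^{\bar\alpha}\in\cP_S$ holds by the very definition of $\cP_S$---no strong-existence issue arises because $\bar\alpha$ is an explicit admissible integrand---and the only remaining step, the identification $\bar P=P^{\bar\alpha}$, is delegated to the arguments of \cite[Appendix]{SonerTouziZhang.2010dual}. Note also that no preliminary reduction of the form ``$\hat a^{P_1}=\hat a^{P}$ on $[0,\tau]$'' is needed, since before the split the construction uses $\alpha$ itself. To salvage your range-side route you would instead have to verify directly that $\bar P$ satisfies the weak-formulation characterization of $\overline{\cP}_S$ from \cite{SonerTouziZhang.2010aggreg}, including the ``weak equals strong'' filtration property $\overline{\F^W}^{\bar P}=\overline{\Fci}^{\bar P}$; that last verification is exactly the nontrivial point your sketch treats as routine.
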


\begin{proof}
  Let $P,P_1,P_2,\tau,\Lambda,\bar{P}$ be as in Definition~\ref{def:stableUnderPasting}. Using the notation~\eqref{eq:strongFormulation}, let $\alpha,\alpha^i$ be such that $P^{\alpha}=P$ and $P^{\alpha^i}=P_i$ for $i=1,2$.
  Setting
  \begin{align*}
    \bar{\alpha}&_u(\omega):=\\
    &\1_{[\![0,\tau(X^\alpha)]\!]}(u)\alpha_u(\omega) + \1_{]\!]\tau(X^\alpha),T]\!]}(u)\Big[\alpha^1_u(\omega)\1_{\Lambda}(X^\alpha(\omega)) + \alpha^2_u(\omega)\1_{\Lambda^c}(X^\alpha(\omega))\Big],
  \end{align*}
  we have $\bar{P}=P^{\bar{\alpha}}\in\cP_S$ by the arguments in \cite[Appendix]{SonerTouziZhang.2010dual}.
\end{proof}

The previous proof also shows that the set appearing in~\eqref{eq:cPforGexp} is stable under pasting. The following result is classical.

\begin{lemma}\label{le:increasingSequence}
  Let $\tau\in\cT(\Fci)$, $X\in L^1_\cP$ and $P\in\cP$. If $\cP$ is stable under pasting, then there exists a sequence $P_n\in\cP(\cFci_\tau,P)$ such that
  \[
    {\mathop{\esssup^P}_{P'\in \cP(\cFci_\tau,P)}} E^{P'}[X|\cFci_\tau]=\lim_{n\to\infty} E^{P_n}[X|\cFci_\tau]\quad P\as,
  \]
  where the limit is increasing $P$-a.s.
\end{lemma}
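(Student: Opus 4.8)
The plan is to show that the family of conditional expectations $\{E^{P'}[X|\cFci_\tau] : P'\in\cP(\cFci_\tau,P)\}$ is upward directed (a lattice under the essential supremum), because then the standard result on essential suprema of directed families (see \cite{Neveu.75} or \cite[Appendix~A]{FollmerSchied.04}) guarantees the existence of a sequence realizing the $\esssup$ as an \emph{increasing} limit. So the work reduces entirely to establishing the directedness, and the tool for that is precisely the stability under pasting.

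\begin{proof}[Proof proposal]
The first step is to record that for an upward-directed family of random variables, the essential supremum is attained as an increasing almost-sure limit along a sequence drawn from the family; this is a classical lattice fact. Thus it suffices to prove that
\[
  \Phi:=\big\{E^{P'}[X|\cFci_\tau]:\, P'\in\cP(\cFci_\tau,P)\big\}
\]
is upward directed $P$-a.s., i.e.\ that for any $P_1,P_2\in\cP(\cFci_\tau,P)$ there is a $P_3\in\cP(\cFci_\tau,P)$ with $E^{P_3}[X|\cFci_\tau]\geq E^{P_1}[X|\cFci_\tau]\vee E^{P_2}[X|\cFci_\tau]$ $P$-a.s.

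Second, to produce such a $P_3$ I would split according to which of the two conditional expectations is larger. Set
\[
  \Lambda:=\big\{E^{P_1}[X|\cFci_\tau]\geq E^{P_2}[X|\cFci_\tau]\big\}\in\cFci_\tau,
\]
and let $P_3:=\bar P$ be the measure obtained by pasting $P_1$ on $\Lambda$ and $P_2$ on $\Lambda^c$ along $\cFci_\tau$ as in~\eqref{eq:defStableUnderPasting}. Stability under pasting gives $P_3\in\cP$, and since the pasting does not alter the measure on $\cFci_\tau$ we also have $P_3=P$ on $\cFci_\tau$, so $P_3\in\cP(\cFci_\tau,P)$. The key computation is that for $A\in\cFci_T$ the defining relation yields $E^{P_3}[\,\cdot\mid\cFci_\tau]=\1_\Lambda E^{P_1}[\,\cdot\mid\cFci_\tau]+\1_{\Lambda^c}E^{P_2}[\,\cdot\mid\cFci_\tau]$ $P$-a.s.; applying this to $X$ gives $E^{P_3}[X|\cFci_\tau]=E^{P_1}[X|\cFci_\tau]\vee E^{P_2}[X|\cFci_\tau]$ $P$-a.s., which is exactly the directedness property and in fact realizes the pairwise maximum inside the family.

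The main obstacle, and the step deserving the most care, is the identification of the conditional expectation under the pasted measure $P_3$ in terms of those under $P_1$ and $P_2$. One must verify from~\eqref{eq:defStableUnderPasting} that conditioning $P_3$ on $\cFci_\tau$ genuinely localizes to $P_1$ on $\Lambda$ and to $P_2$ on $\Lambda^c$; this uses that $\Lambda\in\cFci_\tau$ together with the fact that $P_1,P_2$ and $P_3$ all agree with $P$ on $\cFci_\tau$, so that the indicator $\1_\Lambda$ may be pulled out of each conditional expectation unambiguously $P$-a.s. Since $\tau$ takes finitely many values, one can argue on each atom $\{\tau=t\}$ separately and reduce to the elementary pasting identity for conditional expectations, which keeps the measure-theoretic bookkeeping routine once the $\cFci_\tau$-measurability of $\Lambda$ is in hand.
\end{proof}
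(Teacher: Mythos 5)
Your proposal is correct and follows essentially the same route as the paper's proof: reduce to showing the family $\{E^{P'}[X|\cFci_\tau]:P'\in\cP(\cFci_\tau,P)\}$ is upward filtering (invoking the classical result in \cite{Neveu.75}), then paste $P_1$ and $P_2$ along the $\cFci_\tau$-measurable set $\Lambda$ where one conditional expectation dominates, so that the pasted measure lies in $\cP(\cFci_\tau,P)$ and realizes the pairwise maximum $E^{P_1}[X|\cFci_\tau]\vee E^{P_2}[X|\cFci_\tau]$ $P$-a.s. The only differences are cosmetic (your $\Lambda$ uses $\geq$ where the paper uses $>$, which is immaterial), and your extra care in justifying the localization of the conditional expectation under the pasted measure is a sound elaboration of a step the paper leaves implicit.
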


\begin{proof}
  It suffices to show that the family $\{E^{P'}[X|\cFci_\tau]:\, P'\in\cP(\cFci_\tau,P)\}$ is
  $P$-a.s.\ upward filtering (cf.\ \cite[Proposition~VI-1-1]{Neveu.75}). Given $P_1,P_2\in\cP(\cFci_\tau,P)$, we set
  \[
    \Lambda:=\big\{E^{P_1}[X|\cFci_\tau]>E^{P_2}[X|\cFci_\tau]\big\}\in\cFci_\tau
  \]
  and define
  $\bar{P}(A):=E^P\big[P_1(A|\cFci_\tau)\1_\Lambda + P_2(A|\cFci_\tau)\1_{\Lambda^c}\big]$.
  Then $\bar{P}=P$ on $\cFci_\tau$ and $\bar{P}\in\cP$ by the stability. Moreover,
  \[
    E^{\bar{P}}[X|\cFci_\tau]=E^{P_1}[X|\cFci_\tau]\vee E^{P_2}[X|\cFci_\tau]\quad P\as,
  \]
  showing that the family is upward filtering.
\end{proof}

To relate time consistency to stability under pasting, we introduce the following closedness property.

\begin{definition}\label{def:maxChosen}
  We say that $\cP$ is \emph{maximally chosen for $\cH$} if $\cP$ contains all $P\in\cP_S$ satisfying
  $E^P[X]\leq \sup_{P'\in\cP}E^{P'}[X]$ for all $X\in \cH$.
\end{definition}

If $\cP$ is dominated by a reference probability $P_*$, then $\cP$ can be identified with a subset of $L^1(P_*)$  by the Radon-Nikodym theorem. If furthermore
$\cH=L^\infty(P_*)$, the Hahn-Banach theorem implies that $\cP$ is maximally chosen if and only if $\cP$ is convex and closed for
weak topology of $L^1(P_*)$. Along these lines, the following result can be seen as a generalization of \cite[Theorem~12]{Delbaen.06}; in fact, we merely replace functional-analytic arguments by algebraic ones.

\begin{proposition}\label{pr:pastingAndTimeconsistency}
  With respect to the filtration $\Fci$, we have:
  \begin{enumerate}[topsep=3pt, partopsep=0pt, itemsep=1pt,parsep=2pt]
    \item If $\cP$ is stable under pasting, then $\cP$ is time-consistent on $L^1_\cP$.
    \item If $\cP$ is time-consistent on $\cH$ and maximally chosen for $\cH$, then $\cP$ is stable under pasting.
  \end{enumerate}
\end{proposition}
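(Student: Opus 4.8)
The plan is to prove the two implications separately, using Lemma~\ref{le:increasingSequence} for part~(i) and the maximality together with Lemma~\ref{le:strongMeasuresStable} for part~(ii). Throughout I write $\cE^P_\rho(X):=\mathop{\esssup^P}_{P'\in\cP(\cFci_\rho,P)}E^{P'}[X|\cFci_\rho]$ for a stopping time $\rho$, so that~\eqref{eq:timeConsP} reads $\mathop{\esssup^P}_{P'\in\cP(\cFci_\sigma,P)}E^{P'}[\cE^{P'}_\tau(X)|\cFci_\sigma]=\cE^P_\sigma(X)$.

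For part~(i), fix $P\in\cP$, $X\in L^1_\cP$ and $\sigma\leq\tau$. The inequality ``$\geq$'' holds without stability: for each $P'\in\cP(\cFci_\sigma,P)$ one has $\cE^{P'}_\tau(X)\geq E^{P'}[X|\cFci_\tau]$ by taking $P''=P'$, so the tower property gives $E^{P'}[\cE^{P'}_\tau(X)|\cFci_\sigma]\geq E^{P'}[X|\cFci_\sigma]$, and taking the essential supremum over $P'$ yields ``$\geq$''. For ``$\leq$'' I would fix $P'\in\cP(\cFci_\sigma,P)$ and apply Lemma~\ref{le:increasingSequence} to the inner supremum, obtaining $P'_n\in\cP(\cFci_\tau,P')$ with $\cE^{P'}_\tau(X)=\lim_n E^{P'_n}[X|\cFci_\tau]$ increasingly $P'$-a.s.; monotone convergence then gives $E^{P'}[\cE^{P'}_\tau(X)|\cFci_\sigma]=\lim_n E^{P'}[E^{P'_n}[X|\cFci_\tau]|\cFci_\sigma]$. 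The crucial point is that $P'_n=P'$ on $\cFci_\tau\supseteq\cFci_\sigma$, so the two measures induce the same $\cFci_\sigma$-conditional expectation of the $\cFci_\tau$-measurable variable $E^{P'_n}[X|\cFci_\tau]$; hence the tower property collapses each term to $E^{P'_n}[X|\cFci_\sigma]$. Since $P'_n=P'=P$ on $\cFci_\sigma$, i.e.\ $P'_n\in\cP(\cFci_\sigma,P)$, each term is dominated by $\cE^P_\sigma(X)$, $P$-a.s.; taking the essential supremum over $P'$ closes the inequality.

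For part~(ii), since $\cP$ is maximally chosen for $\cH$ it suffices to show $\bar P\in\cP_S$ and $E^{\bar P}[X]\leq\sup_{P'\in\cP}E^{P'}[X]$ for every $X\in\cH$. Membership $\bar P\in\cP_S$ is immediate from Lemma~\ref{le:strongMeasuresStable} applied to $P,P_1,P_2\in\cP\subseteq\cP_S$. For the inequality I would first record, directly from~\eqref{eq:defStableUnderPasting}, the standard pasting identity $E^{\bar P}[X|\cFci_\tau]=E^{P_1}[X|\cFci_\tau]\1_\Lambda+E^{P_2}[X|\cFci_\tau]\1_{\Lambda^c}$ together with $\bar P=P$ on $\cFci_\tau$; integrating and using the latter gives $E^{\bar P}[X]=E^P[E^{P_1}[X|\cFci_\tau]\1_\Lambda+E^{P_2}[X|\cFci_\tau]\1_{\Lambda^c}]$, which also shows $X\in L^1(\bar P)$ since the analogous bound for $|X|$ is dominated by $E^{P_1}[|X|]+E^{P_2}[|X|]\leq 2\|X\|_{L^1_\cP}$. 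As $P_1,P_2\in\cP(\cFci_\tau,P)$, both conditional expectations are $\leq\cE^P_\tau(X)$, so $E^{\bar P}[X]\leq E^P[\cE^P_\tau(X)]$. Finally I apply time consistency~\eqref{eq:timeConsP} with $\sigma\equiv0$: since $\cFci_0$ is trivial and $\cP(\cFci_0,P)=\cP$, it reads $\sup_{P'\in\cP}E^{P'}[\cE^{P'}_\tau(X)]=\sup_{P'\in\cP}E^{P'}[X]$, whence the $P'=P$ term gives $E^P[\cE^P_\tau(X)]\leq\sup_{P'\in\cP}E^{P'}[X]$. Chaining the two bounds yields the required inequality, and maximality gives $\bar P\in\cP$.

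The main obstacle is the ``$\leq$'' direction of part~(i): one must combine the increasing approximation of Lemma~\ref{le:increasingSequence} with the tower property across measures that coincide only on $\cFci_\tau$, and carry out the essential-supremum bookkeeping consistently across the various $P'$, recalling that each object lives in the $\cP$-q.s.\ equivalence classes and that the different $\cP$-a.s.\ statements must be matched on $\cFci_\sigma$, where $P'=P$. The pasting identity used in part~(ii) is routine but must be verified from~\eqref{eq:defStableUnderPasting} directly, since the measures involved are in general mutually singular.
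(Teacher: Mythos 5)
Your proof is correct and takes essentially the same route as the paper's: part~(i) via Lemma~\ref{le:increasingSequence}, monotone convergence, and the tower property across measures agreeing on $\cFci_\tau$ (with $\cP(\cFci_\tau,P')\subseteq\cP(\cFci_\sigma,P')=\cP(\cFci_\sigma,P)$), and part~(ii) via the pasting identity, the bound by $E^P[\mathop{\esssup^P}_{P''\in\cP(\cFci_\tau,P)}E^{P''}[X|\cFci_\tau]]$, time consistency at $\sigma\equiv0$, Lemma~\ref{le:strongMeasuresStable}, and maximality. The additional care you take---carrying out the ``$\leq$'' argument for a general $P'\in\cP(\cFci_\sigma,P)$ rather than just $P$, matching the almost-sure statements on $\cFci_\sigma$, and checking $X\in L^1(\bar P)$---only makes explicit what the paper's terser proof leaves implicit.
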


\begin{proof}
  (i)~This implication is standard; we provide the argument for later reference. The inequality ``$\geq$'' in~\eqref{eq:timeConsP} follows by considering
  $P'':=P'$ on the left hand side. To see the converse inequality, fix an arbitrary $P\in\cP$ and choose a sequence $P_n\in\cP(\cFci_\tau,P)\subseteq\cP(\cFci_\sigma,P)$ as in
  Lemma~\ref{le:increasingSequence}. Then monotone convergence yields
   \begin{align*}
    E^P\bigg[{\mathop{\esssup^P}_{P'\in \cP(\cFci_\tau,P)}} E^{P'}[X|\cFci_\tau]\bigg|\cFci_\sigma\bigg]
    & = \lim_{n\to\infty} E^{P_n}[X|\cFci_\sigma] \\
    & \leq {\mathop{\esssup^P}_{P'\in \cP(\cFci_\sigma,P)}} E^{P'}[X|\cFci_\sigma]\quad P\as
  \end{align*}

  (ii)~Let $\cP$ be time-consistent and let $P,P_1,P_2,\tau,\Lambda,\bar{P}$ be as in Definition~\ref{def:stableUnderPasting}. For any $X\in \cH$, we have
  \begin{align*}
  E^{\bar{P}}[X]
       &=E^P\Big[E^{P_1}[X|\cFci_\tau]\1_\Lambda + E^{P_2}[X|\cFci_\tau]\1_{\Lambda^c}\Big] \\
       &\leq E^{P}\bigg[ \mathop{\esssup^P}_{P''\in\cP(\cFci_\tau,P)} E^{P''}[X|\cFci_\tau]\bigg] \\
       &\leq \sup_{P'\in \cP} E^{P'}\bigg[ \mathop{\esssup^{P'}}_{P''\in\cP(\cFci_\tau,P')} E^{P''}[X|\cFci_\tau]\bigg]\\
       & = \sup_{P'\in\cP} E^{P'}[X],
  \end{align*}
  where the last equality uses~\eqref{eq:timeConsP} with $\sigma\equiv0$.
  Since $\cP$ is maximally chosen and $\bar{P}\in \cP_S$ by Lemma~\ref{le:strongMeasuresStable}, we conclude that $\bar{P}\in\cP$.
\end{proof}

\section{$\cE$-Martingales}\label{se:Emart}

As discussed in the introduction, our starting point in this section is a given family $\{\cE^{\circ}_t(X),\,t\in[0,T]\}$ of random variables which will serve as a raw version of the $\cE$-martingale to be constructed. We recall that the sets $\cP\subseteq \cP_S$ and $\cH\subseteq L^1_\cP$ are fixed.

\begin{assumption}\label{as:pastingAndAggreg}
  Throughout Section~\ref{se:Emart}, we assume that
  \begin{enumerate}[topsep=3pt, partopsep=0pt, itemsep=1pt,parsep=2pt]
   \item for all $X\in\cH$ and $t\in[0,T]$, there exists an $\cFci_t$-measurable random variable $\cE^\circ_t(X)$ such that
    \begin{equation}\label{eq:ErawAggreg}
      \cE^\circ_t(X)={\mathop{\esssup^P}_{P'\in \cP(\cFci_t,P)}} E^{P'}[X|\cFci_t]\quad P\as
      \quad\mbox{for all } P\in\cP.
    \end{equation}
   \item the set $\cP$ is stable under $\Fci$-pasting.
  \end{enumerate}
\end{assumption}

\pagebreak[2]

The first assumption was discussed in the introduction; cf.~\eqref{eq:aggreg}. With the motivating Example~\ref{ex:Gexp} in mind, we ask for~\eqref{eq:ErawAggreg} to hold at deterministic times rather than at stopping times.
 The second assumption is clearly motivated by Proposition~\ref{pr:pastingAndTimeconsistency}(ii), and Proposition~\ref{pr:pastingAndTimeconsistency}(i) shows that $\cP$ is time-consistent in the sense of Definition~\ref{def:PtimeConsistent}. (We could assume the latter property directly, but stability under pasting is more suitable for applications.) In particular, we have
\begin{equation}\label{eq:timeConsEraw}
    \cE^\circ_s(X) = \mathop{\esssup^P}_{P'\in\cP(\cFci_s,P)} E^{P'} [ \cE^\circ_t(X) |\cFci_s]\quad P\as\quad \mbox{for all }P\in\cP,
\end{equation}
$0\leq s\leq t \leq T$ and $X\in\cH$. If we assume that $\cE^\circ_t(X)$ is again an element of the domain $\cH$, this amounts to $\{\cE^\circ_t\}$ being time-consistent (at deterministic times) in the sense that the semigroup property $\cE^\circ_s\circ \cE^\circ_t=\cE_s^\circ$ is satisfied.
However, $\cE^\circ_t(X)$ need not be in $\cH$ in general; e.g., for certain random \mbox{$G$-expectations}. Inspired by the theory of viscosity solutions, we introduce the following extended notion of time consistency, which is clearly implied by~\eqref{eq:timeConsEraw}.

\begin{definition}\label{def:EtimeConsistent}
  A family $(\E_t)_{0\leq t\leq T}$ of mappings $\E_t: \cH \to L^1_\cP(\cFci_t)$ is called \emph{$\Fci$-time-consistent at deterministic times} if for all $0\leq s\leq t \leq T$ and $X\in\cH$,
  \[
    \E_s(X)\leq \,(\geq)\,\E_s(\varphi) \quad\mbox{for all }\varphi\in L^1_\cP(\cFci_t)\cap \cH \mbox{ such that }\E_t(X)\leq \,(\geq)\, \varphi.
  \]
\end{definition}

One can give a similar definition for stopping times taking countably many values. (Note that $\E_\tau(X)$ is not necessarily well defined for a general stopping time $\tau$.)

\begin{remark}
  If Assumption~\ref{as:pastingAndAggreg} is weakened by requiring $\cP$ to be stable only under $\Fci$-pastings at deterministic times (i.e., Definition~\ref{def:PtimeConsistent} holds with $\cT(\Fci)$ replaced by the set of deterministic times), then all results in this section remain true with the same proofs, except for Theorem~\ref{th:DPPstop}, Lemma~\ref{le:classD} and the last statement in Theorem~\ref{th:2bsde}.
\end{remark}

\subsection{Construction of the $\cE$-Martingale}

Our first task is to turn the collection $\{\cE^\circ_t(X),\,t\in[0,T]\}$ of random variables into a reasonable stochastic process. As usual, this requires an extension of the filtration. We denote by
\[
  \F^+=\{\cF^+_t\}_{0\leq t\leq T},\quad \cF^+_t:= \cFci_{t+}
\]
the minimal right continuous filtration containing $\Fci$; i.e., $\cFci_{t+}:=\bigcap_{s>t}\cFci_s$ for $0\leq t<T$ and $\cFci_{T+}:=\cFci_T$. We augment $\F^+$ by the collection
$\cN^{\cP}$ of $(\cP,\cFci_T)$-polar sets to obtain the filtration
\[
  \hF=\{\hcF_t\}_{0\leq t\leq T},\quad \hcF_t:= \cFci_{t+}\vee\cN^{\cP}.
\]
Then $\hF$ is right continuous and a natural analogue of the ``usual augmentation'' that is standard in the case where a reference probability is given. More precisely, if $\cP$ is dominated by some probability measure, then one can find a minimal dominating measure $P_*$ (such that every $\cP$-polar set is a $P_*$-nullset) and then $\hF$ coincides with the $P_*$-augmentation of $\F^+$.
We remark that $\hF$ is in general strictly smaller than the $\cP$-universal augmentation $\bigcap_{P\in\cP} \overline{\Fci}^P$, which seems to be too large for our purposes. Here $\overline{\Fci}^P$ denotes the $P$-augmentation of $\Fci$.

Since $\F$ and $\hF^+$ differ only by $\cP$-polar sets, they can be identified for most purposes; note in particular that $\cF_T=\cF^+_{T}=\cFci_{T}$ $\cP$-q.s.
We also recall the following result (e.g., \cite[Theorem~1.5]{JacodYor.77}, \cite[Lemma~8.2]{SonerTouziZhang.2010aggreg}), which shows that
$\hF$ and $\Fci$ differ only by $P$-nullsets for each $P\in\cP$.

\begin{lemma}\label{le:MRPandVersions}
  Let $P\in\cP$. Then $\overline{\Fci}^P$ is right continuous and
  in particular contains $\hF$. Moreover,
  $(P,B)$ has the predictable representation property; i.e., for any right continuous $(\overline{\Fci}^P,P)$-local martingale $M$ there exists an $\overline{\Fci}^P$-predictable process $Z$ such that
  $M=M_0+{}^{(P)\hspace{-5pt}}\int Z\,dB$, $P$-a.s.
\end{lemma}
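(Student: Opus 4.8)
The plan is to reduce everything to the Wiener space and then transport the classical properties of the augmented Brownian filtration. Write $P=P^\alpha=P_0\circ(X^\alpha)^{-1}$ and set $\Phi:=X^\alpha$. Since $X^\alpha_t$ is $\cFci_t$-measurable, the map $\Phi\colon\Omega\to\Omega$ is $\Fci$-adapted and satisfies $B\circ\Phi=X^\alpha$, whence $\Phi^{-1}(\cFci_t)=\sigma(X^\alpha_s,\,s\le t)=:\cF^{X^\alpha}_t$. Pushing $P_0$ forward by $\Phi$ therefore identifies the filtered probability space $(\Omega,\overline{\Fci}^P,P,B)$ with $(\Omega,\overline{\cF^{X^\alpha}}^{P_0},P_0,X^\alpha)$ through the isometry $f\mapsto f\circ\Phi$. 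Under such an isomorphism right continuity of the filtration, the class of local martingales, and the stochastic integral are all preserved, so it suffices to prove that $\overline{\cF^{X^\alpha}}^{P_0}$ is right continuous and that $X^\alpha$ has the predictable representation property on Wiener space.

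The crux is the filtration identity $\overline{\cF^{X^\alpha}}^{P_0}=\overline{\Fci}^{P_0}$. The inclusion ``$\subseteq$'' is immediate from $X^\alpha=\int\alpha^{1/2}\,dB$, which makes $X^\alpha$ adapted to the augmented Wiener filtration. For the reverse inclusion I would recover $B$ from $X^\alpha$: since $\alpha\in\mathbb{S}^{>0}_d$, one has $\langle X^\alpha\rangle_t=\int_0^t\alpha_s\,ds$, and the quadratic variation $\langle X^\alpha\rangle$ is a pathwise functional of $X^\alpha$, hence $\overline{\cF^{X^\alpha}}^{P_0}$-adapted. Its time derivative furnishes an $\overline{\cF^{X^\alpha}}^{P_0}$-predictable version of $\alpha$, and because $\alpha$ is invertible, $B=\int\alpha^{-1/2}\,dX^\alpha$ is $\overline{\cF^{X^\alpha}}^{P_0}$-adapted; this gives $\overline{\Fci}^{P_0}\subseteq\overline{\cF^{X^\alpha}}^{P_0}$ and hence equality. (This quadratic-variation recovery of the volatility, together with the defining structure $P=P^\alpha$, is exactly what underlies \cite[Lemma~8.2]{SonerTouziZhang.2010aggreg}; alternatively one may invoke the extremality characterization of \cite[Theorem~1.5]{JacodYor.77}.)

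With the identity in hand the conclusions follow. The augmented Wiener filtration $\overline{\Fci}^{P_0}$ satisfies the usual conditions, so it is right continuous and $B$ has the predictable representation property under $P_0$; transporting through $\Phi$ gives right continuity of $\overline{\Fci}^P$. For the representation, any right-continuous $(\overline{\Fci}^{P_0},P_0)$-local martingale $N$ satisfies $N=N_0+\int\theta\,dB=N_0+\int\theta\alpha^{-1/2}\,dX^\alpha$, so $X^\alpha$ inherits the property; consequently every $(\overline{\Fci}^P,P)$-local martingale $M$ pulls back to such an $N$ and pushes forward to $M=M_0+\int Z\,dB$ with $Z$ the $\overline{\Fci}^P$-predictable image of $\theta\alpha^{-1/2}$. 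Finally, right continuity yields $\cFci_{t+}\subseteq\bigcap_{s>t}\overline{\cFci_s}^P=\overline{\cFci_t}^P$, and every $\cP$-polar set is $P$-null, so $\cN^{\cP}\subseteq\overline{\cFci_t}^P$; together these give $\hcF_t=\cFci_{t+}\vee\cN^{\cP}\subseteq\overline{\Fci}^P_t$, i.e.\ $\overline{\Fci}^P\supseteq\hF$.

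The main obstacle is precisely the filtration identity $\overline{\cF^{X^\alpha}}^{P_0}=\overline{\Fci}^{P_0}$ (equivalently, that the $P$-Brownian motion $W:=\int\hat a^{-1/2}\,dB$, with $d\langle B\rangle_t=\hat a_t\,dt$, generates $\overline{\Fci}^P$): this is what breaks the apparent circularity between recovering the volatility from the path and recovering the driving Brownian motion, and it is the step on which both right continuity and predictable representation genuinely rest. A direct argument under $P$ stumbles here, since $\hat a$ is a functional of the non-degenerate martingale $B$ and not manifestly of $W$; routing through the explicit construction $X^\alpha=\int\alpha^{1/2}\,dB$, where the quadratic-variation recovery of $\alpha$ from $X^\alpha$ is transparent, is what makes the argument close.
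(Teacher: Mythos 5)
Your proof is correct and takes essentially the same route as the paper: the paper's (sketched) proof defines $W:={}^{(P)\hspace{-5pt}}\int \hat{a}^{-1/2}\,dB$ under $P$, obtains right continuity and the representation property from the Brownian filtration of $W$, and cites \cite[Lemma~8.1]{SonerTouziZhang.2010aggreg} for the key identity $\overline{\F^W}^P=\overline{\Fci}^P$ --- exactly the filtration identity you single out as the crux. Your only real deviation is to inline a proof of that cited lemma by transporting to Wiener space via $\Phi=X^\alpha$ and recovering $\alpha$ from the pathwise quadratic variation of $X^\alpha$, which is precisely where the strong formulation $P\in\cP_S$ enters in the paper's argument as well.
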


\begin{proof}
  We sketch the argument for the convenience of the reader. We define a predictable process
 $\hat{a}_t=d\br{B}_t/dt$ taking values in $\mathbb{S}^{>0}_d$ $P\times dt$-a.e., note that $(\hat{a})^{-1/2}$ is square-integrable for $B$ by its very definition,
  and consider $W_t:={}^{(P)\hspace{-5pt}}\int_0^t (\hat{a}_u)^{-1/2}\,dB_u$. Let $\F^{W}$ be the raw filtration generated by $W$. Since $W$ is a $P$-Brownian motion by L\'evy's characterization, the $P$-augmentation $\overline{\F^W}^P$ is right continuous and $W$ has the representation property. Moreover, as $P\in\cP_S$, \cite[Lemma~8.1]{SonerTouziZhang.2010aggreg} yields that $\overline{\F^W}^P=\overline{\Fci}^P$. Thus $\overline{\Fci}^P$ is also right continuous and $B$ has the representation property since any integral of $W$ is also an integral of $B$.
\end{proof}

We deduce from Lemma~\ref{le:MRPandVersions} that for $P\in\cP$, any (local) $(\Fci,P)$-martingale is a (local) $(\hF,P)$-martingale. In particular, this applies to the canonical process $B$.
Note that Lemma~\ref{le:MRPandVersions} does not imply that $\hF$ and $\Fci$ coincide up to $\cP$-polar sets. E.g., consider the set
\begin{equation}\label{eq:SetForCouterexamples}
  A:=\Big\{\limsup_{t\to0} t^{-1}\br{B}_t = \liminf_{t\to 0} t^{-1}\br{B}_t =1 \Big\}\in\cFci_{0+}.
\end{equation}
Then the lemma asserts that $P(A)\in\{0,1\}$ for all $P\in\cP$, but not that this number is the same for all $P$.
Indeed, $P^\alpha(A) =1$ for $\alpha\equiv1$ but $P^\alpha(A)=0$ for $\alpha\equiv2$.

We can now state the existence and uniqueness of the stochastic process derived from $\{\cE^\circ_t(X),\,t\in[0,T]\}$. For brevity, we shall say that $Y$ is an $(\hF,\cP)$-supermartingale if $Y$ is an $(\hF,P)$-supermartingale for all $P\in\cP$; analogous notation will be used in similar situations.

\pagebreak[4]

\begin{proposition}\label{pr:extension}
  Let $X\in\cH$. There exists an $\hF$-optional process $(Y_t)_{0\leq t\leq T}$ such that
  all paths of $Y$ are c\`adl\`ag and
  \begin{enumerate}[topsep=3pt, partopsep=0pt, itemsep=1pt,parsep=2pt]
    \item $Y$ is the minimal $(\hF,\cP)$-supermartingale with $Y_T=X$; i.e., if $S$ is a c\`adl\`ag
    $(\hF,\cP)$-supermartingale with $S_T=X$, then $S\geq Y$ up to a $\cP$-polar set.
    \item $Y_t=\cE^\circ_{t+}(X):=\lim_{r\downarrow t} \cE^\circ_r(X)$ $\cP$-q.s.\ for all $0\leq t<T$, and $Y_T=X$.
    \item $Y$ has the representation
     \begin{equation}\label{eq:Yesssup}
      Y_t = \mathop{\esssup^P}_{P'\in \cP(\hcF_t,P)} E^{P'}[X|\hcF_t]\quad P\as\quad \mbox{for all } P\in\cP.
     \end{equation}
  \end{enumerate}
  Any of the properties (i),(ii),(iii) characterizes $Y$ uniquely up to $\cP$-polar sets. The process $Y$ is denoted by $\cE(X)$ and called the (c\`adl\`ag) \emph{$\cE$-martingale} associated with $X$.
\end{proposition}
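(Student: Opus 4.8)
The plan is to construct $Y$ as the c\`adl\`ag modification of the right-limit process $\cE^\circ_{t+}(X)$ and then verify the three properties, using (iii) as the technical workhorse. First I would fix the candidate by defining, for each $P\in\cP$, the $(\hF,P)$-supermartingale $Y^P_t := \mathop{\esssup^P}_{P'\in \cP(\hcF_t,P)} E^{P'}[X|\hcF_t]$. The supermartingale property in $\hF$ (for each fixed $P$) follows from time consistency via the same filtering/pasting argument as in Lemma~\ref{le:increasingSequence} and Proposition~\ref{pr:pastingAndTimeconsistency}(i); here one must check that pasting at the stopping times implicit in the $\hF$-conditioning stays inside $\cP$, which is exactly what Assumption~\ref{as:pastingAndAggreg}(ii) guarantees. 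The key subtlety is the shift from $\Fci_t$ (where~\eqref{eq:ErawAggreg} is assumed) to $\hcF_t$: one shows that conditioning on $\hcF_t=\cFci_{t+}\vee\cN^\cP$ agrees $P$-a.s.\ with a right limit of the $\cFci_r$-conditional versions, so that $Y^P_t=\lim_{r\downarrow t}\cE^\circ_r(X)$ $P$-a.s.

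Next I would produce a \emph{single} optional process, defined up to $\cP$-polar sets, that serves as $Y^P$ simultaneously for all $P$. The natural route is to invoke the classical supermartingale regularization theorem (e.g.\ Dellacherie--Meyer): for each $P$, the right-continuous filtration $\hF$ satisfies the usual conditions under $P$ (by Lemma~\ref{le:MRPandVersions}, $\overline{\Fci}^P$ is right continuous and contains $\hF$), and the raw process $t\mapsto \cE^\circ_t(X)$ has $P$-a.s.\ finite right and left limits along rationals because it is a $P$-supermartingale in expectation. This yields a c\`adl\`ag $P$-modification whose value is $\cE^\circ_{t+}(X)$. The aggregation problem---gluing these $P$-modifications into one path-defined process---is handled because the right-limit $\cE^\circ_{t+}(X)$ is itself an $\cFci_{t+}$-measurable, hence $\hcF_t$-measurable, random variable defined independently of $P$; one takes $Y_t:=\limsup_{r\downarrow t,\,r\in\Q}\cE^\circ_r(X)$ as a genuinely pathwise object and argues that it is $\cP$-q.s.\ c\`adl\`ag and equals each $Y^P$ up to a $P$-nullset. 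This simultaneously establishes property (ii).

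For minimality, property (i), I would argue that if $S$ is any c\`adl\`ag $(\hF,\cP)$-supermartingale with $S_T=X$, then for each $P$ and each $P'\in\cP(\hcF_t,P)$ the supermartingale property of $S$ under $P'$ gives $S_t\geq E^{P'}[S_T\mid\hcF_t]=E^{P'}[X\mid\hcF_t]$ $P'$-a.s.; since $P'=P$ on $\hcF_t$, this inequality holds $P$-a.s., and taking the essential supremum over $P'$ yields $S_t\geq Y^P_t=Y_t$ $P$-a.s. Ranging over $P\in\cP$ gives $S\geq Y$ $\cP$-q.s. Finally, property (iii) is the representation $Y_t=Y^P_t$ just used, and uniqueness follows because (i) pins $Y$ down as the minimal such supermartingale, (ii) determines it as an explicit right limit, and (iii) identifies it $P$-a.s.\ for every $P$; any two processes satisfying one of these agree $\cP$-q.s.

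The step I expect to be the main obstacle is the aggregation/regularization in the second paragraph: producing one c\`adl\`ag process, defined up to $\cP$-polar sets rather than merely $P$-a.s.\ for each $P$ separately, in the \emph{absence of a dominating reference measure}. The usual regularization theorem is inherently measure-by-measure, so the real work is to exhibit a $P$-independent pathwise representative (the right limit along rationals) and to prove it has c\`adl\`ag paths quasi-surely and coincides with every $P$-regularization. This is where the right continuity of $\hF$ from Lemma~\ref{le:MRPandVersions}, together with the fact that $\hcF_t$ already contains all $\cP$-polar sets, is essential to control the null sets uniformly.
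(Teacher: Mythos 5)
Your overall architecture is the paper's: the pathwise candidate $Y_t:=\limsup_{r\downarrow t,\,r\in\Q}\cE^\circ_r(X)$, the measure-by-measure regularization via \cite[Theorem~VI.2]{DellacherieMeyer.82} showing the exceptional set $N$ is $\cP$-polar (hence in $\hcF_0$, where one redefines $Y:=0$), and your minimality and uniqueness arguments are exactly those of the paper. However, there is a genuine gap at precisely the step you flag as the ``key subtlety'' and then leave unproved: the identification $Y^P_t=\cE^\circ_{t+}(X)$ $P$-a.s., i.e.\ the inequality $\cE^\circ_{t+}(X)\leq \mathop{\esssup^P}_{P'\in\cP(\hcF_t,P)}E^{P'}[X|\hcF_t]$. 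The justification you offer --- that pasting ``at the stopping times implicit in the $\hF$-conditioning'' stays inside $\cP$ because of Assumption~\ref{as:pastingAndAggreg}(ii) --- is incorrect as stated: that assumption gives stability only under $\Fci$-pasting, with pasting sets $\Lambda\in\cFci_\tau$ and measures agreeing on $\cFci_\tau$, and it does not directly license pasting along $\hcF_t=\cFci_{t+}\vee\cN^\cP$ (which is what upward filtering of $\{E^{P'}[X|\hcF_t]:P'\in\cP(\hcF_t,P)\}$, and hence your a priori supermartingale property of $Y^P$, would require). The extension to $\F$-pasting is the paper's Lemma~\ref{le:pastingInF}, which needs a separate, nontrivial argument replacing $\hcF$-sets by $\cFci$-sets up to nullsets, and is proved later for the optional sampling theorem --- it is deliberately not used in the proof of Proposition~\ref{pr:extension}.

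The paper closes this gap while staying entirely inside the raw filtration. Fix rationals $t_n\downarrow t$ strictly. Since $E^P[\cE^\circ_{t_n}(X)]\leq\cE^\circ_0(X)<\infty$, the backward supermartingale convergence theorem \cite[Theorem~V.30]{DellacherieMeyer.82} upgrades $\cE^\circ_{t_n}(X)\to Y_t$ $P$-a.s.\ to convergence in $L^1(P)$, whence $Y_t=\lim_n E^P[\cE^\circ_{t_n}(X)|\hcF_t]$ $P$-a.s. For each fixed $n$, Lemma~\ref{le:increasingSequence} --- which uses only $\Fci$-pasting at the deterministic time $t_n$ --- produces $P^n_k\in\cP(\cFci_{t_n},P)$ with $E^{P^n_k}[X|\cFci_{t_n}]$ increasing to $\cE^\circ_{t_n}(X)$, and monotone convergence gives $E^P[\cE^\circ_{t_n}(X)|\hcF_t]=\lim_k E^{P^n_k}[X|\hcF_t]$. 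The desired inequality then follows from the elementary inclusion $\cP(\cFci_{t_n},P)\subseteq\cP(\cFci_{t+},P)=\cP(\hcF_t,P)$, valid because $t_n>t$ and because $\cFci_{t+}$ and $\hcF_t$ differ only by $\cP$-polar sets. This route also makes your opening step superfluous: the supermartingale property of $Y$ under every $P'\in\cP$ comes for free from the regularization of the raw $(\Fci,P')$-supermartingale $\{\cE^\circ_t(X)\}$, and that is all that the ``$\geq$'' half of (iii) and the minimality argument in (i) require.
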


\begin{proof}
  We choose and fix representatives for the classes $\cE^\circ_t(X)\in L^1_\cP(\cFci_t)$ and define the $\R\cup\{\pm\infty\}$-valued process $Y$ by
  \[
    Y_t(\omega):=\limsup_{r\in (t,T]\cap\Q,\; r\to t} \cE^\circ_r(X)(\omega)\quad\mbox{for }0\leq t<T \quad\mbox{and}\quad Y_T(\omega):=X(\omega)
  \]
  for all $\omega\in\Omega$. Since each $\cE^\circ_r(X)$ is $\cFci_r$-measurable, $Y$ is adapted to $\F^+$ and in particular to $\hF$.
  Let $N$ be the set of $\omega\in\Omega$ for which there exists $t\in [0,T)$ such that $\lim_{r\in (t,T]\cap\Q,\; r\to t} \cE^\circ_r(X)(\omega)$ does not exist as a finite real number.
  For any $P\in\cP$,~\eqref{eq:timeConsEraw} implies the $(\Fci,P)$-supermartingale property
  \[
    \cE^\circ_s(X)\geq E^P[\cE^\circ_t(X)|\cFci_s]\quad P\as,\quad 0\leq s\leq t\leq T.
  \]
  Thus the standard modification argument for supermartingales (see \cite[Theorem~VI.2]{DellacherieMeyer.82}) yields that
  $P(N)=0$. As this holds for all $P\in\cP$, the set $N$ is $\cP$-polar and thus $N\in\hcF_0$. We redefine $Y:=0$ on $N$. Then all paths of $Y$ are finite-valued and c\`adl\`ag. Moreover, the resulting process is $\hF$-adapted and therefore $\hF$-optional by the c\`adl\`ag property.
  Of course, redefining $Y$ on $N$ does not affect the $P$-almost sure properties of $Y$. In particular,~\cite[Theorem~VI.2]{DellacherieMeyer.82} shows that $Y$ is an $(\hF,P)$-supermartingale.

  Let $P'\in\cP(\hcF_t,P)$.
  Using the above observation with $P'$ instead of $P$, we also have that $Y$ is an $(\hF,P')$-supermartingale. As $X=Y_T$, this yields that $E^{P'}[X|\hcF_t]=E^{P'}[Y_T|\hcF_t]\leq Y_t$ $P'$-a.s., and also $P$-a.s.\ because $P'=P$ on $\hcF_t$. Since $P'\in\cP(\hcF_t,P)$ was arbitrary, we conclude that
  \begin{equation}\label{eq:proofExtensionIneq}
    Y_t\geq \mathop{\esssup^P}_{P'\in \cP(\hcF_t,P)} E^{P'}[X|\hcF_t] \quad P\as
  \end{equation}
  To see the converse inequality, consider a strictly decreasing sequence $t_n\downarrow t$ of rationals.
  Then $\cE^\circ_{t_n}(X)\to Y_t$ $P$-a.s.\ by the definition of $Y_t$, but
  as $E^P[\cE^\circ_{t_n}(X)]\leq \cE^\circ_{0}(X)<\infty$, the backward supermartingale convergence theorem \cite[Theorem~V.30]{DellacherieMeyer.82} shows that this convergence holds also in $L^1(P)$ and
  hence
  \begin{equation}\label{eq:proofExtensionConv}
    Y_t=\lim_{n\to\infty} E^P[\cE^\circ_{t_n}(X)|\hcF_t]\quad \mbox{in }L^1(P)\mbox{ and } P\as
  \end{equation}
  Here the convergence in $L^1(P)$ holds by the $L^1(P)$-continuity of $E^P[\,\cdot\,|\hcF_t]$ and then the convergence $P$-a.s.\
  follows since the sequence on the right hand side is monotone by the supermartingale property.
  For fixed $n$, let $P^n_k\in\cP(\cFci_{t_n},P)$ be a sequence as in Lemma~\ref{le:increasingSequence}. Then monotone convergence yields
   \begin{align*}
    E^P[\cE^\circ_{t_n}(X)|\hcF_t]
    &= E^P\bigg[\mathop{\esssup^P}_{P'\in \cP(\cFci_{t_n},P)} E^{P'}[X|\cFci_{t_n}]\bigg|\hcF_t\bigg] \\
    & = \lim_{k\to\infty} E^{P^n_k}[X|\hcF_t] \\
    & \leq \mathop{\esssup^P}_{P'\in \cP(\hcF_t,P)} E^{P'}[X|\hcF_t]\quad P\as,
  \end{align*}
  since $P^n_k\in\cP(\hcF_t,P)$ for all $k$ and $n$;
  indeed, we have $P^n_k\in\cP(\cFci_{t_n},P)$ and $\cP(\cFci_{t_n},P)\subseteq \cP(\cFci_{t+},P)$ since $t_n>t$, moreover,
  $\cP(\cFci_{t+},P)=\cP(\hcF_t,P)$ since $\cFci_{t+}$ and $\hcF_t$ coincide up to $\cP$-polar sets.
  In view of~\eqref{eq:proofExtensionConv}, the inequality converse to~\eqref{eq:proofExtensionIneq} follows and~(iii) is proved.

  To see the minimality property in~(i), let $S$ be an $(\hF,\cP)$-supermartingale with $S_T=X$. Exactly as in~\eqref{eq:proofExtensionIneq}, we deduce that
  \[
     S_t\geq \mathop{\esssup^P}_{P'\in \cP(\hcF_t,P)} E^{P'}[X|\hcF_t] \quad P\as\quad\mbox{for all }P\in\cP.
  \]
  By~(iii) the right hand side is $P$-a.s.\ equal to $Y_t$. Hence $S_t\geq Y_t$ $\cP$-q.s.\ for all $t$ and
  $S\geq Y$ $\cP$-q.s.\ when $S$ is c\`adl\`ag

  Finally, if $Y$ and $Y'$ are processes satisfying (i) or (ii) or (iii), then they are $P$-modifications of each other for all $P\in\cP$ and thus coincide up to a $\cP$-polar set as soon as they are c\`adl\`ag.
\end{proof}

One can ask whether $\cE(X)$ is a \emph{$\cP$-modification} of
$\{\cE^\circ_t(X),\,t\in[0,T]\}$; i.e., whether
\[
  \cE_t(X)=\cE^\circ_t(X)\quad\cP\qs\quad\mbox{for all }0\leq t\leq T.
\]
It is easy to see that $\cE(X)$ is a $\cP$-modification as soon as there exists \emph{some} c\`adl\`ag $\cP$-modification of the family $\{\cE^\circ_t(X),\,t\in[0,T]\}$, and this is the case if and only if $t\mapsto E^P[\cE^\circ_t(X)]$ is right continuous for all $P\in\cP$.
We also remark that Lemma~\ref{le:MRPandVersions} and the argument given for~\eqref{eq:proofExtensionIneq} yield
\begin{equation}\label{eq:ineqModification}
  \cE_t(X)\leq \cE^\circ_t(X)\quad\cP\qs\quad\mbox{for all }0\leq t\leq T
\end{equation}
and so the question is only whether the converse inequality holds true as well.
The answer is positive in several important cases; e.g., for the $G$-expectation when $X$ is sufficiently regular \cite[Theorem~5.3]{Song.10} and the sublinear expectation generated by a controlled stochastic differential equation \cite[Theorem~5.1]{Nutz.11}. The proof of the latter result yields a general technique to approach this problem in a given example.
However, the following (admittedly degenerate) example shows that the answer is negative in a very general case; this reflects the fact that the set $\cP(\hcF_t,P)$ in the representation~\eqref{eq:Yesssup} is smaller than the set $\cP(\cFci_t,P)$ in~\eqref{eq:ErawAggreg}.

\begin{example}\label{ex:counterexModification}
  We shall consider a $G$-expectation defined on a set of irregular random variables. Let $\underline{a}=1$, $\overline{a}=2$ and let $\cP$ be as in~\eqref{eq:cPforGexp}. We take $\cH=L^1_{\cP}(\cF^\circ_{0+})$ and define
  \[
  \cE^{\circ}_t(X):=
  \begin{cases}
    \sup_{P\in\cP} E^P[X], & t=0,  \\
    X, & 0<t\leq T
  \end{cases}
  \]
  for $X\in\cH$. Then $\{\cE^{\circ}_t\}$ trivially satisfies~\eqref{eq:ErawAggreg} since $X$ is $\cFci_t$-measurable for all $t>0$. As noted after Lemma~\ref{le:strongMeasuresStable}, the second part of Assumption~\ref{as:pastingAndAggreg} is also satisfied. Moreover, the c\`adl\`ag $\cE$-martingale is given by
  \[
    \cE_t(X)=X,\quad t\in [0,T].
  \]
  Consider $X:=\1_A$, where $A$ is defined as in~\eqref{eq:SetForCouterexamples}. Then $\cE^{\circ}_0(X)=1$ and $\cE_0(X)=\1_A$ are not equal $P^2$-a.s.\ (i.e., the measure $P^\alpha$ for $\alpha\equiv 2$). In fact, there is no c\`adl\`ag $\cP$-modification since $\{\cE^{\circ}_t(X)\}$ coincides $P^2$-a.s.\ with the deterministic function $t\mapsto \1_{\{0\}}(t)$.
\end{example}

We remark that the phenomenon appearing in the previous example is due to the presence of singular measures rather than the fact that $\cP$ is not dominated. In fact, one can give a similar example involving only two measures.

Finally, let us mention that the situation is quite different if we assume that the given sublinear expectation is already placed in the larger filtration $\F$ (i.e., Assumption~\ref{as:pastingAndAggreg} holds with $\Fci$ replaced by $\F$), which would be in line with the paradigm of the ``usual assumptions'' in standard stochastic analysis. In this case, the arguments in the proof of Proposition~\ref{pr:extension} show that $\cE(X)$ is always a $\cP$-modification. This result is neat, but not very useful, since the examples are typically constructed in $\Fci$.

\subsection{Stopping Times}

The direct construction of $G$-expectations at stopping times is an unsolved problem. Indeed, stopping times are typically fairly irregular functions and it is unclear how to deal with this in the existing constructions (see also~\cite{LiPeng.09}). On the other hand, we can easily evaluate the c\`adl\`ag process $\cE(X)$ at a stopping time $\tau$ and therefore define the corresponding sublinear expectation at $\tau$. In particular, this leads to a definition of $G$-expectations at general stopping times. We show in this section that the resulting random variable $\cE_\tau(X)$ indeed has the expected properties and that the time consistency extends to arbitrary $\F$-stopping times; in other words, we prove an optional sampling theorem for $\cE$-martingales. Besides the obvious theoretical interest, the study of $\cE(X)$ at stopping times will allow us to verify integrability conditions of the type ``class~(D)''; cf.\ Lemma~\ref{le:classD} below.
We start by explaining the relations between the stopping times of the different filtrations.

\begin{lemma}\label{le:stopFields}
  (i)~Let $P\in\cP$ and let $\tau$ be an $\hF$-stopping time taking countably many values. Then there exists
  an $\Fci$-stopping time $\tau^\circ$ (depending on $P$) such that $\tau=\tau^\circ$ $P$-a.s. Moreover, for any such $\tau^\circ$, the $\sigma$-fields $\hcF_\tau$ and $\cFci_{\tau^\circ}$ differ only by $P$-nullsets.

  (ii)~Let $\tau$ be an $\hF$-stopping time. Then there exists
  an $\F^+$-stopping time $\tau^+$ such that $\tau=\tau^+$ $\cP$-q.s. Moreover, for any such $\tau^+$,  the $\sigma$-fields $\hcF_\tau$ and $\cF^+_{\tau^+}$ differ only by $\cP$-polar sets.
\end{lemma}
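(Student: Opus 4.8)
The plan is to handle the two parts separately, in each case constructing the stopping time of the coarser filtration by an explicit infimum and then comparing the associated stopping-time $\sigma$-fields value-by-value. Two comparisons of the underlying fields will be used repeatedly. First, by Lemma~\ref{le:MRPandVersions} and the remark following it, right continuity of $\overline{\Fci}^P$ gives $\hcF_s\subseteq\overline{\cFci_s}^P$, so that $\hcF_s$ and $\cFci_s$ differ only by $P$-nullsets, for each $s$ and each $P\in\cP$. Second, at the level of $\F^+$ versus $\hF$, the fields $\hcF_s$ and $\cF^+_s$ differ only by $\cP$-polar sets, by the very definition $\hcF_s=\cF^+_s\vee\cN^\cP$ of the augmentation.

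For part (i), enumerate the values of $\tau$ as $\{s_k\}_{k\ge1}$. Since $\{\tau=s_k\}\in\hcF_{s_k}$, the first comparison lets me choose $A_k\in\cFci_{s_k}$ with $P(A_k\triangle\{\tau=s_k\})=0$, and I set $\tau^\circ:=\inf\{s_k:\omega\in A_k\}$ with the convention $\inf\emptyset=T$. As $\{\tau^\circ\le t\}=\bigcup_{k:s_k\le t}A_k\in\cFci_t$, this $\tau^\circ$ is an $\Fci$-stopping time; and because the $\{\tau=s_k\}$ partition $\Omega$, outside a $P$-nullset exactly one $A_k$ contains $\omega$, whence $\tau^\circ=\tau$ $P$\as. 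For the field statement I use $\hcF_\tau=\{A:A\cap\{\tau=s_k\}\in\hcF_{s_k}\ \forall k\}$ and the analogous description of $\cFci_{\tau^\circ}$. Given $A\in\hcF_\tau$, I approximate $A\cap\{\tau=s_k\}\in\hcF_{s_k}$ by some $C_k\in\cFci_{s_k}$ up to a $P$-nullset and put $A':=\bigcup_k\big(C_k\cap\{\tau^\circ=s_k\}\big)\in\cFci_{\tau^\circ}$; since $\{\tau=s_k\}=\{\tau^\circ=s_k\}$ $P$\as, $A$ and $A'$ agree up to a $P$-nullset. The reverse inclusion is symmetric, using $\cFci_{s_k}\subseteq\hcF_{s_k}$.

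For part (ii), the relevant comparison is between the right continuous filtration $\F^+$ and its $\cP$-polar augmentation $\hF$. For each rational $q$ one has $\{\tau<q\}\in\hcF_q$, so I may pick $B_q\in\cF^+_q$ equal to $\{\tau<q\}$ up to a $\cP$-polar set; replacing $B_q$ by $\hat B_q:=\bigcup_{r\in\Q,\,r\le q}B_r\in\cF^+_q$ makes the family nondecreasing while preserving $\hat B_q=\{\tau<q\}$ $\cP$\qs. I then set $\tau^+:=\inf\{q\in\Q:\omega\in\hat B_q\}$; right continuity of $\F^+$ shows $\tau^+$ is an $\F^+$-stopping time, and on the complement of the countable (hence $\cP$-polar) union of the exceptional sets one reads off $\tau^+=\tau$ $\cP$\qs. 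Since $\hF$ contains all $\cP$-polar sets, two $\hF$-stopping times agreeing $\cP$-quasi-surely have identical $\hF$-stopping-time $\sigma$-fields, so $\hcF_\tau=\hcF_{\tau^+}$ and it remains only to prove $\hcF_{\tau^+}=\cF^+_{\tau^+}\vee\cN^\cP$ for the $\F^+$-stopping time $\tau^+$.

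This last identity is the main obstacle and the only genuinely technical point. I would first establish it for stopping times taking countably many values, exactly as in the field part of~(i) but with $\cP$-polar sets in place of $P$-nullsets and $\hcF_q=\cF^+_q\vee\cN^\cP$ in place of the first comparison. For a general $\F^+$-stopping time $\sigma=\tau^+$ I pass to the dyadic approximations $\sigma_n:=(\lfloor2^n\sigma\rfloor+1)2^{-n}\wedge T$, which decrease to $\sigma$ and, by right continuity, satisfy $\cF^+_\sigma=\bigcap_n\cF^+_{\sigma_n}$ and $\hcF_\sigma=\bigcap_n\hcF_{\sigma_n}$. Given $A\in\hcF_\sigma$ I obtain $A_n\in\cF^+_{\sigma_n}$ with $A\triangle A_n$ $\cP$-polar and pass to $A':=\bigcap_m\bigcup_{n\ge m}A_n$, which lies in every $\cF^+_{\sigma_m}$ (since $A_n\in\cF^+_{\sigma_n}\subseteq\cF^+_{\sigma_m}$ for $n\ge m$), hence in $\cF^+_\sigma$, and which differs from $A$ only by the $\cP$-polar set $\bigcup_n(A\triangle A_n)$. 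Organizing this limiting argument and verifying the two intersection identities is where the real work lies; the constructions of $\tau^\circ$ and $\tau^+$ themselves are routine once the nullset and polar-set comparisons of the component fields are in hand.
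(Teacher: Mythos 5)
Your overall strategy coincides with the paper's: approximate the level sets of $\tau$ by sets of the coarser filtration (via $\hF\subseteq\overline{\Fci}^P$ from Lemma~\ref{le:MRPandVersions} in part~(i), via $\hcF_t=\cF^+_t\vee\cN^\cP$ in part~(ii)), assemble a stopping time from these sets, and compare the stopping-time $\sigma$-fields value by value. Part~(ii) is sound: your construction of $\tau^+$ from the sets $\{\tau<q\}$, $q$ rational, is a clean alternative to the paper's reduction to countably many values, and your $\limsup$-argument $A':=\bigcap_m\bigcup_{n\ge m}A_n$ in fact supplies the detail behind the paper's terse ``we may assume without loss of generality that $\tau$ takes countably many values.'' Likewise, your direct value-by-value comparison of $\hcF_\tau$ and $\cFci_{\tau^\circ}$ replaces the paper's device of encoding a set $A\in\hcF_\tau$ by the stopping time $\tau_A=\tau\1_A+T\1_{A^c}$ and works equally well.

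There is, however, a genuine (though fixable) gap in part~(i): the identity $\{\tau^\circ\le t\}=\bigcup_{k:s_k\le t}A_k$ is false in general, so your conclusion that $\tau^\circ$ is an $\Fci$-stopping time does not follow. The sets $A_k$ agree with $\{\tau=s_k\}$ only up to $P$-nullsets, so on the exceptional nullset a path $\omega$ may lie in infinitely many $A_k$; if the corresponding values accumulate from above at some $s$ (e.g.\ value set $\{1/k\}$) and no $A_j$ with $s_j=s$ contains $\omega$, then the infimum is not attained: $\tau^\circ(\omega)=s$ while $\omega\notin\bigcup_{k:s_k\le s}A_k$. What is true is only $\{\tau^\circ<t\}=\bigcup_{k:s_k<t}A_k\in\cFci_t$, i.e.\ $\tau^\circ$ is an optional time; since $\Fci$ is the raw, non-right-continuous filtration, this is strictly weaker than being an $\Fci$-stopping time, which is exactly what the lemma (and its later use in the paper) requires, and the failure occurs precisely on the nullset where $\cFci_t$, being incomplete, offers no help. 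The cure is to disjointify before assigning values: with the $s_k$ distinct, replace $A_k$ by $B_k:=A_k\setminus\bigcup_{j:s_j<s_k}A_j\in\cFci_{s_k}$, which still equals $\{\tau=s_k\}$ $P$\as, and set $\tau^\circ:=\sum_k s_k\1_{B_k}+T\1_{(\cup_k B_k)^c}$; since every path now carries an attained value, $\{\tau^\circ\le t\}=\bigcup_{k:s_k\le t}B_k\in\cFci_t$ for $t<T$, and this is in effect the paper's construction. A last, minor point: for the ``any such $\tau^\circ$'' clause, an arbitrary $\Fci$-stopping time with $\tau^\circ=\tau$ $P$\as{} may take values outside $\{s_k\}$ on the nullset, so the ``analogous description'' of $\cFci_{\tau^\circ}$ you invoke is not available; but your formula $A':=\bigcup_k\big(C_k\cap\{\tau^\circ=s_k\}\big)$ still lies in $\cFci_{\tau^\circ}$, because $C_k,\{\tau^\circ=s_k\}\in\cFci_{s_k}$, and equals $A$ $P$\as, so only the justification, not the construction, needs rewording.
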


\begin{proof}
  (i)~Note that $\tau$ is of the form $\tau=\sum_i t_i \1_{\Lambda_i}$ for $\Lambda_i=\{\tau=t_i\}\in \hcF_{t_i}$ forming a partition of $\Omega$. Since $\hF\subseteq \overline{\Fci}^P$ by Lemma~\ref{le:MRPandVersions},
  we can find $\Lambda^\circ_i\in \cFci_{t_i}$ such that $\Lambda_i=\Lambda^\circ_i$ $P$-a.s.\ and the first assertion follows by taking
  \[
    \tau^\circ :=T\1_{(\cup_i \Lambda^\circ_i)^c}+ \sum_i t_i \1_{\Lambda^\circ_i}.
  \]

  Let $A\in\hcF_\tau$. By the first part,
  there exists an $\Fci$-stopping time $(\tau_A)^\circ$ such that $(\tau_A)^\circ=\tau_A:=\tau\1_A + T \1_{A^c}$ $P$-a.s. Moreover, we choose
  $A'\in\cFci_T$ such that $A=A'$ $P$-a.s. Then
  \[
    A^\circ :=\big( A'\cap\{ \tau^\circ =T\}\big) \cup \{(\tau_A)^\circ = \tau^\circ < T\}
  \]
  satisfies $A^\circ \in \cFci_{\tau^\circ}$ and $A=A^\circ$ $P$-a.s. A similar but simpler argument shows that
  for given $\Lambda\in\cFci_{\tau^\circ}$ we can find $\Lambda'\in\hcF_\tau$ such that $\Lambda=\Lambda'$ $P$-a.s.

  (ii)~If $\tau$ is an $\hF$- (resp. $\F^+$-) stopping time, we can find $\tau^n$ taking countably many values such that $\tau^n$ decreases to $\tau$ and since $\hF$ ($\F^+$) is right continuous, $\hcF_{\tau^n}$ ($\cF^+_{\tau^n}$) decreases to $\hcF_{\tau}$ ($\cF^+_{\tau}$).
  As a result, we may assume without loss of generality that $\tau$ takes countably many values.

  Let $\tau=\sum_i t_i \1_{\Lambda_i}$, where $\Lambda_i\in \hcF_{t_i}$. The definition of $\hF$ shows that there exist $\Lambda^+_i\in \cF^+_{t_i}$ such that $\Lambda_i=\Lambda^+_i$ $\cP$-q.s.\ and the first part follows. The proof of the second part is as in (i); we now have quasi-sure instead of almost-sure relations.
\end{proof}

If $\sigma$ is a stopping time taking finitely many values $(t_i)_{1\leq i\leq N}$, we can define $\cE^\circ_\sigma(X) :=\sum_{i=1}^N \cE^\circ_{t_i}(X) \1_{\{\sigma=t_i\}}$. We have the following generalization of~\eqref{eq:ErawAggreg}.

\begin{lemma}\label{le:rawStopping}
  Let $\sigma$ be an $\Fci$-stopping time taking finitely many values.
  Then
  \[
    \cE^\circ_\sigma(X) = \mathop{\esssup^P}_{P'\in \cP(\cFci_\sigma,P)} E^{P'}[X|\cFci_\sigma]\quad \quad P\as\quad\mbox{for all } P\in\cP.
  \]
\end{lemma}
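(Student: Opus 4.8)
The plan is to reduce everything to the already-established identity \eqref{eq:ErawAggreg} at the finitely many deterministic times $t_1<\dots<t_N$ attained by $\sigma$. The basic algebraic fact is that on each atom $\{\sigma=t_i\}\in\cFci_{t_i}$ the trace of $\cFci_\sigma$ coincides with that of $\cFci_{t_i}$; hence for any $P'\in\cP$,
\[
  E^{P'}[X|\cFci_\sigma]=\sum_{i=1}^N \1_{\{\sigma=t_i\}}\,E^{P'}[X|\cFci_{t_i}]\quad P'\as,
\]
while by definition $\cE^\circ_\sigma(X)=\sum_{i=1}^N \1_{\{\sigma=t_i\}}\cE^\circ_{t_i}(X)$. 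Thus it suffices to compare the two sides atom by atom, and I would prove the two inequalities separately.

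For ``$\leq$'', fix $P'\in\cP(\cFci_\sigma,P)\subseteq\cP$ and apply \eqref{eq:ErawAggreg} \emph{with the measure $P'$} at each $t_i$. Keeping the trivial competitor $P''=P'$ gives $E^{P'}[X|\cFci_{t_i}]\leq\cE^\circ_{t_i}(X)$ $P'$\as, and summing over the atoms yields $E^{P'}[X|\cFci_\sigma]\leq\cE^\circ_\sigma(X)$ $P'$\as. Since both sides are $\cFci_\sigma$-measurable, the exceptional set lies in $\cFci_\sigma$; as $P'=P$ on $\cFci_\sigma$, the inequality persists $P$\as. Taking the essential supremum over $P'\in\cP(\cFci_\sigma,P)$ gives the inequality ``$\leq$''.

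For the converse ``$\geq$'', the difficulty is that the natural approximating measures for $\cE^\circ_{t_i}(X)$ supplied by Lemma~\ref{le:increasingSequence} lie in $\cP(\cFci_{t_i},P)$, which is \emph{not} contained in the constrained set $\cP(\cFci_\sigma,P)$; this is exactly where stability under pasting enters. For each $i$ I would choose $P^i_n\in\cP(\cFci_{t_i},P)$ with $E^{P^i_n}[X|\cFci_{t_i}]\uparrow\cE^\circ_{t_i}(X)$ $P$\as, and then paste these measures along the partition $\{\sigma=t_i\}$, i.e. set
\[
  \bar P_n(A):=\sum_{i=1}^N E^P\big[\1_{\{\sigma=t_i\}}\,P^i_n(A|\cFci_{t_i})\big],\quad A\in\cFci_T.
\]
Iterating the binary pasting of Definition~\ref{def:stableUnderPasting} over the atoms $\{\sigma=t_i\}\in\cFci_{t_i}$ shows $\bar P_n\in\cP$; a direct computation gives $\bar P_n=P$ on $\cFci_\sigma$, so $\bar P_n\in\cP(\cFci_\sigma,P)$, together with $E^{\bar P_n}[X|\cFci_\sigma]=\sum_i\1_{\{\sigma=t_i\}}E^{P^i_n}[X|\cFci_{t_i}]$. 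Transferring this $\cFci_\sigma$-measurable identity from $\bar P_n$ to $P$ as in the ``$\leq$'' step and letting $n\to\infty$ then yields $\mathop{\esssup^P}_{P'\in\cP(\cFci_\sigma,P)}E^{P'}[X|\cFci_\sigma]\geq\cE^\circ_\sigma(X)$ $P$\as.

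The main obstacle is this pasting step: one must produce a \emph{single} measure in $\cP(\cFci_\sigma,P)$ that simultaneously realizes the approximation $E^{P^i_n}[X|\cFci_{t_i}]\approx\cE^\circ_{t_i}(X)$ on every atom $\{\sigma=t_i\}$, rather than working with the $\cP(\cFci_{t_i},P)$ separately. Verifying that the finite pasting $\bar P_n$ remains in $\cP$ (by repeated use of the binary Definition~\ref{def:stableUnderPasting}, pasting at the $t_i$ in decreasing order so that each competitor lies in the appropriate $\cP(\cFci_{t_i},\cdot)$), that it coincides with $P$ on $\cFci_\sigma$, and that its conditional expectation splits as claimed is the technical core; the remaining a.s./q.s.\ bookkeeping is routine.
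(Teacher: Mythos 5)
Your proof is correct, but in the harder direction it follows a genuinely different (and heavier) route than the paper. For the direction $\mathop{\esssup^P}_{P'\in\cP(\cFci_\sigma,P)}E^{P'}[X|\cFci_\sigma]\leq\cE^\circ_\sigma(X)$, you argue directly from \eqref{eq:ErawAggreg} under each $P'$ with the trivial competitor $P''=P'$ and transfer the $\cFci_\sigma$-measurable exceptional set from $P'$ to $P$; the paper instead observes that \eqref{eq:timeConsEraw} makes $\{\cE^\circ_{t_i}(X)\}_i$ a discrete-time $P'$-supermartingale and quotes the optional sampling theorem of \cite{DellacherieMeyer.82} --- your version is more self-contained and equally valid. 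For the converse, the paper does \emph{not} construct one measure handling all atoms simultaneously: it fixes an atom $\Lambda_i=\{\sigma=t_i\}$ and, for each $P'\in\cP(\cFci_{t_i},P)$, performs a \emph{single} binary pasting $\bar P(A):=P'(A\cap\Lambda_i)+P(A\setminus\Lambda_i)$ (Definition~\ref{def:stableUnderPasting} with $P_2:=P$), checks $\bar P\in\cP(\cFci_\sigma,P)$, and deduces $\cE^\circ_{t_i}(X)\1_{\Lambda_i}\leq\mathop{\esssup^P}_{\bar P\in\cP(\cFci_\sigma,P)}E^{\bar P}[X|\cFci_\sigma]\1_{\Lambda_i}$ directly from \eqref{eq:ErawAggreg}, with no approximating sequences and no $N$-fold pasting: since the essential supremum can be compared atom by atom, no single measure need realize the approximation on every atom at once, so your claim that one ``must'' do so is overstated. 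That said, your construction is sound: pasting the $P^i_n\in\cP(\cFci_{t_i},P)$ in decreasing order of the $t_i$ keeps each intermediate measure equal to $P$ on the relevant $\sigma$-field (so each binary step is a legitimate instance of Definition~\ref{def:stableUnderPasting} with pasting time $t_i$ and $\Lambda:=\Lambda_i\in\cFci_{t_i}$), the resulting $\bar P_n$ agrees with $P$ on $\cFci_\sigma$, the conditional expectation splits as you state, and the monotone limit from Lemma~\ref{le:increasingSequence} yields the inequality. What your route buys, at the cost of this bookkeeping, is slightly more than the lemma asserts: a single increasing sequence $\bar P_n\in\cP(\cFci_\sigma,P)$ with $E^{\bar P_n}[X|\cFci_\sigma]\uparrow\cE^\circ_\sigma(X)$ $P$-a.s., a conclusion of the same flavor as \eqref{eq:DPPStopIncreasingSeq}, which the paper's atomwise argument does not produce at this stage.
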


\begin{proof} Let $P\in\cP$ and $Y^\circ_t:=\cE^\circ_t(X)$. Moreover, let $(t_i)_{1\leq i\leq N}$ be the values of $\sigma$ and $\Lambda_i:=\{\sigma=t_i\}\in\cFci_{t_i}$.

  (i)~We first prove the inequality ``$\geq$''. Given $P'\in \cP$, it follows from~\eqref{eq:timeConsEraw} that
  $\{Y^\circ_{t_i}\}_{1\leq i\leq N}$ is a $P'$-supermartingale in $(\cFci_{t_i})_{1\leq i\leq N}$ and so the (discrete-time) optional sampling theorem~\cite[Theorem~V.11]{DellacherieMeyer.82} implies $Y^\circ_\sigma\geq E^{P'}[X|\cFci_\sigma]$ $P'$-a.s. In particular, this also holds $P$-a.s.\ for all $P'\in\cP(\cFci_\sigma,P)$, hence the claim follows.

  (ii)~We now show the inequality ``$\leq$''. Note that $\sigma=\sum_{i=1}^N t_i\1_{\Lambda_i}$ and that $(\Lambda_i)_{1\leq i\leq N}$ form an $\cFci_\sigma$-measurable partition of $\Omega$. It suffices to show that
  \[
    Y^\circ_{t_i}\1_{\Lambda_i} \leq \mathop{\esssup^P}_{P'\in \cP(\cFci_\sigma,P)} E^{P'}[X|\cFci_\sigma]\1_{\Lambda_i}\quad P\as\quad\mbox{for }1\leq i \leq N.
  \]
  In the sequel, we fix $i$ and show that for each $P'\in\cP(\cFci_{t_i},P)$ there exists $\bar{P}\in\cP(\cFci_\sigma,P)$ such that
  \begin{equation}\label{eq:pastingReq}
    \bar{P}(A\cap \Lambda_i)=P'(A\cap \Lambda_i)\quad\mbox{for all } A\in \cFci_T.
  \end{equation}
  In view of~\eqref{eq:ErawAggreg} and $E^{P'}[X|\cFci_\sigma]\1_{\Lambda_i}=E^{P'}[X|\cFci_{t_i}]\1_{\Lambda_i}$ $P'$-a.s.,
  it will then follow that
  \[
    Y^\circ_{t_i}\1_{\Lambda_i}
    = \mathop{\esssup^P}_{P'\in \cP(\cFci_{t_i},P)} E^{P'}[X\1_{\Lambda_i}|\cFci_{t_i}]
    \leq \mathop{\esssup^P}_{\bar{P}\in \cP(\cFci_\sigma,P)} E^{\bar{P}}[X\1_{\Lambda_i}|\cFci_{\sigma}]\quad P\as
  \]
  as claimed. Indeed, given $P'\in\cP(\cFci_{t_i},P)$, we define
  \begin{equation}\label{eq:pastingStopProof}
    \bar{P}(A):=P'(A\cap \Lambda_i)+ P(A\setminus \Lambda_i),\quad A\in\cFci_T,
  \end{equation}
  then~\eqref{eq:pastingReq} is obviously satisfied.
  If $\Lambda\in\cFci_\sigma$, then $\Lambda\cap \Lambda_i=\Lambda\cap \{\sigma=t_i\}\in\cFci_{t_i}$ and $P'\in\cP(\cFci_{t_i},P)$ yields
  $P'(\Lambda\cap \Lambda_i)=P(\Lambda\cap \Lambda_i)$. Hence $\bar{P}=P$ on $\cFci_\sigma$.
  Moreover, we observe that~\eqref{eq:pastingStopProof} can be stated as
  \[
    \bar{P}(A)=E^P\big[P'(A|\cFci_{t_i})\1_{\Lambda_i} + P(A|\cFci_{t_i})\1_{\Lambda_i^c}\big],\quad A\in \cFci_T,
  \]
  which is a special case of the pasting~\eqref{eq:defStableUnderPasting} applied with $P_2:=P$. Hence
  $\bar{P}\in\cP$ by Assumption~\ref{as:pastingAndAggreg} and we have $\bar{P}\in\cP(\cFci_\sigma,P)$ as desired.
\end{proof}

For the next result, we recall that stability under pasting refers to stopping times with finitely many values rather than general ones (Definition~\ref{def:stableUnderPasting}).

\begin{lemma}\label{le:pastingInF}
  The set $\cP$ is stable under $\F$-pasting.
\end{lemma}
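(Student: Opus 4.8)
The plan is to reduce stability under $\F$-pasting to the assumed stability under $\Fci$-pasting (Assumption~\ref{as:pastingAndAggreg}(ii)), using the dictionary between $\hF$- and $\Fci$-stopping times from Lemma~\ref{le:stopFields}. Fix $\F$-pasting data: $P\in\cP$, an $\hF$-stopping time $\tau$ with finitely many values $t_1<\dots<t_N$, a set $\Lambda\in\hcF_\tau$, and $P_1,P_2\in\cP(\hcF_\tau,P)$; let $\bar P$ be the associated pasting. First I would record an explicit form for $\bar P$: since $P_1=P_2=P$ on $\hcF_\tau$ and $\Lambda\in\hcF_\tau$, the integrand $E^{P_1}[\1_A|\hcF_\tau]$ is $\hcF_\tau$-measurable, so $\int_\Lambda E^{P_1}[\1_A|\hcF_\tau]\,dP=\int_\Lambda E^{P_1}[\1_A|\hcF_\tau]\,dP_1=P_1(A\cap\Lambda)$, and likewise for the $P_2$-term. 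Hence $\bar P(A)=P_1(A\cap\Lambda)+P_2(A\cap\Lambda^c)$, and the goal becomes to exhibit this measure as a \emph{legal} $\Fci$-pasting.

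The first genuine step is to move from the augmented filtration to the raw right-continuous one. Using Lemma~\ref{le:stopFields}(ii) I would choose an $\F^+$-stopping time $\tau^+$ with $\tau=\tau^+$ $\cP$-q.s.\ and a set $\Lambda^+\in\cF^+_{\tau^+}$ with $\Lambda=\Lambda^+$ $\cP$-q.s. Because these agree quasi-surely and polar sets are null under every measure, $\bar P$ is unchanged when $\tau,\Lambda$ are replaced by $\tau^+,\Lambda^+$. Moreover $\cF^+_{\tau^+}$ and $\hcF_\tau$ differ only by $\cP$-polar sets, so $P_1,P_2\in\cP(\cF^+_{\tau^+},P)$ and $\bar P$ is now a legal $\F^+$-pasting. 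Thus it suffices to show that stability under $\Fci$-pasting implies stability under $\F^+$-pasting.

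For the passage from $\F^+$ to $\Fci$ I would decompose along the level sets $\Lambda^+_i:=\{\tau^+=t_i\}\in\cF^+_{t_i}=\cFci_{t_i+}$ and realize $\bar P$ as a telescoping sequence of pastings, the $i$-th acting only on $\Lambda_i^+$ while keeping the measure committed at the previous stages on $(\Lambda_i^+)^c$. Each such modification I would produce by a single $\Fci$-pasting at a finite-valued $\Fci$-stopping time that equals $t_i$ on a raw version of $\{\tau^+=t_i\}$ and stays at an already-committed earlier value elsewhere, so that the change is felt only where $\tau^+=t_i$. On that event the three relevant measures coincide on $\cF^+_{t_i}$ (they agree on $\cF^+_{\tau^+}$), so the germ field $\cFci_{t_i+}$ may there be replaced by the raw field $\cFci_{t_i}$: by right continuity of $\overline{\Fci}^{P}$ (Lemma~\ref{le:MRPandVersions}) each $\cFci_{t_i+}$-set has a $\cFci_{t_i}$-version $P$-a.s., and since the measures agree on $\cFci_{t_i}\subseteq\cF^+_{t_i}$ traced to $\{\tau^+=t_i\}$, the per-measure versions can be reconciled into one. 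Each stage is then a bona fide, legal $\Fci$-pasting, and Assumption~\ref{as:pastingAndAggreg}(ii) keeps the outcome in $\cP$; the final measure is $\bar P$.

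The main difficulty is precisely this last transfer. Because $P_1,P_2$ coincide with $P$ only on $\hcF_\tau$—and $\cFci_{t_i}$ is \emph{not} contained in $\hcF_\tau$ for $i\ge2$—one cannot paste globally at an interior deterministic time $t_i$: legality would fail off $\{\tau=t_i\}$, and, as the germ set~\eqref{eq:SetForCouterexamples} shows, the right-continuous field $\cFci_{t_i+}$ genuinely separates volatility scenarios, so no single raw $\cFci_{t_i}$-set can represent a germ event simultaneously under mutually singular measures. The plan therefore never asks for agreement away from the level set being pasted: localizing to $\{\tau=t_i\}$, where the time is exactly $t_i$ so that $\hcF_\tau$ and $\hcF_{t_i}$ have the same trace and all three measures agree, is what makes the reconciliation of raw versions possible and the intermediate $\Fci$-pastings legal. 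Arranging the telescoping so that every intermediate pasting is simultaneously localized and legal is the technical heart of the argument.
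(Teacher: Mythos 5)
Your reduction steps are sound and, in substance, they follow the paper's route: the explicit formula $\bar P(A)=P_1(A\cap\Lambda)+P_2(A\cap\Lambda^c)$, the passage from $\hF$ to $\F^+$ via Lemma~\ref{le:stopFields}(ii), and above all the key insight that raw versions must be chosen so that the discrepancy stays \emph{inside} the level set being processed. Indeed, any $\cF^+_{t_i}$-measurable subset of $\{\tau^+=t_i\}$ lies in $\cF^+_{\tau^+}$, where $P$, $P_1$, $P_2$ agree; hence a single \emph{inner} raw approximation $\Lambda^\circ_i\in\cFci_{t_i}$ with $\Lambda^\circ_i\subseteq\{\tau^+=t_i\}$ and $P$-null discrepancy is automatically a version under $P_1$ and $P_2$ as well. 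This is precisely the mechanism in the paper's proof, which writes $A\cap\Lambda_i=A_i\cup N_i$ with $A_i\in\cFci_{t_i}$ and emphasizes that no further nullset need be subtracted; the inner choice is also what makes each $A_i$ automatically $\cF_\tau$-measurable.

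The gap is the telescoping mechanism itself, which you correctly flag as the ``technical heart'' but do not resolve — and as described it fails. If $\sigma_i$ equals $t_i$ on $\Lambda^\circ_i$ and $T$ elsewhere, then $\cFci_{\sigma_i}=\{D\in\cFci_T:\,D\cap\Lambda^\circ_i\in\cFci_{t_i}\}$, which contains \emph{every} raw event disjoint from $\Lambda^\circ_i$; legality of the stage in the sense of Definition~\ref{def:stableUnderPasting} would force $P_1$ (resp.\ $P_2$) to agree with the committed base measure on all such events, whereas they agree with $P$ only on $\hcF_\tau$. The same failure occurs if ``elsewhere'' means the already-committed values $t_j$, $j<i$: then $\cFci_{\sigma_i}$ still contains all raw subsets of the unprocessed level sets $\Lambda^\circ_k$, $k>i$, where the base is still $P$ but $P_1\neq P$ in general. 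Legality is a global constraint on $\cFci_\sigma$ and cannot be checked level set by level set. The resolution is that no telescoping is needed: assemble your inner versions into the single $\Fci$-stopping time $\tau^\circ:=T\1_{(\cup_i\Lambda^\circ_i)^c}+\sum_i t_i\1_{\Lambda^\circ_i}$, build a raw version $\Lambda^\circ\in\cFci_{\tau^\circ}$ of $\Lambda$ by the same inner trick applied to each $\Lambda\cap\Lambda_i$, and observe that $\cFci_{\tau^\circ}$ and $\cF_\tau$ then coincide up to sets that are null under every $P'\in\cP(\cF_\tau,P)$ (the discrepancies are $P$-nullsets belonging to $\cF_\tau$). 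This lets you rewrite $\bar P$ as $E^P\big[P_1(\,\cdot\,|\cFci_{\tau^\circ})\1_{\Lambda^\circ}+P_2(\,\cdot\,|\cFci_{\tau^\circ})\1_{(\Lambda^\circ)^c}\big]$, a single legal $\Fci$-pasting, and Assumption~\ref{as:pastingAndAggreg}(ii) finishes the proof — this one-shot transfer is exactly what the paper does.
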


\begin{proof}
  Let $\tau\in \cT(\F)$, then $\tau$ is of the form
  \[
    \tau=\sum_i t_i \1_{\Lambda_i},\quad \Lambda_i:=\{\tau=t_i\}\in \hcF_{t_i},
  \]
  where $t_i\in[0,T]$ are distinct and the sets $\Lambda_i$ form a partition of $\Omega$.
  Moreover, let $\Lambda\in \cF_\tau$ and $P_1,P_2\in\cP(\cF_\tau,P)$, then we have to show that the measure $E^P \big[P_1(\,\cdot\,|\cF_\tau)\1_\Lambda + P_2(\,\cdot\,|\cF_\tau)\1_{\Lambda^c}\big]$ is an element of $\cP$.

  (i) We start by proving that for any $A\in\cF_\tau$ there exists $A'\in\cFci_T\cap \cF_\tau$ such that $A=A'$ holds $\cP(\cF_\tau,P)$-q.s. Consider the disjoint union
  \[
    A = \bigcup_i (A\cap \Lambda_i).
  \]
  Here $A\cap \Lambda_i\in\cF_{t_i}$ since $A\in \cF_\tau$. As $\hF\subseteq \overline{\Fci}^P$ by Lemma~\ref{le:MRPandVersions},
  there exist a set $A_i\in \cF^\circ_{t_i}$ and a $P$-nullset $N_i$, disjoint from $A_i$, such that
  \begin{equation}\label{eq:constrCompletion}
    A\cap \Lambda_i = A_i \cup N_i.
  \end{equation}
  (It is \emph{not} necessary to subtract another nullset on the right hand side.) We define $A':=\cup_i A_i$, then $A'\in\cFci_T$ and clearly $A=A'$ $P$-a.s. Let us check that the latter also holds $\cP(\cF_\tau,P)$-q.s. For this, it suffices to show that $A'\in\cF_\tau$. Indeed, by the construction of~\eqref{eq:constrCompletion},
  \[
    A_i\cap \{\tau=t_j\}=
    \begin{cases}
      A_i\in\cFci_{t_i}\subseteq\cF_{t_i}, & i=j, \\
      \emptyset \in \cF_{t_j}, & j\neq i;
    \end{cases}
  \]
  i.e., each set $A_i$ is in $\cF_\tau$. Hence, $A'\in\cF_\tau$, which completes the proof of~(i).

  For later use, we define
  the $\Fci$-stopping time
  \[
    (\tau_A)^\circ:= T\1_{(A')^c} + \sum_i t_i\1_{A_i}
  \]
  and note that $(\tau_A)^\circ = \tau_A$ holds $\cP(\cF_\tau,P)$-q.s.

  (ii) Using the previous construction for $A=\Omega$, we see in particular that there exist $\Lambda^\circ_i\in\cFci_{t_i}$ such that $\Lambda^\circ_i=\Lambda_i$ holds $\cP(\cF_\tau,P)$-q.s. We also define the $\Fci$-stopping time
  \[
    \tau^\circ:= T\1_{(\cup_i \Lambda^\circ_i)^c} + \sum_i t_i\1_{\Lambda^\circ_i}
  \]
  which $\cP(\cF_\tau,P)$-q.s.\ satisfies $\tau^\circ=\tau$.

  (iii) We can now show that $\cFci_{\tau^\circ}$ and $\cF_{\tau}$ may be identified (when $P, P_1,P_2$ are fixed). Indeed, if $A\in\cF_\tau$, we let $A'$ be as in~(i) and set
  \[
    A^\circ :=\big( A'\cap\{ \tau^\circ =T\}\big) \cup \{(\tau_A)^\circ = \tau^\circ < T\}.
  \]
  Then $A^\circ\in \cFci_{\tau^\circ}$ and $A=A^\circ$ holds $\cP(\cF_\tau,P)$-q.s. Conversely, given $A^\circ\in \cFci_{\tau^\circ}$, we find
  $A\in\cF_\tau$ such that $A=A^\circ$ holds $\cP(\cF_\tau,P)$-q.s. We conclude that
  \begin{align*}
    E^P \big[P_1(\,\cdot\,|\cF_\tau)\1_\Lambda + P_2(\,\cdot\,|\cF_\tau)\1_{\Lambda^c}\big]
    = E^P \big[P_1(\,\cdot\,|\cFci_{\tau^\circ})\1_{\Lambda^\circ} + P_2(\,\cdot\,|\cFci_{\tau^\circ})\1_{(\Lambda^\circ)^c}\big].
  \end{align*}
  The right hand side is an element of $\cP$ by the stability under $\Fci$-pasting.
\end{proof}

We can now prove the optional sampling theorem for $\cE$-martingales; in particular, this establishes the $\F$-time-consistency of $\{\cE_t\}$ along general $\hF$-stopping times.

\begin{theorem}\label{th:DPPstop}
  Let $0\leq \sigma\leq \tau\leq T$ be stopping times, $X\in\cH$, and let
  $\cE(X)$ be the c\`adl\`ag $\cE$-martingale associated with $X$. Then
  \begin{equation}\label{eq:DPPStop}
    \cE_\sigma(X) = \mathop{\esssup^P}_{P'\in \cP(\hcF_\sigma,P)} E^{P'}[\cE_\tau(X)|\hcF_\sigma]\quad P\as\quad \mbox{for all } P\in\cP
  \end{equation}
  and in particular
  \begin{equation}\label{eq:DPPStopSpecial}
    \cE_\sigma(X) = \mathop{\esssup^P}_{P'\in \cP(\hcF_\sigma,P)} E^{P'}[X|\hcF_\sigma]\quad P\as\quad \mbox{for all } P\in\cP.
  \end{equation}
  Moreover, there exists for each $P\in\cP$ a sequence $P_n\in\cP(\hcF_\sigma,P)$ such that
  \begin{equation}\label{eq:DPPStopIncreasingSeq}
    \cE_\sigma(X) = \lim_{n\to\infty} E^{P_n}[X|\hcF_\sigma]\quad P\as
  \end{equation}
  with an increasing limit.
\end{theorem}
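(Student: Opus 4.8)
My plan is to prove the three assertions in stages: the inequality ``$\geq$'' in \eqref{eq:DPPStop} from the supermartingale structure of $\cE(X)$, then the full identity for $\sigma\in\cT(\hF)$ by a localized pasting argument, and finally the general case by approximating $\sigma$ from above. Throughout I fix $P\in\cP$ and set $M^P_t:=E^P[X\,|\,\hcF_t]$ (a càdlàg version); since $\cE_t(X)\geq M^P_t$ (take $P'=P$ in \eqref{eq:Yesssup}), the process $\cE(X)-M^P$ is a nonnegative càdlàg $(\hF,P)$-supermartingale. For ``$\geq$'', fix $P'\in\cP(\hcF_\sigma,P)$; then $\cE(X)-M^{P'}$ is a nonnegative càdlàg $(\hF,P')$-supermartingale, so optional sampling for nonnegative supermartingales (which needs no class~(D) hypothesis) together with the uniform integrability of $M^{P'}$ gives $E^{P'}[\cE_\tau(X)\,|\,\hcF_\sigma]\leq\cE_\sigma(X)$ $P'$-a.s.; as this event lies in $\hcF_\sigma$ where $P'=P$, it holds $P$-a.s., and taking the essential supremum over $P'$ yields ``$\geq$''. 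The special case $\tau=T$ (recall $\cE_T(X)=X$) handles \eqref{eq:DPPStopSpecial}.

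For $\sigma\in\cT(\hF)$ I would establish \eqref{eq:DPPStopSpecial} and \eqref{eq:DPPStopIncreasingSeq} by repeating the proof of Lemma~\ref{le:rawStopping} in the filtration $\hF$: writing $\sigma=\sum_i t_i\1_{\Lambda_i}$ with $\Lambda_i\in\hcF_{t_i}$, localizing to each $\Lambda_i$, and combining the representation \eqref{eq:Yesssup} of $\cE_{t_i}(X)$ at the deterministic time $t_i$ with the pasted measure $\bar P(A)=P'(A\cap\Lambda_i)+P(A\setminus\Lambda_i)$, which belongs to $\cP$ by stability under $\F$-pasting (Lemma~\ref{le:pastingInF}). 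Upward filtering of $\{E^{P'}[X\,|\,\hcF_\sigma]\}$, again via Lemma~\ref{le:pastingInF}, then yields the increasing sequence by the argument of Lemma~\ref{le:increasingSequence}.

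For general $\sigma$, choose $\sigma_n\downarrow\sigma$ in $\cT(\hF)$, so that $\hcF_{\sigma_n}\downarrow\hcF_\sigma$ and $\cP(\hcF_{\sigma_n},P)\subseteq\cP(\hcF_\sigma,P)$. Right continuity of the paths gives $\cE_{\sigma_n}(X)\to\cE_\sigma(X)$ $P$-a.s., and since $(\cE_{\sigma_n}(X))_n$ is a backward supermartingale for the decreasing filtration $(\hcF_{\sigma_n})_n$ that is bounded in $L^1(P)$ (it lies above the uniformly integrable martingale $M^P$, and its expectation is dominated by $\cE_0(X)$), this convergence also holds in $L^1(P)$ by the backward supermartingale convergence theorem. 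From the finitely-valued case and monotone convergence, $E^P[\cE_{\sigma_n}(X)]=\sup_{Q\in\cP(\hcF_{\sigma_n},P)}E^Q[X]$; and for each such $Q$ the supermartingale property gives $E^Q[X]\leq E^Q[\cE_\tau(X)]=E^P\!\big[E^Q[\cE_\tau(X)\,|\,\hcF_\sigma]\big]\leq E^P[R^\tau]$, where $R^\tau$ denotes the right-hand side of \eqref{eq:DPPStop}. Passing to the limit gives $E^P[\cE_\sigma(X)]\leq E^P[R^\tau]$, and together with the pointwise bound $R^\tau\leq\cE_\sigma(X)$ from the first step this forces $\cE_\sigma(X)=R^\tau$ $P$-a.s.; taking $\tau=T$ gives \eqref{eq:DPPStopSpecial}.

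I expect the main obstacle to be \eqref{eq:DPPStopIncreasingSeq} for a $\sigma$ that is not finitely valued. The expectation argument above only delivers a sequence $Q_j\in\cP(\hcF_\sigma,P)$ with $E^{Q_j}[X\,|\,\hcF_\sigma]\to\cE_\sigma(X)$ in $L^1(P)$, whereas a genuinely \emph{increasing} sequence requires the family $\{E^{P'}[X\,|\,\hcF_\sigma]:P'\in\cP(\hcF_\sigma,P)\}$ to be upward filtering, i.e.\ stability of $\cP$ under $\F$-pasting at the general stopping time $\sigma$. This lies beyond Lemma~\ref{le:pastingInF}, and the difficulty is intrinsic: under the approximation $\sigma_n\downarrow\sigma$ the conditioning field changes, and measures in $\cP(\hcF_\sigma,P)$ need not agree with $P$ on the finer fields $\hcF_{\sigma_n}$, so the finitely-valued pasting cannot be applied to them directly. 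I would attack this by realizing the pasted measure at $\sigma$ through the $\bar\alpha$-construction of Lemma~\ref{le:strongMeasuresStable}, which is valid at arbitrary stopping times, and then verifying membership in $\cP$ — this last verification being the crux of the argument.
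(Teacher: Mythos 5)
Your treatment of \eqref{eq:DPPStop} and \eqref{eq:DPPStopSpecial} is correct, and in part genuinely different from the paper's. For finitely-valued $\sigma$ you rerun Lemma~\ref{le:rawStopping} directly in $\hF$ using Lemma~\ref{le:pastingInF}, whereas the paper stays in the raw filtration and approximates $\sigma$ from above by $\Fci$-stopping times $\sigma^n$ (using that $\sigma+1/n$ strictly dominates the $\F^+$-stopping time $\sigma$), then passes to the limit by backward supermartingale convergence; both routes are sound. More interestingly, your expectation argument for general $\sigma$ --- combining $E^P[\cE_\sigma(X)]\leq E^P[R^\tau]$, obtained through $E^P[\cE_{\sigma_n}(X)]=\sup_{Q\in\cP(\hcF_{\sigma_n},P)}E^Q[X]$ and $L^1$-convergence, with the pointwise bound $R^\tau\leq \cE_\sigma(X)$ --- delivers \eqref{eq:DPPStop} for general $\tau$ \emph{without} first establishing the increasing sequence. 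The paper proceeds in the opposite order: it proves \eqref{eq:DPPStopIncreasingSeq} first and then deduces \eqref{eq:DPPStop} by monotone convergence, exactly as in Proposition~\ref{pr:pastingAndTimeconsistency}(i). So for the first two assertions your route is a legitimate and arguably more economical alternative (modulo routine integrability checks, which hold since $\cE(X)$ is a right-closed supermartingale).

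However, \eqref{eq:DPPStopIncreasingSeq} for general $\sigma$ remains a genuine gap, as you acknowledge, and your proposed repair would not close it: pasting at the arbitrary stopping time $\sigma$ via the $\bar\alpha$-construction of Lemma~\ref{le:strongMeasuresStable} only yields $\bar{P}\in\cP_S$, and Assumption~\ref{as:pastingAndAggreg} provides stability of the subset $\cP$ solely at stopping times with finitely many values --- there is no mechanism in the axiomatic framework to verify $\bar{P}\in\cP$ for a paste at general $\sigma$, which is precisely why the paper never attempts it. The key observation you are missing is that \emph{no pasting at $\sigma$ is needed}: the conditioning time and the pasting time can be decoupled. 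Retain the double family $P^n_k\in\cP(\cF_{\sigma^n},P)$ produced by the approximation, with $Y_\sigma=\lim_n\lim_k E^{P^n_k}[X|\hcF_\sigma]$ $P$-a.s., and note the inclusions $\cP(\cF_{\sigma^n},P)\subseteq\cP(\cF_{\sigma^{n+1}},P)\subseteq\cP(\hcF_\sigma,P)$. For fixed $N$, the family $\{E^{P'}[X|\hcF_\sigma]:\,P'\in\cP(\cF_{\sigma^N},P)\}$ is $P$-a.s.\ upward filtering: the comparison set $\Lambda=\{E^{P_1}[X|\hcF_\sigma]>E^{P_2}[X|\hcF_\sigma]\}$ belongs to $\hcF_\sigma\subseteq\cF_{\sigma^N}$, so two measures in $\cP(\cF_{\sigma^N},P)$ may be pasted at the \emph{finitely-valued} stopping time $\sigma^N$ (Lemma~\ref{le:pastingInF}), and the pasted measure still agrees with $P$ on $\hcF_\sigma$ while its conditional expectation given $\hcF_\sigma$ realizes the maximum. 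Taking successive maxima over $\{P^n_k:\,n\leq N,\ k\leq n\}$ inside $\cP(\cF_{\sigma^N},P)$ produces $P^{(N)}\in\cP(\hcF_\sigma,P)$ with $E^{P^{(N)}}[X|\hcF_\sigma]$ increasing $P$-a.s.\ to $\cE_\sigma(X)$, which is exactly \eqref{eq:DPPStopIncreasingSeq}.
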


\begin{proof}
  Fix $P\in\cP$ and let $Y:=\cE(X)$.

  (i)~We first show the inequality ``$\geq$'' in~\eqref{eq:DPPStopSpecial}.
  By Proposition~\ref{pr:extension}(i), $Y$ is an $(\hF,P')$-supermartingale for all $P'\in\cP(\hcF_\sigma,P)$. Hence the (usual) optional sampling theorem implies the claim.

  (ii)~In the next two steps, we show the inequality ``$\leq$'' in~\eqref{eq:DPPStopSpecial}.
  In view of Lemma~\ref{le:stopFields}(ii) we may assume that $\sigma$ is an $\F^+$-stopping time,
  and then $(\sigma+1/n)\wedge T$ is an $\Fci$-stopping time for each $n\geq1$. For the time being, we also assume that
  $\sigma$ takes finitely many values. Let
  \[
     D_n:=\{k2^{-n}:\,k=0,1,\dots\}\cup\{T\}
  \]
  and define
  \[
    \sigma^n(\omega):=\inf \{t\in D_n:\,t\geq \sigma(\omega)+1/n\}\wedge T.
  \]
  Each $\sigma^n$ is an $\Fci$-stopping time taking finitely many values and $\sigma^n(\omega)$ decreases to $\sigma(\omega)$ for all $\omega\in\Omega$. Since the range of $\{\sigma, (\sigma^n)_n\}$ is countable, it follows from
  Proposition~\ref{pr:extension}(ii) that $\cE^\circ_{\sigma^n}(X)\to Y_\sigma$ $P$-a.s. Since $\|X\|_{L^1_\cP}<\infty$, the backward supermartingale convergence theorem \cite[Theorem~V.30]{DellacherieMeyer.82}
  implies that this convergence holds also in $L^1(P)$ and that
  \begin{equation}\label{eq:optionalSamplingProof}
    Y_\sigma=\lim_{n\to\infty} E^P[\cE^\circ_{\sigma^n}(X)|\hcF_\sigma]\quad P\as,
  \end{equation}
  where, by monotonicity, the $P$-a.s.\ convergence holds without passing to a subsequence.
  By Lemma~\ref{le:rawStopping} and Lemma~\ref{le:increasingSequence}, there exists for each $n$ a sequence $(P^n_k)_{k\geq1}$ in $\cP(\cFci_{\sigma^n},P)$ such that
  \[
    \cE^\circ_{\sigma^n}(X)
    =\mathop{\esssup^P}_{P'\in \cP(\cFci_{\sigma^n},P)} E^{P'}[X|\cFci_{\sigma^n}]
    =\lim_{k\to\infty} E^{P^n_k}[X|\cFci_{\sigma^n}]\quad P\as,
  \]
  where the limit is increasing. Moreover, using that
  \[
    \cF^+_{\sigma^{n+1}}=\big\{A\in\cFci_T:\, A \cap \{\sigma^{n+1}<t\}\in\cFci_t\mbox{ for }0\leq t\leq T\big\},
  \]
  the fact that $\sigma^n>\sigma^{n+1}$ on $\{\sigma^n<T\}$ is seen to imply that $\cF^+_{\sigma^{n+1}}\subseteq \cFci_{\sigma^n}$. Together
  with $\sigma\leq \sigma^{n+1}$ and Lemma~\ref{le:stopFields}(ii) we conclude that
  \begin{equation}\label{eq:optionalSamplingProof2}
    \hcF_\sigma \,\subseteq\, \hcF_{\sigma^{n+1}}\overset{\cP\mbox{\scriptsize{-q.s.}}}{=}  \cF^+_{\sigma^{n+1}}\,\subseteq \, \cFci_{\sigma^n} \quad\mbox{and hence}\quad \cP(\hcF_\sigma,P)\supseteq \cP(\cFci_{\sigma^n},P)
  \end{equation}
  for all $n$. Now monotone convergence yields
   \begin{align*}
    E^P[\cE^\circ_{\sigma^n}(X)|\hcF_\sigma]
    & = \lim_{k\to\infty} E^{P^n_k}[X|\hcF_\sigma]
    & \leq \mathop{\esssup^P}_{P'\in \cP(\hcF_\sigma,P)} E^{P'}[X|\hcF_\sigma]\quad P\as
  \end{align*}
  In view of~\eqref{eq:optionalSamplingProof}, this ends the proof of~\eqref{eq:DPPStopSpecial} for $\sigma$ taking finitely many values.

  (ii')~Now let $\sigma$ be general.  We approximate $\sigma$ by the decreasing sequence
  $\sigma^n:=\inf \{t\in D_n:\,t\geq \sigma\}\wedge T$ of stopping times with finitely many values. Then
  $\cE_{\sigma^n}(X)\equiv Y_{\sigma^n}\to Y_\sigma$ $P$-a.s.\ since $Y$ is c\`adl\`ag. The same arguments as for~\eqref{eq:optionalSamplingProof} show that
  \begin{equation}\label{eq:optionalSamplingProof3}
    Y_\sigma=\lim_{n\to\infty} E^P[\cE_{\sigma^n}(X)|\hcF_\sigma]\quad P\as
  \end{equation}
  By the two previous steps we have the representation~\eqref{eq:DPPStopSpecial} for $\sigma^n$.
  As in Lemma~\ref{le:increasingSequence}, it follows from the stability under $\F$-pasting (Lemma~\ref{le:pastingInF})
  that there exists for each $n$ a sequence $(P^n_k)_{k\geq1}$ in $\cP(\cF_{\sigma^n},P)\subseteq \cP(\cF_\sigma,P)$ such that
  \[
    \cE_{\sigma^n}(X)
    =\mathop{\esssup^P}_{P'\in \cP(\cF_{\sigma^n},P)} E^{P'}[X|\cF_{\sigma^n}]
    =\lim_{k\to\infty} E^{P^n_k}[X|\cF_{\sigma^n}]\quad P\as,
  \]
  where the limit is increasing and hence
   \begin{align*}
    E^P[\cE_{\sigma^n}(X)|\hcF_\sigma]
    & = \lim_{k\to\infty} E^{P^n_k}[X|\hcF_\sigma]
    & \leq \mathop{\esssup^P}_{P'\in \cP(\hcF_\sigma,P)} E^{P'}[X|\hcF_\sigma]\quad P\as
  \end{align*}
  Together with~\eqref{eq:optionalSamplingProof3}, this completes the proof of~\eqref{eq:DPPStopSpecial}.

  (iii)~We now prove~\eqref{eq:DPPStopIncreasingSeq}. Since $\sigma$ is general, the claim does not follow from the stability under pasting.  Instead, we use the construction of (ii'). Indeed, we have obtained
  $P^n_k\in\cP(\cF_{\sigma^n},P)$ such that
  \[
    Y_\sigma = \lim_{n\to\infty} \lim_{k\to\infty} E^{P^n_k}[X|\hcF_\sigma]\quad P\as
  \]
  Fix $n$. Since $\sigma^n$ is an $\F$-stopping time taking finitely many values and since
  $\hcF_{\sigma}\subseteq\hcF_{\sigma^n}$, it follows from the stability under $\F$-pasting (applied to $\sigma^n$) that the set $\{E^{P'}[X|\hcF_\sigma]:\, P'\in\cP(\hcF_{\sigma^n},P)\}$ is $P$-a.s.\ upward filtering,
  exactly as in the proof of Lemma~\ref{le:increasingSequence}.
  In view of $\cP(\cF_{\sigma^n},P)\subseteq \cP(\cF_{\sigma^{n+1}},P)$, it follows that for each $N\geq 1$ there exists $P^{(N)}\in\cP(\cF_{\sigma^N},P)$ such that
  \[
    E^{P^{(N)}}[X|\hcF_\sigma] = \max_{1\leq n\leq N} \max_{1\leq k\leq n} E^{P^n_k}[X|\hcF_\sigma]\quad P\as
  \]
  Since $\cP(\cF_{\sigma^N},P)\subseteq\cP(\hcF_\sigma,P)$, this yields the claim.

  (iv)~To prove~\eqref{eq:DPPStop}, we first express $\cE_\sigma(X)$ and $\cE_\tau(X)$ as essential suprema by using~\eqref{eq:DPPStopSpecial} both for $\sigma$ and for $\tau$. The inequality ``$\leq$'' is then immediate. The converse inequality follows by a monotone convergence argument exactly as in the proof of Proposition~\ref{pr:pastingAndTimeconsistency}(i), except that the increasing sequence is now obtained from~\eqref{eq:DPPStopIncreasingSeq} instead of Lemma~\ref{le:increasingSequence}.
\end{proof}

\subsection{Decomposition and 2BSDE for $\cE$-Martingales}

The next result contains the semimartingale decomposition of $\cE(X)$ under each $P\in\cP$ and can be seen as an analogue of the optional decomposition~\cite{ElKarouiQuenez.95} used in mathematical finance. In the context of $G$-expectations, such a result has also been referred to as ``$G$-martingale representation theorem''; see \cite{HuPeng.2010, SonerTouziZhang.2010rep, Song.10, XuZhang.09}. Those results are ultimately based on the PDE description of the $G$-expectation and are more precise than ours; in particular, they provide a single increasing process $K$ rather than a family $(K^P)_{P\in\cP}$ (but see Remark~\ref{rk:pathwInt}). On the other hand, we obtain an $L^1$-theory whereas those results require more integrability for $X$.

\begin{proposition}\label{pr:martingaleDecomp}
  Let $X\in\cH$. There exist
    \begin{enumerate}[topsep=3pt, partopsep=0pt, itemsep=1pt,parsep=2pt]
    \item an $\hF$-predictable process $Z^X$ with $\int_0^T |Z^X_s|^2\,d\br{B}_s<\infty$ $\cP$-q.s.,
    \item a family $(K^P)_{P\in\cP}$ of $\overline{\F}^P$-predictable processes such that all paths of $K^P$ are c\`adl\`ag nondecreasing and $E^P[|K^P_T|]<\infty$,
  \end{enumerate}
  such that
  \begin{equation}\label{eq:martDecompInProp}
    \cE_t(X)=\cE_0(X)+\sideset{^{(P)\hspace{-7pt}}}{}{\int_0^t} Z^X_s\,dB_s - K_t^P\quad\mbox{for all } 0\leq t\leq T,\quad P\as
  \end{equation}
  for all $P\in\cP$.
  The process $Z^X$ is unique up to $\{ds\times P,\,P\in\cP\}$-polar sets and $K^P$ is unique up to $P$-evanescence.
\end{proposition}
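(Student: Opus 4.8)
The plan is to fix each $P\in\cP$ separately and apply the classical Doob--Meyer decomposition to the c\`adl\`ag $(\hF,P)$-supermartingale $\cE(X)$, then use the predictable representation property (Lemma~\ref{le:MRPandVersions}) to extract a stochastic integral, and finally to aggregate the integrands across all $P\in\cP$ into a single $\hF$-predictable process $Z^X$. First I would verify that $\cE(X)$ is of class~(D) under each $P$, which is what the Doob--Meyer theorem requires; this is precisely the content of Lemma~\ref{le:classD} referenced in the text, and it follows from the optional sampling theorem (Theorem~\ref{th:DPPstop}) together with the bound $\|X\|_{L^1_\cP}<\infty$, since for any stopping time $\tau$ we can represent $\cE_\tau(X)$ as an essential supremum of conditional expectations of $X$, giving uniform $L^1(P)$-integrability of the family $\{\cE_\tau(X)\}_\tau$.

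Having class~(D), the Doob--Meyer decomposition yields, for each $P$, a c\`adl\`ag uniformly integrable $(\hF,P)$-martingale $M^P$ and a predictable nondecreasing integrable process $K^P$ with $\cE_t(X)=\cE_0(X)+M^P_t-K^P_t$ $P$-a.s. By Lemma~\ref{le:MRPandVersions}, $(P,B)$ has the predictable representation property, so $M^P={}^{(P)}\!\int Z^{X,P}\,dB$ for some $\overline{\Fci}^P$-predictable $Z^{X,P}$ with $\int_0^T|Z^{X,P}_s|^2\,d\br{B}_s<\infty$ $P$-a.s. This establishes the decomposition~\eqref{eq:martDecompInProp} with a $P$-dependent integrand. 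The uniqueness of $K^P$ up to $P$-evanescence is the standard uniqueness in Doob--Meyer (using predictability), and the $P$-a.s.\ uniqueness of $Z^{X,P}$ follows from the uniqueness in the martingale representation together with the fact that the quadratic variation of $\int Z\,dB$ determines $\int|Z|^2\,d\br{B}$.

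The main obstacle is the aggregation: producing a \emph{single} $\hF$-predictable $Z^X$ that serves as $Z^{X,P}$ simultaneously for all $P\in\cP$, despite the measures being mutually singular. The natural approach is to note that for any $P_1,P_2\in\cP$, the two integrands $Z^{X,P_1}$ and $Z^{X,P_2}$ must agree $d\br{B}\times dP$-a.e.\ on the region where the decompositions overlap, because on $\{P_1=P_2\text{ on }\hcF_t\}$ the martingale parts must coincide by uniqueness of Doob--Meyer; more robustly, one can recover the integrand pathwise from the quadratic covariation $d\br{\cE(X),B}_t=Z^X_t\,d\br{B}_t$, which is a measure-independent construction in the strong formulation, so that $Z^X:=(d\br{B})^{-1}\,d\br{\cE(X),B}$ can be defined as an $\hF$-predictable process (say via a limit of difference quotients along a fixed sequence of partitions, redefined to a fixed value off the convergence set). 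One then checks that this universally defined $Z^X$ agrees with each $Z^{X,P}$ $P$-a.s., which forces $\int_0^T|Z^X_s|^2\,d\br{B}_s<\infty$ $\cP$-q.s.\ and yields~\eqref{eq:martDecompInProp} with $K^P$ defined as the (automatically nondecreasing) remainder $\cE_0(X)+\int Z^X\,dB-\cE(X)$. The uniqueness of $Z^X$ up to $\{ds\times P,\,P\in\cP\}$-polar sets then follows from its $P$-a.s.\ uniqueness for each fixed $P$, and the uniqueness of $K^P$ is inherited from Doob--Meyer.
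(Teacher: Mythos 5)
There is a genuine gap in your first step. You invoke class~(D) in order to apply the Doob--Meyer theorem, citing Lemma~\ref{le:classD} and claiming that the bound $\|X\|_{L^1_\cP}<\infty$ gives uniform $P$-integrability of $\{\cE_\tau(X)\}_\tau$. This is false on two counts: $L^1(P)$-boundedness of the family does not imply uniform integrability, and Lemma~\ref{le:classD} is proved only for $X\in\cH^p$ with $p>1$ (its proof hinges on Jensen's inequality applied to $|X|^p$). The proposition, however, is stated for arbitrary $X\in\cH\subseteq L^1_\cP$ --- the paper explicitly advertises this as an $L^1$-theory --- and under that hypothesis $\cE(X)$ need not be of class~(D,$\cP$). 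The paper sidesteps this entirely: it applies the general Doob--Meyer decomposition for c\`adl\`ag supermartingales in $(\Omega,\overline{\F}^P,P)$, obtaining a \emph{local} martingale $M^P$ plus a predictable increasing process $K^P$; the predictable representation property in Lemma~\ref{le:MRPandVersions} is stated precisely for local martingales, so no uniform integrability is ever needed. Integrability of $K^P_T$ still comes for free, since $\cE(X)$ dominates the closing martingale $E^P[X|\cdot]$, giving $E^P[K^P_T]\leq \cE_0(X)-E^P[X]<\infty$ by localization and monotone convergence. So your route, as written, proves the proposition only for $X\in\cH^p$, $p>1$.

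Your aggregation step, by contrast, is essentially the paper's argument: recover $Z^X=\hat a^{-1}\,d\br{\cE(X),B}/dt$ from a pathwise, measure-free construction of the covariation. One caveat: the paper does not use deterministic partitions and difference quotients, but rather integration by parts, $\br{Y,B}=BY-\int B\,dY-\int Y_-\,dB$, with the integrals defined pathwise via the Bichteler--Karandikar scheme, whose stopping times $\tau^n_{i+1}=\inf\{t\geq\tau^n_i:\,|Y_t-Y_{\tau^n_i}|\geq 2^{-n}\}$ are calibrated to the oscillations of the integrand so that the approximations converge \emph{uniformly $P$-a.s.\ under every $P$ simultaneously}. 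With a fixed deterministic sequence of partitions one generally only gets convergence in probability under each $P$, so the limsup you propose is not guaranteed to coincide $P$-a.s.\ with the It\^o object for every $P\in\cP$; the stopping-time partitions are the standard fix and you should use them. Your fallback remark that the integrands for $P_1,P_2$ ``agree where the decompositions overlap'' does no work here, since the measures are mutually singular; it is the pathwise construction alone that carries the aggregation.
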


\begin{proof}
  We shall use arguments similar to the proof of~\cite[Theorem~4.5]{SonerTouziZhang.2010dual}.

  Let $P\in\cP$. It follows from Proposition~\ref{pr:extension}(i) that $Y:=\cE(X)$ is an $(\overline{\F}^P,P)$-supermartingale. We apply the Doob-Meyer decomposition in the filtered space $(\Omega,\overline{\F}^P, P)$ which  satisfies the usual conditions of right continuity and completeness. Thus we obtain an $(\overline{\F}^P, P)$-local martingale $M^P$ and an $\overline{\F}^P$-predictable increasing integrable process $K^P$, c\`adl\`ag and satisfying
  $M^P_0=K^P_0=0$, such that
  \[
    Y=Y_0+ M^P-K^P.
  \]
  By Lemma~\ref{le:MRPandVersions}, $(P,B)$ has the predictable representation property in $\overline{\F}^P$. Hence there exists an $\overline{\F}^P$-predictable process $Z^P$ such that
  \[
    Y=Y_0+ \sideset{^{(P)\hspace{-7pt}}}{}{\int} Z^P\,dB - K^P.
  \]

  The next step is to replace $Z^P$ by a process $Z^X$ independent of $P$.
  Recalling that $B$ is a continuous local martingale under each $P$, we have
  \begin{equation}\label{eq:constructionZ}
    \int Z^P\,d\br{B}^P = \br{Y,B}^P = BY - \sideset{^{(P)\hspace{-7pt}}}{}{\int} B\,dY - \sideset{^{(P)\hspace{-7pt}}}{}{\int} Y_-\,dB\quad P\as
  \end{equation}
  (Here and below, the statements should be read componentwise.)
  The last two integrals are It\^o integrals under $P$, but they can also be defined pathwise since the integrands are left limits of c\`adl\`ag processes which are bounded path-by-path.
  This is a classical construction from~\cite[Theorem~7.14]{Bichteler.81}; see also~\cite{Karandikar.95} for the same result in modern notation. To make explicit that the resulting process is $\F$-adapted, we recall the procedure for the example $\int Y_-\,dB$. One first defines for each $n\geq1$ the sequence of $\hF$-stopping times
  $\tau^n_0:=0$ and $\tau^n_{i+1}:=\inf\{t\geq \tau^n_i:\, |Y_t-Y_{\tau^n_i}|\geq 2^{-n}\}$. Then one defines
  $I^n$ by\vspace{-2pt}
  \[\vspace{-2pt}
    I^n_t:= Y_{\tau^n_k}(B_t-B_{\tau^n_k}) + \sum_{i=0}^{k-1} Y_{\tau^n_i}(B_{\tau^n_{i+1}}-B_{\tau^n_i})
    \quad\mbox{for}\quad\tau^n_k< t\leq\tau^n_{k+1},\quad k\geq0;
  \]
  clearly $I^n$ is again $\hF$-adapted and all its paths are c\`adl\`ag. Finally, we define
  \[
    I_t:=\limsup_{n\to\infty} I^n_t,\quad 0\leq t\leq T.
  \]
  Then $I$ is again $\hF$-adapted and it is a consequence of the Burkholder-Davis-Gundy inequalities that\vspace{-2pt}
  \[\vspace{-2pt}
    \sup_{0\leq t\leq T}\bigg| I^n_t-\sideset{^{(P)\hspace{-7pt}}}{}{\int_0^t}Y_-\,dB\bigg|\to 0\quad P\as
  \]
  for each $P$. Thus, outside a $\cP$-polar set, the limsup in the definition of $I$ exists as a limit uniformly in $t$ and $I$ has c\`adl\`ag paths. Since $\cP$-polar sets are contained in $\hcF_0$, we may redefine $I:=0$ on the exceptional set. Now $I$ is c\`adl\`ag $\hF$-adapted and coincides with the It\^o integral ${}^{(P)\hspace{-5pt}}\int Y_-\,dB$ up to $P$-evanescence, for all $P\in\cP$.

  We proceed similarly with the integral ${}^{(P)\hspace{-5pt}}\int B\,dY$ and obtain a definition for the right hand side of~\eqref{eq:constructionZ} which is
  $\hF$-adapted, continuous and independent of $P$. Thus we have defined $\br{Y,B}$ simultaneously for all $P\in\cP$, and we do the same for $\br{B}$. Let $\hat{a}=d\br{B}/dt$ be the (left) derivative in time of $\br{B}$, then $\hat{a}$ is $\Fci$-predictable and $\mathbb{S}^{>0}_d$-valued $P\times dt$-a.e.\ for all $P\in\cP$ by the definition of $\cP_S$. Finally, $Z^X:=\hat{a}^{-1}d\br{Y,B}/dt$ is an $\hF$-predictable process such that\vspace{-2pt}
  \[\vspace{-2pt}
    Y=Y_0+ \sideset{^{(P)\hspace{-7pt}}}{}{\int} Z^X\,dB -K^P\quad P\as\quad\mbox{for all }P\in\cP.
  \]
  We note that the integral is taken under $P$; see also Remark~\ref{rk:pathwInt} for a way to define it for all $P\in\cP$ simultaneously.
\end{proof}

The previous proof shows that a decomposition of the type~\eqref{eq:martDecompInProp} exists for all c\`adl\`ag $(\F,\cP)$-supermartingales, and not just for $\cE$-martingales.
As a special case of Proposition~\ref{pr:martingaleDecomp}, we obtain a representation for symmetric $\cE$-martingales. The following can be seen as a generalization of the corresponding results for $G$-expectations given in~\cite{SonerTouziZhang.2010rep, Song.10, XuZhang.09}.

\begin{corollary}\label{co:symmMartDecomp}
  Let $X\in\cH$ be such that $-X\in\cH$. The following are equivalent:
    \begin{enumerate}[topsep=3pt, partopsep=0pt, itemsep=1pt,parsep=2pt]
    \item $\cE(X)$ is a symmetric $\cE$-martingale; i.e., $\cE(-X)=-\cE(X)$ $\cP$-q.s.
    \item There exists an $\hF$-predictable process $Z^X$ with $\int_0^T |Z^X_s|^2\,d\br{B}_s<\infty$ $\cP$-q.s.\ such that
    \vspace{-5pt}
    \[\vspace{-3pt}
      \cE_t(X)=\cE_0(X)+ \int_0^t Z^X_s\,dB_s \quad\mbox{for all } 0\leq t\leq T,\quad \cP\qs,
    \]
    where the integral can be defined universally for all $P$ and $\int Z^X\,dB$ is an $(\F,P)$-martingale for all $P\in\cP$.
  \end{enumerate}
  In particular, any symmetric $\cE$-martingale has continuous trajectories $\cP$-q.s.
\end{corollary}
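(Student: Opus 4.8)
The plan is to prove the equivalence by establishing both implications, using Proposition~\ref{pr:martingaleDecomp} as the essential tool. The decomposition there gives $\cE_t(X)=\cE_0(X)+{}^{(P)}\!\int_0^t Z^X_s\,dB_s - K^P_t$ under each $P$, with $K^P$ nondecreasing, and likewise a decomposition for $\cE(-X)$ with its own integrand $Z^{-X}$ and increasing processes $\tilde K^P$. The key observation is that $\cE(-X)=-\cE(X)$ collapses the two increasing processes into martingale terms, and the heart of the matter is to show this forces $K^P\equiv 0$ $P$-a.s.\ for every $P$.

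For the direction (i)$\Rightarrow$(ii), I would apply Proposition~\ref{pr:martingaleDecomp} to both $X$ and $-X$, obtaining
\[
  \cE_t(X)=\cE_0(X)+\sideset{^{(P)\hspace{-7pt}}}{}{\int_0^t} Z^X_s\,dB_s - K^P_t,\qquad
  \cE_t(-X)=\cE_0(-X)+\sideset{^{(P)\hspace{-7pt}}}{}{\int_0^t} Z^{-X}_s\,dB_s - \tilde K^P_t,
\]
for each $P\in\cP$. Under the symmetry hypothesis $\cE(-X)=-\cE(X)$ $\cP$-q.s., adding these two identities gives, $P$-a.s.,
\[
  0=\big(\cE_0(X)+\cE_0(-X)\big)+\sideset{^{(P)\hspace{-7pt}}}{}{\int_0^t} (Z^X_s+Z^{-X}_s)\,dB_s - (K^P_t+\tilde K^P_t).
\]
The stochastic integral is an $(\overline{\F}^P,P)$-local martingale and $K^P+\tilde K^P$ is a predictable nondecreasing process; since the left side is identically zero, the uniqueness in the Doob--Meyer decomposition forces both the martingale part and $K^P+\tilde K^P$ to vanish. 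As $K^P$ and $\tilde K^P$ are both nondecreasing and null at zero, this yields $K^P\equiv 0$ and $\tilde K^P\equiv 0$ $P$-a.s. Hence $\cE_t(X)=\cE_0(X)+{}^{(P)}\!\int_0^t Z^X_s\,dB_s$, and this integral is a genuine $(\overline{\F}^P,P)$-martingale (not merely local) because $\cE(X)$ is of class~(D); the universal $\hF$-adapted construction of the integral was already carried out in the proof of Proposition~\ref{pr:martingaleDecomp}, so the representation holds $\cP$-q.s. The converse (ii)$\Rightarrow$(i) is immediate: if $\cE_t(X)=\cE_0(X)+\int_0^t Z^X_s\,dB_s$ with the integral a true martingale under every $P$, then $-\cE(X)=-\cE_0(X)-\int Z^X\,dB$ is again such a symmetric representation, and by the minimality characterization (or directly by checking the $(\hF,\cP)$-supermartingale property with $Y_T=-X$) this process must equal $\cE(-X)$.

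For the final sentence, once (ii) holds, the process $\cE_t(X)=\cE_0(X)+\int_0^t Z^X_s\,dB_s$ is a stochastic integral against the continuous process $B$, hence has continuous paths $P$-a.s.\ for every $P$, and therefore $\cP$-q.s. The main obstacle I anticipate is the step extracting $K^P\equiv 0$: one must be careful that the local-martingale parts and the predictable increasing parts are separated cleanly via the $P$-a.s.\ uniqueness of Doob--Meyer, and that the local martingale is in fact a true martingale so that one may pass from the decomposition to the stated integrability and to the $\cP$-q.s.\ identity. Verifying that $\cE(X)$ belongs to class~(D)---so that the local martingale $\int Z^X\,dB$ is uniformly integrable and thus a martingale---will lean on the stopping-time results (Lemma~\ref{le:classD} and Theorem~\ref{th:DPPstop}); apart from this, the argument is a routine application of the decomposition and its uniqueness.
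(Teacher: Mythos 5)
Your route to $K^P\equiv 0$ (adding the decompositions of $\cE(X)$ and $\cE(-X)$ from Proposition~\ref{pr:martingaleDecomp} and invoking the fact that a predictable local martingale of finite variation vanishes) is sound, but the step where you upgrade the local martingale ${}^{(P)}\!\int Z^X\,dB$ to a true martingale has a genuine gap: you appeal to class~(D) via Lemma~\ref{le:classD}, which requires $X\in\cH^p$ for some $p>1$, whereas the corollary assumes only $X\in\cH\subseteq L^1_\cP$. For a merely $L^1$ terminal value no class-(D) control is available, so this justification fails --- and the true martingale property is part of statement~(ii) and is exactly what makes (ii)$\Rightarrow$(i) work, so it cannot be dropped. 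The repair is shorter than your argument and is what the paper does: by Proposition~\ref{pr:extension}(i), both $\cE(X)$ and $-\cE(X)=\cE(-X)$ are $(\hF,\cP)$-supermartingales, so $\cE(X)$ is a \emph{true} $P$-martingale for every $P\in\cP$; then uniqueness in the Doob--Meyer decomposition forces $K^P\equiv 0$ directly (your two-decomposition addition trick becomes unnecessary), and ${}^{(P)}\!\int Z^X\,dB=\cE(X)-\cE_0(X)$ inherits the true martingale property.

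Two further inaccuracies. First, you claim the universal ($P$-free) construction of the integral ``was already carried out in the proof of Proposition~\ref{pr:martingaleDecomp}''; it was not --- the paper explicitly notes there that the integral is taken under each $P$, and aggregating $\{{}^{(P)}\!\int Z\,dB\}_{P\in\cP}$ in general requires extra set-theoretic axioms (Remark~\ref{rk:pathwInt}). The universal definition in the symmetric case is the tautological one, $\int Z^X\,dB:=\cE(X)-\cE_0(X)$, which is available only \emph{after} $K^P\equiv0$ is established. Second, in (ii)$\Rightarrow$(i), minimality from Proposition~\ref{pr:extension}(i) yields only the inequality $-\cE(X)\geq\cE(-X)$; to get equality, use the representation of Proposition~\ref{pr:extension}(iii): since $\int Z^X\,dB$ is a $P'$-martingale for every $P'$, one has $E^{P'}[-X|\hcF_t]=-\cE_t(X)$ $P$-a.s.\ for all $P'\in\cP(\hcF_t,P)$, whence $\cE_t(-X)=-\cE_t(X)$. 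With these corrections your proof is complete; the continuity assertion then follows, as you say, from the $P$-a.s.\ continuity of stochastic integrals against the continuous process $B$ under each $P$.
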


\begin{proof}
  The implication (ii)$\Rightarrow$(i) is clear from Proposition~\ref{pr:extension}(iii). Conversely, given~(i),
  Proposition~\ref{pr:extension}(i) yields that both $\cE(X)$ and $-\cE(X)$ are $\cP$-supermartingales, hence $\cE(X)$ is a (true) $\cP$-martingale. It follows that the increasing processes $K^P$ have to satisfy $K^P\equiv0$ and~\eqref{eq:martDecompInProp} becomes $\cE(X)=\cE_0(X)+{}^{(P)\hspace{-5pt}}\int Z^X\,dB$. In particular, the stochastic integral can be defined universally by setting $\int Z^X\,dB:=\cE(X)-\cE_0(X)$.
\end{proof}

\begin{remark}
  (a)~Without the martingale condition in Corollary~\ref{co:symmMartDecomp}(ii), the implication (ii)$\Rightarrow$(i) would fail even for $\cP=\{P_0\}$, in which case Corollary~\ref{co:symmMartDecomp} is simply the Brownian martingale representation theorem.

  (b)~Even if it is symmetric, $\cE(X)$ need not be a $\cP$-modification of the family $\{\cE^\circ_t(X),\, t\in[0,T]\}$; in fact, the $\cE$-martingale in Example~\ref{ex:counterexModification} is symmetric.
  However, the situation changes if the symmetry assumption is imposed directly on $\{\cE^\circ_t(X)\}$. We call $\{\cE^\circ_t(X)\}$ symmetric if $\cE^\circ_t(-X)=-\cE^\circ_t(X)$ $\cP$-q.s.\ for all $t\in[0,T]$.
  \begin{itemize}
    \item\emph{If $\{\cE_t^\circ(X)\}$ symmetric, then
    $\cE(X)$ is a symmetric $\cE$-martingale and a $\cP$-modification of $\{\cE^\circ_t(X)\}$.}
  \end{itemize}
  Indeed, the assumption implies that $\{\cE^\circ_t(X)\}$ is an $(\Fci,P)$-martingale for each $P\in\cP$ and so the process $\cE(X)$ of right limits (cf.~Proposition~\ref{pr:extension}(ii)) is the usual c\`adl\`ag $P$-modification of $\{\cE^\circ_t(X)\}$, for all $P$.
\end{remark}

Next, we represent the pair $(\cE(X),Z^X)$ from Proposition~\ref{pr:martingaleDecomp} as the solution of a 2BSDE. The following definition is essentially from~\cite{SonerTouziZhang.2010bsde}.

\begin{definition}\label{def:2BSDE}
  Let $X\in L^1_\cP$ and consider a pair $(Y,Z)$ of processes with values in $\R\times\R^d$ such that
  $Y$ is c\`adl\`ag $\hF$-adapted while
  $Z$ is $\hF$-predictable and $\int_0^T |Z_s|^2\,d\br{B}_s<\infty$ $\cP$-q.s.
  Then $(Y,Z)$ is called a \emph{solution} of the 2BSDE~\eqref{eq:2bsde} if there exists a family $(K^P)_{P\in\cP}$ of $\overline{\F}^P$-adapted increasing processes satisfying $E^P[|K^P_T|]<\infty$ such that
  \begin{equation}\label{eq:2bsde}
   Y_t = X - \sideset{^{(P)\hspace{-7pt}}}{}{\int_t^T} Z_s\,dB_s + K_T^P-K_t^P,\quad 0\leq t\leq T,\quad P\as\quad\mbox{for all }P\in\cP
  \end{equation}
  and such that the following minimality condition holds for all $0\leq t\leq T$:
  \begin{equation}\label{eq:minimal}
    \mathop{\essinf^P}_{P'\in \cP(\hcF_t,P)} E^{P'}\big[K_T^{P'}-K_t^{P'}\big|\hcF_t\big]=0\quad P\as \quad\mbox{for all }P\in\cP.
  \end{equation}
\end{definition}

We note that~\eqref{eq:minimal} is essentially the $\cE$-martingale condition~\eqref{eq:Yesssup}: if the processes $K^P$ can be aggregated into a single process $K$ and $K_T\in\cH$, then $-K=\cE(-K_T)$.
Regarding the aggregation of $(K^P)$, see also Remark~\ref{rk:pathwInt}.

A second notion is needed to state the main result. A c\`adl\`ag process $Y$ is said to be \emph{of class (D,$\cP$)} if
the family $\{Y_\sigma\}_\sigma$ is uniformly integrable under $P$ for all $P\in\cP$, where $\sigma$ runs through all
$\hF$-stopping times. As an example, we have seen in Corollary~\ref{co:symmMartDecomp} that all symmetric $\cE$-martingales are
of class~(D,$\cP$). (Of course, it is important here that we work with a finite time horizon $T$.)
For $p\in[1,\infty)$, we define $\|X\|_{L^p_\cP}=:\sup_{P\in\cP} E[|X|^p]^{1/p}$ as well as $\cH^p:=\{X\in\cH:\, |X|^p\in\cH\}$.

\begin{lemma}\label{le:classD}
  If $X\in\cH^p$ for some $p\in (1,\infty)$, then $\cE(X)$ is of class (D,$\cP$).
\end{lemma}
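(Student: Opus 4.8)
The plan is to fix an arbitrary $P\in\cP$ and prove the stronger statement that the family $\{\cE_\sigma(X)\}_\sigma$, indexed by all $\hF$-stopping times $\sigma$, is bounded in $L^p(P)$. Since $p>1$, $L^p(P)$-boundedness of a family forces its uniform integrability under $P$: for any $K>0$ one has $\int_{\{|\cE_\sigma(X)|>K\}}|\cE_\sigma(X)|\,dP\le K^{1-p}\,E^P[|\cE_\sigma(X)|^p]\le K^{1-p}\sup_\sigma E^P[|\cE_\sigma(X)|^p]\to0$ as $K\to\infty$, uniformly in $\sigma$. As $P\in\cP$ is arbitrary, this yields class (D,$\cP$). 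The whole argument rests on the optional sampling representation~\eqref{eq:DPPStopSpecial} of Theorem~\ref{th:DPPstop}, applied both to $X$ and to $|X|^p$; the latter is legitimate precisely because $X\in\cH^p$ means $|X|^p\in\cH$, so that $\cE(|X|^p)$ is a well-defined $\cE$-martingale to which Theorem~\ref{th:DPPstop} applies.

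First I would derive the pointwise domination $|\cE_\sigma(X)|^p\le\cE_\sigma(|X|^p)$ $P$-a.s. By~\eqref{eq:DPPStopSpecial}, $\cE_\sigma(X)=\mathop{\esssup^P}_{P'\in\cP(\hcF_\sigma,P)}E^{P'}[X|\hcF_\sigma]$ and likewise $\cE_\sigma(|X|^p)=\mathop{\esssup^P}_{P'\in\cP(\hcF_\sigma,P)}E^{P'}[|X|^p|\hcF_\sigma]$. For each such $P'$, conditional Jensen gives $|E^{P'}[X|\hcF_\sigma]|\le E^{P'}[|X|\,|\hcF_\sigma]\le(E^{P'}[|X|^p|\hcF_\sigma])^{1/p}\le(\cE_\sigma(|X|^p))^{1/p}$. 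Taking the essential supremum over $P'$ bounds $\cE_\sigma(X)$ from above by $(\cE_\sigma(|X|^p))^{1/p}$, while the choice $P'=P$ together with $\cE_\sigma(X)\ge E^P[X|\hcF_\sigma]\ge-E^P[|X|\,|\hcF_\sigma]\ge-(\cE_\sigma(|X|^p))^{1/p}$ supplies the matching lower bound; hence $|\cE_\sigma(X)|^p\le\cE_\sigma(|X|^p)$.

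It then remains to control $E^P[\cE_\sigma(|X|^p)]$ uniformly in $\sigma$, and here I would invoke~\eqref{eq:DPPStopIncreasingSeq}: there is an increasing sequence $P_n\in\cP(\hcF_\sigma,P)$ with $\cE_\sigma(|X|^p)=\lim_n E^{P_n}[|X|^p|\hcF_\sigma]$ $P$-a.s. Since $E^{P_n}[|X|^p|\hcF_\sigma]$ is $\hcF_\sigma$-measurable and $P_n=P$ on $\hcF_\sigma$, the tower property gives $E^P[E^{P_n}[|X|^p|\hcF_\sigma]]=E^{P_n}[|X|^p]$, so monotone convergence yields $E^P[\cE_\sigma(|X|^p)]=\lim_n E^{P_n}[|X|^p]\le\sup_{P'\in\cP}E^{P'}[|X|^p]=\|X\|_{L^p_\cP}^p$, which is finite because $|X|^p\in\cH\subseteq L^1_\cP$. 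Combining with the previous step gives $\sup_\sigma E^P[|\cE_\sigma(X)|^p]\le\|X\|_{L^p_\cP}^p<\infty$, completing the proof. The only genuinely substantive step is the pointwise bound $|\cE_\sigma(X)|^p\le\cE_\sigma(|X|^p)$, which hinges on having the stopping-time representation of Theorem~\ref{th:DPPstop} available simultaneously for $X$ and for $|X|^p$ (the reason the hypothesis is phrased as $X\in\cH^p$ rather than merely $X\in\cH$); everything else is the routine passage from an $L^p$-bound with $p>1$ to uniform integrability.
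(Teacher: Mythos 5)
Your proof is correct and follows essentially the same route as the paper: the key step in both is conditional Jensen combined with the stopping-time representation~\eqref{eq:DPPStopSpecial} applied to $X$ and to $|X|^p$ (valid since $X\in\cH^p$), yielding $|\cE_\sigma(X)|^p\leq\cE_\sigma(|X|^p)$ $P$-a.s., followed by the bound $E^P[\cE_\sigma(|X|^p)]\leq\|X\|^p_{L^p_\cP}$ and the standard passage from $L^p$-boundedness to uniform integrability. The only (immaterial) difference is that you bound $E^P[\cE_\sigma(|X|^p)]$ via the increasing sequence~\eqref{eq:DPPStopIncreasingSeq} and monotone convergence, whereas the paper conditions on $\hcF_0$ and invokes~\eqref{eq:DPPStop} together with~\eqref{eq:ineqModification}; both are direct consequences of Theorem~\ref{th:DPPstop}.
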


\begin{proof}
  Let $P\in\cP$. If $\sigma$ is an $\hF$-stopping time, Jensen's inequality and~\eqref{eq:DPPStopSpecial} yield that
  \[
    |\cE_\sigma(X)|^p \leq  \mathop{\esssup^P}_{P'\in \cP(\hcF_\sigma,P)} E^{P'}[|X|^p|\hcF_\sigma]=\cE_\sigma(|X|^p)\quad P\as
  \]
  In particular, $\|\cE_\sigma(X)\|^p_{L^p(P)}\leq E^P[\cE_\sigma(|X|^p)]$ and thus Lemma~\ref{le:MRPandVersions} yields
  \[
    \|\cE_\sigma(X)\|^p_{L^p(P)} \leq E^P[\cE_\sigma(|X|^p)|\cF_0] \leq \mathop{\esssup^P}_{P'\in \cP(\hcF_0,P)} E^{P'}[\cE_\sigma(|X|^p)|\hcF_0] \quad P\as
  \]
  The right hand side $P$-a.s.\ equals $\cE_0(|X|^p)$ by~\eqref{eq:DPPStop}, so we conclude with~\eqref{eq:ineqModification} that
  \[
    \|\cE_\sigma(X)\|^p_{L^p(P)}
    \leq  \cE_0(|X|^p)
    \leq \sup_{P'\in\cP} E^{P'}[|X|^p]
    = \|X\|^p_{L^p_\cP}<\infty \quad P\as
  \]
  Therefore, the family $\{\cE_\sigma(X)\}_\sigma$ is bounded in $L^p(P)$ and in particular uniformly integrable under $P$. This holds for all $P\in\cP$.
\end{proof}

We can now state the main result of this section.

\pagebreak[2]

\begin{theorem}\label{th:2bsde}
  Let $X\in\cH$.
  \begin{enumerate}[topsep=3pt, partopsep=0pt, itemsep=1pt,parsep=2pt]
    \item The pair $(\cE(X),Z^X)$ is the minimal solution of the 2BSDE~\eqref{eq:2bsde}; i.e., if $(Y,Z)$ is another solution, then
            $\cE(X)\leq Y$ $\cP$-q.s.
    \item If $(Y,Z)$ is a solution of~\eqref{eq:2bsde} such that $Y$ is of class (D,$\cP$), then
        $(Y,Z)=(\cE(X),Z^X)$.
  \end{enumerate}
  In particular, if $X\in\cH^p$ for some $p>1$, then $(\cE(X), Z^X)$ is the unique solution of~\eqref{eq:2bsde} in the class (D,$\cP$).
\end{theorem}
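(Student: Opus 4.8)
The plan is to prove the two parts of the theorem separately, relying on the decomposition from Proposition~\ref{pr:martingaleDecomp} and the optional sampling theorem (Theorem~\ref{th:DPPstop}) as the main engines. For part~(i), I would first verify that $(\cE(X),Z^X)$ is indeed a solution of~\eqref{eq:2bsde}. The equation~\eqref{eq:2bsde} is just a rewriting of~\eqref{eq:martDecompInProp} evaluated between $t$ and $T$, so the dynamics and the processes $(K^P)$ come directly from Proposition~\ref{pr:martingaleDecomp}; the only genuine point is the minimality condition~\eqref{eq:minimal}. To establish~\eqref{eq:minimal}, I would take conditional expectations $E^{P'}[\,\cdot\,|\hcF_t]$ in~\eqref{eq:martDecompInProp} under $P'\in\cP(\hcF_t,P)$. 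Since the stochastic integral $\int Z^X\,dB$ is a $P'$-local martingale (in fact a $P'$-martingale after localization, using the integrability from Lemma~\ref{le:classD} or a stopping argument), the term $E^{P'}[K_T^{P'}-K_t^{P'}|\hcF_t]$ equals $\cE_t(X)-E^{P'}[\cE_T(X)|\hcF_t]=\cE_t(X)-E^{P'}[X|\hcF_t]$. Taking the essential infimum over $P'$ and invoking the representation~\eqref{eq:DPPStopSpecial} of Theorem~\ref{th:DPPstop} then yields~\eqref{eq:minimal}.

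For the minimality of $\cE(X)$ among all solutions, I would take an arbitrary solution $(Y,Z)$ with family $(\tilde K^P)$ and show $\cE(X)\leq Y$ $\cP$-q.s. The idea is that~\eqref{eq:2bsde} exhibits $Y$ as $Y_t=E^{P'}[X|\hcF_t]+E^{P'}[\tilde K_T^{P'}-\tilde K_t^{P'}|\hcF_t]$-type quantity; more carefully, taking $E^{P'}[\,\cdot\,|\hcF_t]$ in~\eqref{eq:2bsde} and using that $\int Z\,dB$ is a $P'$-supermartingale (being a local martingale bounded below after suitable stopping) together with the nondecrease of $\tilde K^{P'}$, one gets $Y_t\geq E^{P'}[X|\hcF_t]$ $P'$-a.s., hence $P$-a.s.\ for all $P'\in\cP(\hcF_t,P)$. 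Taking the essential supremum and applying~\eqref{eq:DPPStopSpecial} gives $Y_t\geq \cE_t(X)$ $P$-a.s., and since both processes are c\`adl\`ag this upgrades to a $\cP$-q.s.\ pathwise inequality.

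For part~(ii), I would argue that a class-(D,$\cP$) solution $(Y,Z)$ must coincide with $(\cE(X),Z^X)$. By part~(i) we already have $\cE(X)\leq Y$ $\cP$-q.s., so it remains to prove $Y\leq\cE(X)$. Here the class-(D,$\cP$) hypothesis lets me use the minimality condition~\eqref{eq:minimal} of $Y$ quantitatively: I would fix $P$, choose the increasing approximating sequence $P_n\in\cP(\hcF_t,P)$ from~\eqref{eq:DPPStopIncreasingSeq} realizing $\cE_t(X)=\lim_n E^{P_n}[X|\hcF_t]$, and compute $Y_t-\cE_t(X)$ as a limit of $E^{P_n}[\tilde K_T^{P_n}-\tilde K_t^{P_n}|\hcF_t]$, whose essential infimum is zero by~\eqref{eq:minimal}. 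The class-(D,$\cP$) property is what justifies the uniform integrability needed to pass conditional expectations through the martingale parts and to exchange limits, so that the infimum is actually attained in the limit and forces $Y_t\leq\cE_t(X)$ $P$-a.s. Once $Y=\cE(X)$ $\cP$-q.s., the equality $Z=Z^X$ follows from the uniqueness of the integrand in the decomposition (the bracket computation~\eqref{eq:constructionZ} determines $Z$ up to $\{ds\times P\}$-polar sets). The final ``in particular'' clause then combines Lemma~\ref{le:classD}, which guarantees $\cE(X)$ is of class~(D,$\cP$) when $X\in\cH^p$, with parts~(i) and~(ii).

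The main obstacle I anticipate is the rigorous handling of the integrability and localization in part~(ii): the stochastic integrals $\int Z\,dB$ for a general solution are only $P$-local martingales, and the processes $\tilde K^P$ are merely $P$-integrable, so one must carefully justify that the conditional-expectation manipulations and the interchange of limit-and-infimum are valid. The class-(D,$\cP$) assumption is precisely designed to control the martingale parts uniformly over stopping times, and the delicate step is to combine it with the minimality condition~\eqref{eq:minimal} and the increasing approximation~\eqref{eq:DPPStopIncreasingSeq} to conclude that the gap $Y-\cE(X)$, which is a nonnegative $\cP$-supermartingale by part~(i), actually vanishes.
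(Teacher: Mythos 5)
Your plan follows the paper's proof almost step for step, but there is one concrete flaw in your verification of the minimality condition~\eqref{eq:minimal} in part~(i): you assert that $E^{P'}[K_T^{P'}-K_t^{P'}|\hcF_t]$ \emph{equals} $\cE_t(X)-E^{P'}[X|\hcF_t]$, justified by the claim that ${}^{(P')}\!\int Z^X\,dB$ is a true $P'$-martingale ``after localization, using the integrability from Lemma~\ref{le:classD}''. This does not work: Lemma~\ref{le:classD} requires $X\in\cH^p$ with $p>1$, whereas part~(i) is stated for general $X\in\cH$, and ``a martingale after localization'' is just the local martingale property you already have. For $X\in\cH$ the martingale part $M^{P'}={}^{(P')}\!\int Z^X\,dB$ is in general only an $(\hF,P')$-supermartingale --- the paper obtains this from $\cE_0(X)+M^{P'}\geq \cE(X)\geq E^{P'}[X|\,\cdot\,]$, i.e.\ a local martingale bounded below by a martingale, plus Fatou --- so only the inequality $E^{P'}[K_T^{P'}-K_t^{P'}|\hcF_t]\leq \cE_t(X)-E^{P'}[X|\hcF_t]$ is available, with possibly strict inequality. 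Fortunately this inequality is all you need: taking the essential infimum over $P'\in\cP(\hcF_t,P)$ and using~\eqref{eq:Yesssup} makes the right-hand side vanish, while the left-hand side is nonnegative because $K^{P'}$ is nondecreasing; this is exactly the paper's argument, and your proof is repaired by replacing ``equals'' with ``$\leq$''.

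Apart from this, your route coincides with the paper's. For the minimality in~(i), you re-derive $Y\geq\cE(X)$ from the representation~\eqref{eq:DPPStopSpecial}; the paper instead quotes the minimal-supermartingale property of Proposition~\ref{pr:extension}(i), but that property was proved in precisely this way, so the arguments are the same (and both rest on the same supermartingale upgrade for $\int Z\,dB$ that you sketch). In part~(ii) your true-martingale claim is now legitimate ($\int Z\,dB=Y-Y_0+K^P$ is a local martingale of class~(D,$\cP$); the paper invokes Doob--Meyer), and once ${}^{(P')}\!\int Z\,dB$ is a true $P'$-martingale the identity $Y_t=E^{P'}[X|\hcF_t]+E^{P'}[K_T^{P'}-K_t^{P'}|\hcF_t]$ holds for \emph{every} $P'\in\cP(\hcF_t,P)$, so $Y_t-\cE_t(X)$ equals the essential infimum of the second summand directly, which is zero by~\eqref{eq:minimal}; the approximating sequence from~\eqref{eq:DPPStopIncreasingSeq} and the limit--infimum interchange you worry about are unnecessary scaffolding. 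The identification $Z=Z^X$ then follows from the uniqueness of the integrand in Proposition~\ref{pr:martingaleDecomp}, and the final clause from Lemma~\ref{le:classD}, exactly as you say.
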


\begin{proof}
  (i)~Let $P\in\cP$. To show that $(\cE(X),Z^X)$ is a solution, we only have to show that $K^P$ from the decomposition~\eqref{eq:martDecompInProp} satisfies the minimality condition~\eqref{eq:minimal}.
  We denote this decomposition by $\cE(X)=\cE_0(X)+M^P-K^P$.
  It follows from Proposition~\ref{pr:extension}(i) that $\cE(X)$ is an $(\overline{\F}^P,P)$-supermartingale.
  As $K^P\geq0$, we deduce that
  \[
    \cE_0(X)+ M^P \geq \cE(X) \geq E^{P}[X|\overline{\F}^P]\quad P\as,
  \]
  where $E^{P}[X|\overline{\F}^P]$ denotes the c\`adl\`ag $(\overline{\F}^P,P)$-martingale with terminal value $X$. Hence
  $M^P$ is a local $P$-martingale bounded from below by a $P$-martingale and thus $M^P$ is an $(\hF,P)$-supermartingale
  by a standard argument using Fatou's lemma. This holds for all $P\in\cP$. Therefore, \eqref{eq:Yesssup} yields
  \begin{align*}
  0 &= \cE_t(X)-\mathop{\esssup^P}_{P'\in \cP(\hcF_t,P)} E^{P'}[X|\hcF_t]\\
  &= \mathop{\essinf^P}_{P'\in \cP(\hcF_t,P)} E^{P'}\big[\cE_t(X)-\cE_T(X)\big|\hcF_t\big]\\
  &= \mathop{\essinf^P}_{P'\in \cP(\hcF_t,P)} E^{P'}\big[M_t^{P'}-M_T^{P'} + K_T^{P'}-K_t^{P'}\big|\hcF_t\big]\\
  &\geq \mathop{\essinf^P}_{P'\in \cP(\hcF_t,P)} E^{P'}\big[K_T^{P'}-K_t^{P'}\big|\hcF_t\big]\quad P\as\quad\mbox{for all }P\in\cP.
  \end{align*}
  Since $K^{P'}$ is nondecreasing, the last expression is also nonnegative and~\eqref{eq:minimal} follows. Thus $(\cE(X),Z^X)$ is a solution.

  To prove the minimality, let $(Y,Z)$ be another solution of~\eqref{eq:2bsde}.
  It follows from~\eqref{eq:2bsde} that $Y$ is a local $(\hF,P)$-supermartingale for all $P\in\cP$.
  As above, the integrability of $X$ implies that $Y_0+{}^{(P)\hspace{-5pt}}\int Z\,dB$ is bounded below by a $P$-martingale. Noting also that $Y_0$ is $P$-a.s.\ equal to a constant by Lemma~\ref{le:MRPandVersions}, we deduce that
  ${}^{(P)\hspace{-5pt}}\int Z\,dB$ and $Y$ are $(\hF,P)$-supermartingales. Since $Y$ is c\`adl\`ag and $Y_T=X$, the minimality property in Proposition~\ref{pr:extension}(i) shows that $Y\geq \cE(X)$ $\cP$-q.s.

  (ii)~If in addition $Y$ is of class (D,$\cP$), then ${}^{(P)\hspace{-5pt}}\int Z\,dB$ is a true $P$-martingale by the Doob-Meyer theorem and we have
  \begin{align*}
  0 & = \mathop{\essinf^P}_{P'\in \cP(\hcF_t,P)} E^{P'}\big[K_T^{P'}-K_t^{P'}\big|\hcF_t\big]\\
    & = Y_t-\mathop{\esssup^P}_{P'\in \cP(\hcF_t,P)} E^{P'}[X|\hcF_t] \\
    & = Y_t -\cE_t(X)\quad P\as\quad\mbox{for all }P\in\cP.
  \end{align*}
   The last statement in the theorem follows from Lemma~\ref{le:classD}.
\end{proof}

\begin{remark}\label{rk:pathwInt}
  If we use axioms of set theory stronger than the usual ZFC, such as the Continuum Hypothesis, then the integrals $\{{}^{(P)\hspace{-5pt}}\int Z\,dB\}_{P\in\cP}$ can be aggregated into a single (universally measurable) continuous process, denoted by $\int Z\,dB$,  for any $Z$ which is $B$-integrable under all $P\in\cP$. This follows from a recent result on pathwise stochastic integration, cf.~\cite{Nutz.11int}.
  In Proposition~\ref{pr:martingaleDecomp}, we can then aggregate the family $(K^P)_{P\in\cP}$ of increasing processes into a single process $K$ by setting
  $K:=\cE_0(X) - \cE(X) + \int Z^X\,dB$. Moreover, we can strengthen Theorem~\ref{th:2bsde} by asking for a universal process $K$ the Definition~\ref{def:2BSDE} of the 2BSDE.
\end{remark}

\subsection{Application to Superhedging and Replication}

We now turn to the interpretation of the previous results for the superhedging problem.
Let $H$ be an $\R^d$-valued $\F$-predictable process satisfying $\int_0^T |H_s|^2\,d\br{B}_s<\infty$ $\cP$-q.s. Then $H$ is called an \emph{admissible} trading strategy if ${}^{(P)\hspace{-5pt}}\int H\,dB$ is a $P$-supermartingale for all $P\in\cP$. (We do not insist that the integral be defined without reference to $P$, since this is not necessary economically.
But see also Remark~\ref{rk:pathwInt}.) As usual in continuous-time finance, this definition excludes ``doubling strategies''. We have seen in the proof of Theorem~\ref{th:2bsde} that $Z^X$ is admissible for $X\in\cH$. The minimality property in Proposition~\ref{pr:extension}(i) and  the existence of the decomposition~\eqref{eq:martDecompInProp} yield the following conclusion: $\cE_0(X)$ is the minimal $\cF_0$-measurable initial capital which allows to superhedge $X$; i.e., $\cE_0(X)$ is the $\cP$-q.s.\ minimal $\cF_0$-measurable random variable $\xi_0$ such that there exists an admissible strategy $H$ satisfying
\[
  \xi_0+ \sideset{^{(P)\hspace{-7pt}}}{}{\int_0^T} H_s\,dB_s \geq X\quad P\as\quad\mbox{for all }P\in\cP.
\]
Moreover, the ``overshoot'' $K^P$ for the strategy $Z^X$ satisfies the minimality condition~\eqref{eq:minimal}.

As seen in Example~\ref{ex:counterexModification}, the $\cF_0$-superhedging price $\cE_0(X)$ need not be a constant, and therefore it is debatable whether it is a good choice for a conservative price, in particular if the raw filtration $\Fci$ is seen as the initial information structure for the model. Indeed, the following illustration shows that knowledge of $\cF_0$ can be quite significant. Consider a collection $(a_i)$ of positive constants and
$\cP=\{P^\alpha:\alpha\equiv a_i \mbox{ for some }i\}$. (Such a set $\cP$ can indeed satisfy the assumptions of this section.)
In this model, knowledge of $\cF_0$ completely removes the volatility uncertainty since $\cF_0$ contains the sets
\[
  A_i:=\Big\{\limsup_{t\to0} t^{-1}\br{B}_t = \liminf_{t\to 0} t^{-1}\br{B}_t =a_i \Big\}\in\cFci_{0+}
\]
which form a $\cP$-q.s.\ partition of $\Omega$.
Hence, one may want to use the more conservative choice
\[
  x=\cE^\circ_0(X)=\sup_{P\in\cP} E^P[X]=\inf\{y\in\R:\, y\geq \cE_0(X)\}
\]
as the price. This value can be embedded into the $\cE$-martingale as follows.
Let $\cF_{0-}$ be the smallest $\sigma$-field containing the $\cP$-polar sets, then $\cF_{0-}$ is trivial $\cP$-q.s.
If we adjoin $\cF_{0-}$ as a new initial state to the filtration $\F$, we can extend $\cE(X)$ by setting
\[
  \cE_{0-}(X):=\sup_{P\in\cP} E^P[X],\quad X\in\cH.
\]
The resulting process $\{\cE_t(X)\}_{t\in [-0,T]}$ satisfies the properties from Proposition~\ref{pr:extension} in the extended filtration and in particular the constant $x=\cE_{0-}(X)$ is the $\cF_{0-}$-superhedging price of $X$.
(Of course, all this becomes superfluous in the case where $\cE(X)$ is a $\cP$-modification of
$\{\cE^\circ_t(X)\}$.)

In the remainder of the section, we discuss replicable claims and adopt the previously mentioned conservative choice.%

\begin{definition}
  A random variable $X\in\cH$ is called \emph{replicable} if there exist a constant $x\in\R$ and an
  $\hF$-predictable process $H$ with $\int_0^T |H_s|^2\,d\br{B}_s<\infty$ $\cP$-q.s.\ such that
  \begin{equation}\label{eq:replicable}
    X=x+\sideset{^{(P)\hspace{-7pt}}}{}{\int_0^T} H_t\,dB_t\quad P\as\quad\mbox{for all }P\in\cP
  \end{equation}
  and such that ${}^{(P)\hspace{-5pt}}\int H\,dB$ is an $(\hF,P)$-martingale for all $P\in\cP$.
\end{definition}

The martingale assumption is needed to avoid strategies which ``throw away'' money. Moreover, as in Corollary~\ref{co:symmMartDecomp}, the stochastic integral can necessarily be defined without reference to $P$, by setting
\mbox{$\int H\,dB:=\cE(X)-x$}. The following result is an analogue of the standard characterization of replicable claims in incomplete markets (e.g.,~\cite[p.\,182]{DelbaenSchachermayer.06}).

\begin{proposition}
  Let $X\in\cH$ be such that $-X\in\cH$. The following are equivalent:
  \begin{enumerate}[topsep=3pt, partopsep=0pt, itemsep=1pt,parsep=2pt]
    \item $\cE(X)$ is a symmetric $\cE$-martingale and $\cE_0(X)$ is constant $\cP$-q.s.
    \item $X$ is replicable.
    \item There exists $x\in\R$ such that $E^P[X]=x$ for all $P\in\cP$.
  \end{enumerate}
\end{proposition}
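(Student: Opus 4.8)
The plan is to prove the three equivalences in a cycle, establishing $(i)\Rightarrow(ii)\Rightarrow(iii)\Rightarrow(i)$. Most of the heavy machinery is already available from Corollary~\ref{co:symmMartDecomp}, so the proof largely amounts to tracking the role of the constancy of $\cE_0(X)$ and connecting it to the condition $E^P[X]=x$ for all $P$.

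For $(i)\Rightarrow(ii)$, I would invoke Corollary~\ref{co:symmMartDecomp}: since $\cE(X)$ is a symmetric $\cE$-martingale, there exists an $\hF$-predictable process $H:=Z^X$ with $\int_0^T|H_s|^2\,d\br{B}_s<\infty$ $\cP$-q.s.\ such that $\cE_t(X)=\cE_0(X)+\int_0^t H_s\,dB_s$ $\cP$-q.s., with $\int H\,dB$ an $(\F,P)$-martingale for every $P$. Evaluating at $t=T$ and using $\cE_T(X)=X$ gives $X=\cE_0(X)+\int_0^T H_s\,dB_s$. The extra hypothesis in $(i)$ that $\cE_0(X)$ is $\cP$-q.s.\ constant, say equal to $x$, is exactly what upgrades this to the replication identity~\eqref{eq:replicable} with a genuine constant $x$, so $X$ is replicable.

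For $(ii)\Rightarrow(iii)$, I would start from the replication identity $X=x+\int_0^T H_t\,dB_t$ with $\int H\,dB$ an $(\hF,P)$-martingale for each $P\in\cP$. Taking $P$-expectations and using that the martingale starts at $0$ yields $E^P[X]=x$ for every $P\in\cP$; since $x$ does not depend on $P$, this is precisely~$(iii)$.

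The implication $(iii)\Rightarrow(i)$ is where the main work lies. Given $E^P[X]=x$ for all $P$, I would first note that $\sup_{P\in\cP}E^P[X]=x=-\sup_{P\in\cP}E^P[-X]$, i.e.\ $\cE^\circ_0(X)=-\cE^\circ_0(-X)=x$; combined with the representation~\eqref{eq:Yesssup} at $t=0$ (using that $\hcF_0$ carries only the $\cP$-polar sets, so that $\cP(\hcF_0,P)=\cP$ up to null sets), this shows $\cE_0(X)=x$ and $\cE_0(-X)=-x$ are constant $\cP$-q.s. To obtain symmetry of the whole process, the key observation is that equality of the $\cF_0$-values forces $\cE(X)+\cE(-X)$, which is a nonnegative $(\hF,\cP)$-supermartingale (being a sum of two supermartingales by Proposition~\ref{pr:extension}(i)) with $\cP$-q.s.\ vanishing initial value, to be identically zero: a nonnegative supermartingale starting at $0$ stays at $0$ under each $P$, hence $\cP$-q.s. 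This gives $\cE(-X)=-\cE(X)$ $\cP$-q.s., i.e.\ $\cE(X)$ is a symmetric $\cE$-martingale, and together with the already-established constancy of $\cE_0(X)$ we recover~$(i)$. The main obstacle is handling the $\cF_0$-level carefully: one must use Lemma~\ref{le:MRPandVersions} to identify $\cP(\hcF_0,P)$ with $\cP$ and to conclude that the essential suprema in~\eqref{eq:Yesssup} at time $0$ reduce to the plain suprema $\sup_{P\in\cP}E^P[\,\cdot\,]$, after which the supermartingale-starting-at-zero argument closes the loop.
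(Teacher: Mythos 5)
Your cycle (i)$\Rightarrow$(ii)$\Rightarrow$(iii)$\Rightarrow$(i) is sound overall, and its closing step genuinely differs from the paper's. The paper obtains (i)$\Leftrightarrow$(ii) directly from Corollary~\ref{co:symmMartDecomp} and then closes the loop by proving (iii)$\Rightarrow$(ii) at the level of the integrands: from~\eqref{eq:ineqModification} it gets $\cE_0(X)\leq x$ and $\cE_0(-X)\leq -x$, writes the two superhedging inequalities coming from the decompositions~\eqref{eq:martDecompInProp} for $X$ and for $-X$, adds them to conclude that ${}^{(P)\hspace{-5pt}}\int_0^T(Z^X+Z^{-X})\,dB\geq 0$ $P$-a.s., and uses the supermartingale property of both integrals (established in the proof of Theorem~\ref{th:2bsde}) to force the identity $X=x+{}^{(P)\hspace{-5pt}}\int_0^T Z^X\,dB$; the constant expectation from~(iii) then upgrades the integral to a martingale. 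Your (iii)$\Rightarrow$(i) instead argues at the level of the value processes: once $\cE_0(X)=x$ and $\cE_0(-X)=-x$ hold $\cP$-q.s., the nonnegative c\`adl\`ag $(\hF,\cP)$-supermartingale $\cE(X)+\cE(-X)$ (nonnegativity follows by taking $P'=P$ in~\eqref{eq:Yesssup}) starts at $0$ and hence vanishes identically under each $P$, giving symmetry. This is a clean alternative that avoids revisiting the decomposition in the closing step; note, though, that your (i)$\Rightarrow$(ii) still routes through Corollary~\ref{co:symmMartDecomp}, so the overall reliance on Proposition~\ref{pr:martingaleDecomp} is comparable to the paper's.

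There is, however, one genuinely incorrect justification: the claims that ``$\hcF_0$ carries only the $\cP$-polar sets'' and that $\cP(\hcF_0,P)=\cP$ up to null sets are false, and the paper warns against precisely this. Indeed $\hcF_0=\cFci_{0+}\vee\cN^{\cP}$ contains the set $A$ of~\eqref{eq:SetForCouterexamples}, for which $P^1(A)=1$ but $P^2(A)=0$, so $P^2\notin\cP(\hcF_0,P^1)$; and if your identification were valid, $\cE_0(X)$ would \emph{always} be q.s.\ constant, contradicting Example~\ref{ex:counterexModification} and the superhedging discussion (which introduces $\cF_{0-}$ exactly because $\cF_0$ is not q.s.\ trivial). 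Fortunately the conclusion you need survives under hypothesis~(iii) by a corrected argument: Lemma~\ref{le:MRPandVersions} implies that $\hcF_0$ is $P'$-trivial for each single $P'\in\cP$, so $E^{P'}[X|\hcF_0]=E^{P'}[X]=x$ $P'$-a.s., and hence also $P$-a.s.\ whenever $P'\in\cP(\hcF_0,P)$; therefore the essential supremum in~\eqref{eq:Yesssup} at $t=0$ equals $x$ $P$-a.s.\ for every $P$ --- not because the supremum runs over all of $\cP$, but because under~(iii) every measure in the (possibly smaller) family $\cP(\hcF_0,P)$ assigns the same expectation $x$. Alternatively, combine~\eqref{eq:ineqModification} for the upper bound $\cE_0(X)\leq x$ with the $P$-triviality of $\hcF_0$ for the lower bound $\cE_0(X)\geq E^P[X|\hcF_0]=x$. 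With this repair your proof is complete.
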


\begin{proof}
  The equivalence (i)$\Leftrightarrow$(ii) is immediate from Corollary~\ref{co:symmMartDecomp} and
  the implication (ii)$\Rightarrow$(iii) follows by taking expectations in~\eqref{eq:replicable}. Hence we prove (iii)$\Rightarrow$(ii).
  By~\eqref{eq:ineqModification} we have $\cE_0(-X)\leq \sup_{P\in\cP} E^P[-X]=-x$ and similarly $\cE(X)\leq x$. Thus, given $P\in\cP$, the decompositions~\eqref{eq:martDecompInProp} of $\cE(-X)$ and $\cE(X)$ show that
  \begin{equation}\label{eq:replicationProofIneqs}
    -X \leq -x + \sideset{^{(P)\hspace{-7pt}}}{}{\int_0^T} Z^{-X}\,dB \quad\mbox{and}\quad X \leq x + \sideset{^{(P)\hspace{-7pt}}}{}{\int_0^T} Z^X\,dB \quad P\as
  \end{equation}
  Adding the inequalities yields $0\leq {}^{(P)\hspace{-5pt}}\int_0^T (Z^{-X} +Z^X) \,dB$ $P$-a.s.
  As we know from the proof of Theorem~\ref{th:2bsde} that the integrals of
  $Z^X$ and $Z^{-X}$ are supermartingales, it follows that
  ${}^{(P)\hspace{-5pt}}\int_0^T Z^{-X}\,dB = - {}^{(P)\hspace{-5pt}}\int_0^T Z^X\,dB$ $P$-a.s. Now~\eqref{eq:replicationProofIneqs} yields that $X=x+ {}^{(P)\hspace{-5pt}}\int_0^T Z^X\,dB$.
  In view of~(iii), this integral is a supermartingale with constant expectation, hence a martingale.
\end{proof}

\section{Uniqueness of Time-Consistent Extensions}\label{se:uniqueness}

In the introduction, we have claimed that $\{\cE^\circ_t(X)\}$ as in~\eqref{eq:aggreg} is the natural dynamic extension
of the static sublinear expectation $X\mapsto \sup_{P\in\cP}E^P[X]$. In this section, we add some substance to this claim by showing that the extension is unique under suitable assumptions. (We note that by Proposition~\ref{pr:pastingAndTimeconsistency}, the question of existence is essentially reduced to the technical problem of aggregation.)

The setup is as follows. We fix a nonempty set $\cP$ of probability measures on $(\Omega,\cFci_T)$; it is not important whether $\cP$ consists of martingale laws. On the other hand, we impose additional structure on the set of random variables. In this section, we consider a chain of vector spaces $(\cH_t)_{0\leq t\leq T}$ satisfying
\[
  \R=\cH_0\subseteq \cH_s\subseteq \cH_t\subseteq \cH_T=:\cH \subseteq L^1_\cP,\quad 0\leq s\leq t\leq T.
\]
We assume that $X,Y\in\cH_t$ implies $X\wedge Y, X\vee Y\in\cH_t$, and $XY\in \cH_t$ if in addition $Y$ is bounded.
As before, $\cH$ should be seen as the set of financial claims. The elements of $\cH_t$ will serve as ``test functions''; the main example to have in mind is $\cH_t=\cH\cap L^1_\cP(\cFci_t)$.
We consider a family $(\E_t)_{0\leq t\leq T}$ of mappings
\[
  \E_t: \cH \to L^1_\cP(\cFci_t)
\]
and think of $(\E_t)$ as a dynamic extension of $\E_0$. Our aim is to find conditions under which $\E_0$ already determines the whole family
$(\E_t)$, or more precisely, determines $\E_t(X)$ up to a $\cP$-polar set for all $X\in\cH$ and $0\leq t\leq T$.

\begin{definition}\label{def:poshom}
  The family $(\E_t)_{0\leq t\leq T}$ is called
  \emph{$(\cH_t)$-positively homogeneous} if for all $t\in[0,T]$ and $X\in \cH$,
  \[
    \E_t(X\varphi)=\E_t(X)\varphi\quad\cP\qs\quad\mbox{for all bounded nonnegative }\varphi\in\cH_t.
  \]
\end{definition}
Note that this property excludes trivial extensions of $\E_0$. Indeed, given $\E_0$, we can always define the (time-consistent) extension
\[
  \E_t(X):=
  \begin{cases}
    \E_0(X), & 0\leq t<T, \\
    X, & t=T,
  \end{cases}
\]
but this family $(\E_t)$ is not $(\cH_t)$-positively homogeneous for nondegenerate choices of $(\cH_t)$.

To motivate the next definition, we first recall that in the classical setup under a reference measure $P_*$, strict monotonicity of $\E_0$ is the crucial condition for uniqueness of extensions; i.e., $X\geq Y$ $P_*$-a.s.\ and $P_*\{X>Y\}>0$ should imply that $\E_0(X)>\E_0(Y)$.
In our setup with singular measures, the corresponding condition is too strong. E.g., for
$\E_0(\cdot)=\sup_{P\in\cP}E^P[\,\cdot\,]$, it is completely reasonable to have random variables $X\geq Y$ satisfying $\E_0(X)=\E_0(Y)$ and $P_1\{X>Y\}>0$ for some $P_1\in\cP$, since the suprema can be attained at some $P_2\in\cP$ whose support is disjoint from $\{X>Y\}$. In the following definition, we allow for an additional localization by a test function.

\begin{definition}\label{def:localStrictMon}
  We say that $\E_0$ is \emph{$(\cH_t)$-locally strictly monotone} if for every
  $t\in[0,T]$ and any $X,Y\in\cH_t$ satisfying $X\geq Y$ $\cP$-q.s.\ and $P(X>Y)>0$ for some $P\in\cP$, there exists $f\in\cH_t$ such that $0\leq f\leq 1$ and
  \[
    \E_0(Xf)>\E_0(Yf).
  \]
\end{definition}

Here the delicate point is the regularity required for $f$. Indeed, one is tempted to try
$f:=\1_{\{X>Y+\delta\}}$ (for some constant $\delta>0$), but in applications the definition of $\cH_t$ may exclude this choice and require a more refined construction. We defer this task to Proposition~\ref{pr:ExamplesLocalStrictMon} and first show how local strict monotonicity yields uniqueness.

\begin{proposition}\label{pr:uniqueness}
  Let $\E_0$ be $(\cH_t)$-locally strictly monotone. Then there exists at most one extension of $\E_0$ to a
  family $(\E_t)_{0\leq t\leq T}$  which is $(\cH_t)$-positively homogeneous  and satisfies $\E_t(\cH)\subseteq \cH_t$ and $\E_0\circ \E_t=\E_0$ on $\cH$.
\end{proposition}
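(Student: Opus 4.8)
The plan is to show that if $(\E_t)$ and $(\E'_t)$ are two extensions with the stated properties, then $\E_t(X)=\E'_t(X)$ $\cP$-q.s.\ for every $X\in\cH$ and every $t$. Fix $t$ and $X$, write $U:=\E_t(X)$ and $V:=\E'_t(X)$; both lie in $\cH_t$ by assumption. The key device is that $(\cH_t)$-positive homogeneity lets me localize the identity $\E_0\circ\E_t=\E_0$ by test functions, and then local strict monotonicity of $\E_0$ forces $U=V$ pointwise (quasi-surely). By symmetry it suffices to rule out $P(U>V)>0$ for some $P\in\cP$.

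\textbf{Localization step.}
First I would exploit positive homogeneity to compare $U$ and $V$ through $\E_0$. For any bounded nonnegative $\varphi\in\cH_t$, positive homogeneity gives $\E_t(X\varphi)=\E_t(X)\varphi=U\varphi$, and similarly $\E'_t(X\varphi)=V\varphi$. Applying the assumption $\E_0\circ\E_t=\E_0$ (and the same for $\E'$) to the claim $X\varphi\in\cH$ yields
\[
  \E_0(X\varphi)=\E_0\big(\E_t(X\varphi)\big)=\E_0(U\varphi)
  \quad\text{and}\quad
  \E_0(X\varphi)=\E_0(V\varphi),
\]
hence $\E_0(U\varphi)=\E_0(V\varphi)$ for all bounded nonnegative $\varphi\in\cH_t$. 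This is the crucial consequence: although $\E_0$ alone cannot separate $U$ from $V$, its values on all localizations $U\varphi$ and $V\varphi$ agree.

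\textbf{Applying local strict monotonicity.}
Now suppose for contradiction that $P(U>V)>0$ for some $P\in\cP$. Since $U,V\in\cH_t$, the lattice assumption ($X\wedge Y,X\vee Y\in\cH_t$) guarantees $U\vee V\in\cH_t$, and I apply local strict monotonicity to the pair $\tilde X:=U\vee V\geq V=:\tilde Y$, which lie in $\cH_t$ and satisfy $\tilde X\geq\tilde Y$ $\cP$-q.s.\ with $P(\tilde X>\tilde Y)=P(U>V)>0$. This produces $f\in\cH_t$, $0\leq f\leq1$, with $\E_0\big((U\vee V)f\big)>\E_0(Vf)$. On the set $\{U\leq V\}$ we have $(U\vee V)f=Vf$, while on $\{U>V\}$ we have $(U\vee V)f=Uf$; one checks using $\E_0(Uf)=\E_0(Vf)$ from the localization step together with monotonicity of $\E_0$ that $\E_0\big((U\vee V)f\big)\leq\E_0(Vf)$ must actually hold, contradicting the strict inequality. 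The cleaner route is to feed $\tilde X=U\vee V$ and $\tilde Y=V$ directly and note that $\E_0(\tilde Xf)=\E_0(\tilde Yf)$ cannot hold, yet it does because $(U\vee V)f\geq Uf$ and $(U\vee V)f\geq Vf$ pinch $\E_0$ between two equal values. Thus $P(U>V)=0$ for all $P\in\cP$, i.e.\ $U\leq V$ $\cP$-q.s.; by symmetry $V\leq U$ $\cP$-q.s., so $\E_t(X)=\E'_t(X)$ $\cP$-q.s.

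\textbf{Main obstacle.}
The delicate point is the comparison on $\{U>V\}$: I must turn the single equality $\E_0(Uf)=\E_0(Vf)$ into a contradiction with the strict monotonicity output, which requires carefully matching the localizing $f$ from Definition~\ref{def:localStrictMon} against the localizing $\varphi$ from positive homogeneity, and checking that the product $Xf$ (and $(U\vee V)f$, $Vf$) genuinely lies in the domains where $\E_0\circ\E_t=\E_0$ and positive homogeneity apply. The lattice and multiplicative closure hypotheses on $(\cH_t)$ are exactly what keep all intermediate random variables in the admissible classes, so the bookkeeping of membership in $\cH_t$ and $\cH$ is where the real care is needed rather than in any analytic estimate.
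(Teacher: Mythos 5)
There is a genuine gap, and it sits exactly where you flagged the ``delicate point''. Your localization step is correct and is in fact the heart of the paper's own argument: positive homogeneity together with $\E_0\circ\E_t=\E_0$ yields $\E_0(U\psi)=\E_0(V\psi)$ for every bounded nonnegative $\psi\in\cH_t$, where $U:=\E_t(X)$ and $V:=\tbE_t(X)$. But the way you then invoke local strict monotonicity fails. Feeding the pair $(U\vee V,\,V)$ produces $f$ with $\E_0\big((U\vee V)f\big)>\E_0(Vf)$, and this contradicts nothing you have established: the localization identity controls $\E_0(Uf)$ and $\E_0(Vf)$, not $\E_0\big((U\vee V)f\big)$. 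Your claimed ``pinching'' is illusory, since the two inequalities $(U\vee V)f\geq Uf$ and $(U\vee V)f\geq Vf$ point in the \emph{same} direction; together with monotonicity they give $\E_0\big((U\vee V)f\big)\geq\E_0(Uf)=\E_0(Vf)$, i.e., precisely the inequality that local strict monotonicity already delivered, with no upper bound anywhere in sight. For a sublinear functional the value at a maximum can genuinely exceed both individual values (take $\E_0=\max(E^{P_1},E^{P_2})$ with $U$, $V$ concentrated on disjoint $P_1$-, $P_2$-full sets), so the step ``$\E_0((U\vee V)f)\leq\E_0(Vf)$ must actually hold'' is simply false as stated. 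A minor additional point: monotonicity of $\E_0$ is not among the standing hypotheses of the proposition, though this is secondary to the direction error.

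The repair --- and this is what the paper does --- is to localize the \emph{claim} before invoking local strict monotonicity, so that the pair fed into Definition~\ref{def:localStrictMon} consists of the two conditional evaluations of one and the same claim. Set $\varphi:=\big([U-V]\vee0\big)\wedge1\in\cH_t$ (lattice plus constants) and $X':=X\varphi\in\cH$. Positive homogeneity gives $\E_t(X')=U\varphi$ and $\tbE_t(X')=V\varphi$; since $\varphi>0$ exactly on $\{U>V\}$, one has $U\varphi\geq V\varphi$ $\cP$-q.s.\ with $P\{U\varphi>V\varphi\}>0$. Local strict monotonicity applied to this \emph{ordered} pair yields $f\in\cH_t$, $0\leq f\leq1$, with $\E_0(U\varphi f)>\E_0(V\varphi f)$. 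But your own localization identity with the test function $\psi:=\varphi f$ (bounded, nonnegative, and in $\cH_t$ by the multiplicative closure assumption) gives $\E_0(U\varphi f)=\E_0(X'f)=\E_0(V\varphi f)$ --- the desired contradiction, using only $\E_0=\tbE_0$ and nothing about monotonicity or subadditivity of $\E_0$. So your first step already contains all the needed structure; the error lies solely in choosing the pair $(U\vee V,V)$, which cannot be traced back to a single claim and hence escapes the localization identity.
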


\begin{proof}
  Let $(\E_t)$ and $(\widetilde{\E}_t)$ be two such extensions and suppose for contradiction that $\E_t(X)\neq\tbE_t(X)$
  for some $X\in\cH$; i.e., there exists $P\in\cP$ such that either $P\{\E_t(X)>\tbE_t(X)\}>0$ or $P\{\E_t(X)<\tbE_t(X)\}>0$. Without loss of generality, we focus on the first case. Define
  \[
    \varphi:=\big(\big[\E_t(X)-\tbE_t(X)\big]\vee 0\big) \wedge 1.
  \]
  Then $\varphi\in\cH_t$, since $\cH_t$ is a lattice containing the constant functions; moreover, $0\leq \varphi\leq 1$ and
  $\{\varphi=0\}=\{\E_t(X)\leq \tbE_t(X)\}$. Setting $X':=X\varphi$ and using the positive homogeneity, we arrive at
  \[
    \E_t(X')\geq \widetilde{\E}_t(X')\quad\mbox{and}\quad P\big\{\E_t(X')> \widetilde{\E}_t(X')\big\}>0.
  \]
  By local strict monotonicity there exists $f\in\cH_t$ such that $0\leq f\leq1$ and
  $\E_0\big(\E_t(X')f\big)> \E_0\big(\widetilde{\E}_t(X')f\big)$. Now $\E_0=\E_0\circ \E_t$ yields that
  \[
    \E_0(X'f)
    =\E_0\big(\E_t(X')f\big)> \E_0\big(\widetilde{\E}_t(X')f\big)
    =\widetilde{\E}_0(X'f),
  \]
  which contradicts $\E_0=\widetilde{\E}_0$.
\end{proof}

We can extend the previous result by applying it on dense subspaces. This relaxes the assumption that
$\E_t(\cH)\subseteq \cH_t$ and simplifies the verification of local strict monotonicity since one can choose convenient spaces of test functions. Consider a chain of spaces $(\hat{\cH}_t)_{0\leq t\leq T}$ satisfying the same assumptions as $(\cH_t)_{0\leq t\leq T}$ and such that $\hat{\cH}_T$ is a $\|\cdot\|_{L^1_\cP}$-dense subspace of $\cH$. We say that
$(\E_t)_{0\leq t\leq T}$ is \emph{$L^1_\cP$-continuous} if
\[
  \E_t: \big(\cH,\|\cdot\|_{L^1_\cP}\big)\to \big(L^1_\cP(\cFci_t),\|\cdot\|_{L^1_\cP}\big)
\]
is continuous for every $t$. We remark that the motivating example $(\cE^\circ_t)$ from Assumption~\ref{as:pastingAndAggreg} satisfies this property (it is even Lipschitz continuous).

\begin{corollary}\label{co:denseSubspace}
  Let $\E_0$ be $(\hat{\cH}_t)$-locally strictly monotone. Then there exists at most one extension of $\E_0$ to an
  $L^1_\cP$-continuous family $(\E_t)_{0\leq t\leq T}$ on $\cH$ which is $(\hat{\cH}_t)$-positively homogeneous and satisfies $\E_t(\hat{\cH}_T)\subseteq \hat{\cH}_t$ and \mbox{$\E_0\circ \E_t=\E_0$} on $\hat{\cH}_T$.
\end{corollary}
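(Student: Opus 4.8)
The plan is to reduce the statement to Proposition~\ref{pr:uniqueness} by restricting the two candidate extensions to the dense subspace $\hat{\cH}_T$, and then to propagate the resulting agreement to all of $\cH$ using $L^1_\cP$-continuity. Let $(\E_t)$ and $(\tbE_t)$ be two extensions of $\E_0$ with the properties listed in the corollary.

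First I would check that the restricted families $\E_t|_{\hat{\cH}_T}$ and $\tbE_t|_{\hat{\cH}_T}$ are admissible inputs for Proposition~\ref{pr:uniqueness}, now with the chain $(\hat{\cH}_t)$ in the role of $(\cH_t)$ and with $\hat{\cH}_T$ as the ambient space of claims. Indeed, $\E_0$ is $(\hat{\cH}_t)$-locally strictly monotone by assumption; the range condition $\E_t(\hat{\cH}_T)\subseteq\hat{\cH}_t$ and the semigroup identity $\E_0\circ\E_t=\E_0$ on $\hat{\cH}_T$ are hypotheses; and positive homogeneity transfers because for $X\in\hat{\cH}_T$ and bounded nonnegative $\varphi\in\hat{\cH}_t$ the product $X\varphi$ again lies in $\hat{\cH}_T$ (closure of $\hat{\cH}_T$ under multiplication by bounded elements), so $\E_t(X\varphi)=\E_t(X)\varphi$ by positive homogeneity of the full family. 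The same holds for $(\tbE_t)$. Proposition~\ref{pr:uniqueness} then yields
\[
  \E_t(X)=\tbE_t(X)\quad\cP\qs\quad\mbox{for all }X\in\hat{\cH}_T\mbox{ and }0\leq t\leq T.
\]

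Next I would pass to the limit. Given arbitrary $X\in\cH$, the $\|\cdot\|_{L^1_\cP}$-density of $\hat{\cH}_T$ provides $X_n\in\hat{\cH}_T$ with $\|X_n-X\|_{L^1_\cP}\to0$. By $L^1_\cP$-continuity, $\E_t(X_n)\to\E_t(X)$ and $\tbE_t(X_n)\to\tbE_t(X)$ in $L^1_\cP(\cFci_t)$. Since $\E_t(X_n)=\tbE_t(X_n)$ $\cP$-q.s.\ for every $n$ by the previous step, uniqueness of $L^1_\cP$-limits forces $\E_t(X)=\tbE_t(X)$ $\cP$-q.s. As $X$ and $t$ were arbitrary, the two extensions coincide.

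I do not expect a genuine obstacle, since the substantive content is already carried by Proposition~\ref{pr:uniqueness} and the remaining work is continuity-and-density bookkeeping. The one point that warrants care is verifying that the restricted family really qualifies for Proposition~\ref{pr:uniqueness}: concretely, that the lattice and multiplicative closure properties of $\hat{\cH}_T$ keep every auxiliary function used in that proof—the cutoff $\varphi$, the localized claim $X'=X\varphi$, and the test function $f$ supplied by local strict monotonicity, together with products such as $\E_t(X')f$—inside the chain $(\hat{\cH}_t)$, so that the argument never leaves $\hat{\cH}_T$ and the hypothesis $\E_0\circ\E_t=\E_0$ is invoked only on $\hat{\cH}_T$. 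Once this is confirmed, the density argument closes the proof.
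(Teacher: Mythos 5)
Your proposal is correct and takes essentially the same route as the paper, which likewise applies Proposition~\ref{pr:uniqueness} with the chain $(\hat{\cH}_t)$ in place of $(\cH_t)$ on the dense subspace $\hat{\cH}_T$ and then extends the resulting identity to all of $\cH$ by $L^1_\cP$-continuity. Your additional verification that the auxiliary objects in the proof of Proposition~\ref{pr:uniqueness} (the cutoff $\varphi$, the localized claim $X'=X\varphi$, and the test function $f$) remain in the chain $(\hat{\cH}_t)$ is precisely the point the paper's two-sentence proof leaves implicit, and it goes through because $(\hat{\cH}_t)$ is assumed to satisfy the same lattice and multiplicative closure properties as $(\cH_t)$.
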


\begin{proof}
  Proposition~\ref{pr:uniqueness} shows that $\E_t(X)$ is uniquely determined for $X\in\hat{\cH}_T$. Since $\hat{\cH}_T\subseteq \cH$ is dense and $\E_t$ is continuous, $\E_t$ is also determined on $\cH$.
\end{proof}

In our last result, we show that  $\E_0(\cdot)=\sup_{P\in\cP}E^P[\,\cdot\,]$ is $(\cH_t)$-locally strictly monotone in certain cases.
The idea here is that we already have an extension $(\E_t)$ (as in Assumption~\ref{as:pastingAndAggreg}), whose uniqueness we try to establish.
We denote by $C_b(\Omega)$ the set of bounded continuous functions on $\Omega$ and by $C_b(\Omega_t)$ the $\cFci_t$-measurable functions in $C_b(\Omega)$, or equivalently the bounded functions which are continuous with respect to $\|\omega\|_t:=\sup_{0\leq s\leq t} |\omega_s|$. Similarly, $\UC_b(\Omega)$ and $\UC_b(\Omega_t)$ denote the sets of bounded uniformly continuous functions. We also define $\mathbb{L}^1_{c,\cP}$ to be the closure of $C_b(\Omega)$ in $L^1_\cP$, while
$\mathbb{L}^\infty_{c,\cP}$ denotes the $\cP$-q.s.\ bounded elements of $\mathbb{L}^1_{c,\cP}$. Finally, $\mathbb{L}^\infty_{c,\cP}(\cFci_t)$ is obtained similarly from $C_b(\Omega_t)$, while $\mathbb{L}^\infty_{uc,\cP}(\cFci_t)$ is the space obtained when starting from $\UC_b(\Omega_t)$ instead of $C_b(\Omega_t)$.

\begin{proposition}\label{pr:ExamplesLocalStrictMon}
  Let $\E_0(\cdot)=\sup_{P\in\cP}E^P[\,\cdot\,]$. Then $\E_0$ is $(\cH_t)$-locally strictly monotone for each of the cases
  \begin{enumerate}[topsep=3pt, partopsep=0pt, itemsep=1pt,parsep=2pt]
    \item $\cH_t= C_b(\Omega_t)$,
    \item $\cH_t= \UC_b(\Omega_t)$,
    \item $\cH_t=\mathbb{L}^\infty_{c,\cP}(\cFci_t)$,
    \item $\cH_t=\mathbb{L}^\infty_{uc,\cP}(\cFci_t)$.
  \end{enumerate}
\end{proposition}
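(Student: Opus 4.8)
The plan is to produce, for a given pair $X,Y\in\cH_t$ with $X\ge Y$ $\cP$-q.s.\ and $P(X>Y)>0$ for some $P\in\cP$, a single test function $f\in\cH_t$ with $0\le f\le1$ that is supported on a region where $X$ beats $Y$ by a fixed margin. Writing $W:=X-Y\ge0$ $\cP$-q.s., the hypothesis furnishes $P\in\cP$ and $\delta>0$ with $P(W>2\delta)>0$. I would insist that $\{f>0\}\subseteq\{W>\delta\}$, so that $Wf\ge\delta f$ holds $\cP$-q.s.; then for every $Q\in\cP$ one has $E^Q[Xf]=E^Q[Yf]+E^Q[Wf]\ge E^Q[Yf]+\delta E^Q[f]$. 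Passing to suprema immediately yields $\E_0(Xf)\ge\E_0(Yf)$, and the entire task is to upgrade this to a strict inequality.

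The construction of $f$ is uniform across the four cases: since each $\cH_t$ is a vector lattice containing the constants, the function $f:=\big(\delta^{-1}(W-\delta)\big)^+\wedge1$ belongs to $\cH_t$, satisfies $0\le f\le1$ and $\{f>0\}=\{W>\delta\}$, and equals $1$ on $\{W\ge2\delta\}$. What differs between the cases is only the verification of strictness below. In cases (i) and (ii) the function $W=X-Y$ is (uniformly) continuous, so $\{W>\delta\}$ is open and nonempty (it carries positive $P$-mass since it contains $\{W>2\delta\}$), and I can work with it directly. In cases (iii) and (iv) the claims $X,Y$ are defined only $\cP$-q.s.\ and $\{W>\delta\}$ need not be open; there I would first approximate $X,Y$ in $\|\cdot\|_{L^1_\cP}$ by the (uniformly) continuous functions generating $\mathbb{L}^\infty_{c,\cP}(\cFci_t)$, respectively $\mathbb{L}^\infty_{uc,\cP}(\cFci_t)$, run the argument for the approximants, and transfer the conclusion via the $\|\cdot\|_{L^1_\cP}$-continuity of $\E_0$ and of the map $Z\mapsto Zf$.

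The crux, and the step I expect to be the main obstacle, is the strict inequality $\E_0(Xf)>\E_0(Yf)$. The difficulty is structural: since $X=Y$ off $\{W>0\}$, any scenario that avoids the support of $f$ contributes equally to both sides, so the argument can only succeed if the worst-case measure for $Yf$ is \emph{forced} to charge $\{f>0\}$. I would derive this from the non-degeneracy of the volatility-uncertainty scenarios, namely that the laws in $\cP$ charge every nonempty open subset of $\Omega$ together with weak compactness of $\cP$: then $\sup_{Q}E^Q[Yf]$ is attained at some $Q^\ast\in\cP$, and since $\{f>0\}$ is a nonempty open set we get $E^{Q^\ast}[f]>0$, whence $\E_0(Xf)\ge E^{Q^\ast}[Xf]\ge E^{Q^\ast}[Yf]+\delta E^{Q^\ast}[f]=\E_0(Yf)+\delta E^{Q^\ast}[f]>\E_0(Yf)$. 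The delicate point, on which the whole proof turns, is that the maximizing scenario cannot concentrate away from the open set $\{W>\delta\}$; controlling this uniformly — through the support properties of the nondegenerate martingale laws underlying $\cP$ — is the real content, and it is precisely this phenomenon that forces the localization to be carried out by a genuine test function rather than by the indicator $\1_{\{W>\delta\}}$, which the spaces $\cH_t$ would in any case exclude.
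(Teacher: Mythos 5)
Your construction of $f$ and the pointwise inequality $Xf\geq Yf+\delta f$ match the paper's setup, but the step you yourself identify as the crux --- the strict inequality --- is where the proposal breaks down, because it invokes two hypotheses that the proposition does not grant. Weak compactness of $\cP$ (needed for your maximizer $Q^\ast$) is unavailable: the paper's application (b) after the proposition is precisely the random $G$-expectation with no finite upper volatility bound, where $\cP$ need not be tight. And full support of every $P\in\cP$ is not assumed either; in Section~\ref{se:uniqueness} the set $\cP$ is an \emph{arbitrary} nonempty set of probability measures (it may contain, say, a Dirac mass on a single path), and the discussion preceding Definition~\ref{def:localStrictMon} explicitly contemplates measures whose support is disjoint from $\{X>Y\}$ --- that is exactly why local strict monotonicity was formulated with a localizing $f$ in the first place. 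The irony is that the fix is already latent in your own inequality: after normalizing $0\leq Y\leq X\leq 1$, one has $Yf\leq f$, so any near-maximizer $P_\eps$ with $E^{P_\eps}[Yf]\geq \E_0(Yf)-\eps$ automatically satisfies $E^{P_\eps}[f]\geq \E_0(Yf)-\eps$, whence $\E_0(Xf)\geq E^{P_\eps}[Yf]+\delta E^{P_\eps}[f]\geq (1+\delta)(\E_0(Yf)-\eps)$ and thus $\E_0(Xf)\geq(1+\delta)\E_0(Yf)$. This is the paper's multiplicative trick ($Xf\geq(Y+\delta)f\geq(1+\delta)Yf$ on $\{f>0\}$ since $Y\leq1$); it needs neither attainment nor support properties, and the degenerate case $\E_0(Yf)=0$ is handled directly by $\E_0(Xf)\geq E^{P_*}[Xf]\geq 2\delta P_*\{X\geq Y+2\delta\}>0$ using the measure $P_*$ given in the hypothesis.

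Your treatment of cases (iii) and (iv) has a second gap: you propose to run the argument for $\|\cdot\|_{L^1_\cP}$-approximants of $X,Y$ and ``transfer the conclusion via continuity,'' but a strict inequality does not pass through limits --- the gap $\E_0(X_nf_n)-\E_0(Y_nf_n)$ is not bounded below uniformly in $n$, the test functions depend on the approximants, and the approximants need not even satisfy $X_n\geq Y_n$ quasi-surely; moreover the conclusion must hold for the original $X,Y$ with a single $f\in\cH_t$. The paper instead exploits quasi-continuity (\cite[Theorem~25]{DenisHuPeng.2010}, resp.\ \cite[Proposition~5.2]{Nutz.10Gexp}): it selects one closed set $\Lambda$ on which $X,Y$ are continuous, chosen quantitatively so that $(1+\delta)\,\E_0(\1_{\Lambda^c})<\delta^2\,\E_0(\1_{\{X\geq Y+2\delta\}\cap\Lambda})$, builds the Urysohn function from the sets $\{X\geq Y+2\delta\}\cap\Lambda$ and $\{X\leq Y+\delta\}\cap\Lambda$, and closes the argument with a two-case estimate using the subadditivity of $\E_0$ to absorb the error committed on $\Lambda^c$. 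Without such a uniform-in-$P$ quantitative control of the exceptional set, the passage from the continuous to the quasi-continuous cases does not go through.
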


Together with Corollary~\ref{co:denseSubspace}, this yields a uniqueness result for extensions.
Before giving the proof, we indicate some examples covered by this result; see also Example~\ref{ex:Gexp}.
The domain of $(\E_t)$ is $\cH=\mathbb{L}^1_{uc,\cP}$ in both cases. (This statement implicitly uses the fact that $\mathbb{L}^1_{uc,\cP}=\mathbb{L}^1_{c,\cP}$ when $\cP$ is tight; cf.\ the proof of~\cite[Proposition~5.2]{Nutz.10Gexp}.)

(a) Let $(\E_t)$ be the $G$-expectation as introduced in~\cite{Peng.07, Peng.08}. Then
Corollary~\ref{co:denseSubspace} applies: if $\hat{\cH}_t$ is any of the spaces in (i)--(iv), the invariance property $\E_t(\hat{\cH}_T)\subseteq \hat{\cH}_t$ is satisfied and $\hat{\cH}_T$ is dense in $\cH$.

(b) Using the construction given in~\cite{Nutz.10Gexp}, the $G$-expectation can be extended to the case when there is no finite upper bound for the volatility. This corresponds to a possibly infinite  function $G$ (and then $\cP$ need not be tight). Here Corollary~\ref{co:denseSubspace} applies with $\hat{\cH}_t=\UC_b(\Omega_t)$ since $\E_t(\hat{\cH}_T)\subseteq \hat{\cH}_t$ is satisfied by the remark stated after \cite[Corollary~3.6]{Nutz.10Gexp}, or also with $\cH_t=\mathbb{L}^\infty_{uc,\cP}(\cFci_t)$.

\begin{proof}[Proof of Proposition~\ref{pr:ExamplesLocalStrictMon}]
  Fix $t\in[0,T]$. All topological notions in this proof are expressed with respect to $d(\omega,\omega'):=\|\omega-\omega'\|_t$. Let $X,Y\in\cH_t$ be such that $X\geq Y$ $\cP$-q.s.\ and $P_*(X>Y)>0$ for some $P_*\in\cP$. By translating and multiplying with positive constants, we may assume that $1\geq X\geq Y\geq 0$. We prove the cases (i)--(iv) separately.

  (i)~Choose $\delta>0$ small enough so that $P_*\{X\geq Y+2\delta\}>0$ and let
  \[
    A_1:=\{X\geq Y + 2\delta\},\quad A_2:=\{X\leq Y+\delta\}.
  \]
  Then $A_1$ and $A_2$ are disjoint closed sets and
  \begin{equation}\label{eq:fDefProofStrictMon}
    f(\omega):=\frac{d(\omega,A_2)}{d(\omega,A_1)+d(\omega,A_2)}
  \end{equation}
  is a continuous function satisfying $0\leq f\leq 1$ as well as $f=0$ on $A_2$ and $f=1$ on $A_1$.
  It remains to check that
  \[
    \E_0(Xf)> \E_0(Yf), \quad\mbox{i.e.,}\quad \sup_{P\in\cP} E^P[Xf]>\sup_{P\in\cP} E^P[Yf].
  \]
  If $\E_0(Yf)=0$, the observation that
  $\E_0(Xf)\geq E^{P_*}[Xf]\geq 2\delta P_*(A_1)>0$
  already yields the proof.

  Hence, we may assume that $\E_0(Yf)>0$.
  For $\eps>0$, let $P_\eps\in\cP$ be such that $E^{P_\eps}[Yf]\geq \E_0(Yf)-\eps$. Since $X> Y+\delta$ on $\{f>0\}$ and since $0\leq Y\leq 1$, we have
  $
    Xf\geq (Y+\delta)f \geq (Y+\delta Y)f
  $
  and therefore
  \begin{align*}
    \E_0(Xf)
    &\geq \limsup_{\eps\to 0} E^{P_\eps}[(Y+\delta Y)f]\\
    & = \limsup_{\eps\to 0} \, (1+\delta) E^{P_\eps}[Yf]\\
    &= (1+\delta) \, \E_0(Yf).
  \end{align*}
  As $\delta>0$ and $\E_0(Yf)>0$, this ends the proof of~(i).

  (ii) The proof for this case is the same; we merely have to check that the function $f$ defined in~\eqref{eq:fDefProofStrictMon} is uniformly continuous. Indeed, $Z:=X-Y$ is uniformly continuous since $X$ and $Y$ are. Thus there exists $\eps>0$ such that $|Z(\omega)-Z(\omega')|<\delta$ whenever $d(\omega,\omega')\leq\eps$. We observe that $d(A_1,A_2)\geq\eps$ and hence
  that the denominator in~\eqref{eq:fDefProofStrictMon} is bounded away from zero. One then checks by direct calculation that $f$ is Lipschitz continuous.

  (iii) We recall that $\mathbb{L}^\infty_\cP(\cFci_t)$ coincides with the set of
  bounded \mbox{$\cP$-quasi} continuous functions (up to modification); cf.~\cite[Theorem~25]{DenisHuPeng.2010}. That is,
  a bounded $\cFci_t$-measurable function $h$ is in $\mathbb{L}^\infty_\cP(\cFci_t)$ if and only if for all $\eps>0$ there exists a closed set $\Lambda\subseteq \Omega$ such that $P(\Lambda)>1-\eps$ for all $P\in\cP$ and such that the restriction $h|_{\Lambda}$ is continuous.

  For $\delta>0$ small enough, we have $P_*(\{X\geq Y+2\delta\})>0$. Then, we can find a closed set $\Lambda\subseteq \Omega$  such that $X$ and $Y$ are continuous on $\Lambda$ and
  \begin{equation}\label{eq:deltaForMonotonicity}
    (1+\delta)\,\E_0(\1_{\Lambda^c})<\delta^2 \E_0(\1_{\{X\geq Y + 2\delta\}\cap\Lambda}).
  \end{equation}
  Define the disjoint closed sets
  \[
    A_1:=\{X\geq Y + 2\delta\}\cap\Lambda,\quad A_2:=\{X\leq Y+\delta\}\cap\Lambda,
  \]
  and let $f$ be the continuous function~\eqref{eq:fDefProofStrictMon}.
  We distinguish two cases. Suppose first that $\delta\E_0(Yf)\leq (1+\delta)\,\E_0(\1_{\Lambda^c})$; then, using~\eqref{eq:deltaForMonotonicity},
  \[
    \E_0(Xf)\geq 2\delta \E_0(\1_{A_1})> (1+\delta)\delta^{-1} \E_0(\1_{\Lambda^c})\geq \E_0(Yf)
  \]
  and we are done. Otherwise, we have $\delta\E_0(Yf)> (1+\delta)\,\E_0(\1_{\Lambda^c})$. Moreover, $\E_0(Xf\1_{\Lambda})\geq (1+\delta)\E_0(Yf\1_{\Lambda})$ can be shown  as in~(i); we simply replace $f$ by $f\1_{\Lambda}$ in that argument. Using the subadditivity of $\E_0$, we deduce that
  \begin{align*}
    \E_0(Xf)+(1+\delta) \, \E_0(Yf\1_{\Lambda^c})
    &\geq \E_0(Xf\1_{\Lambda})+(1+\delta) \, \E_0(Yf\1_{\Lambda^c})\\
    &\geq (1+\delta) \, \E_0(Yf\1_{\Lambda})+(1+\delta) \, \E_0(Yf\1_{\Lambda^c})\\
    &\geq (1+\delta) \, \E_0(Yf)
  \end{align*}
  and hence
  \[
   \E_0(Xf) -\E_0(Yf) \geq  \delta\E_0(Yf) - (1+\delta) \, \E_0(Yf\1_{\Lambda^c}) \geq \delta\E_0(Yf) - (1+\delta) \, \E_0(\1_{\Lambda^c}).
  \]
  The right hand side is strictly positive by assumption.

  (iv) The proof is similar to the one for (iii): we use \cite[Proposition~5.2]{Nutz.10Gexp} instead of \cite[Theorem~25]{DenisHuPeng.2010} to find $\Lambda$, and then the observation made in the proof of~(ii) shows that the resulting function $f$ is uniformly continuous.
\end{proof}

\newcommand{\dummy}[1]{}

\end{document}